\gdef\@fpheader{ }
\gdef\@journal{ }
\newif\ifnotoc\notocfalse
\newif\ifemailadd\emailaddfalse
\newif\iftoccontinuous\toccontinuousfalse
\def\@subheader{\@empty}
\def\@keywords{\@empty}
\def\@abstract{\@empty}
\def\@xtum{\@empty}
\def\@dedicated{\@empty}
\def\@arxivnumber{\@empty}
\def\@collaboration{\@empty}
\def\@collaborationImg{\@empty}
\def\@proceeding{\@empty}
\def\@preprint{\@empty}
\newcommand{\subheader}[1]{\gdef\@subheader{#1}}
\newcommand{\keywords}[1]{\if!\@keywords!\gdef\@keywords{#1}\else%
\PackageWarningNoLine{\jname}{Keywords already defined.\MessageBreak Ignoring last definition.}\fi}
\renewcommand{\abstract}[1]{\gdef\@abstract{#1}}
\newcommand{\dedicated}[1]{\gdef\@dedicated{#1}}
\newcommand{\arxivnumber}[1]{\gdef\@arxivnumber{#1}}
\newcommand{\proceeding}[1]{\gdef\@proceeding{#1}}
\newcommand{\xtumfont}[1]{\textsc{#1}}
\newcommand{\correctionref}[3]{\gdef\@xtum{\xtumfont{#1} \href{#2}{#3}}}
\newcommand\jname{JHEP}
\newcommand\acknowledgments{\section*{Acknowledgments}}
\newcommand\preprint[1]{\gdef\@preprint{\hfill #1}}
\newtheorem{theorem}{Theorem}
\newtheorem{corollary}[theorem]{Corollary}
\newenvironment{proof}[1][Proof]{\noindent\textbf{#1.} }{\ \rule{0.5em}{0.5em}}
\newcommand\note[2][]{%
\if!#1!%
\stepcounter{footnote}\footnotetext{#2}%
\else%
{\renewcommand\thefootnote{#1}%
\footnotetext{#2}}%
\fi}
\newtoks\auth@toks
\renewcommand{\author}[2][]{%
  \if!#1!%
    \auth@toks=\expandafter{\the\auth@toks#2\ }%
  \else
    \auth@toks=\expandafter{\the\auth@toks#2$^{#1}$\ }%
  \fi
}
\newtoks\affil@toks\newif\ifaffil\affilfalse
\newcommand{\affiliation}[2][]{%
\affiltrue
  \if!#1!%
    \affil@toks=\expandafter{\the\affil@toks{\item[]#2}}%
  \else
    \affil@toks=\expandafter{\the\affil@toks{\item[$^{#1}$]#2}}%
  \fi
}
\newtoks\email@toks\newcounter{email@counter}%
\newcommand{\emailAdd}[1]{%
\emailaddtrue%
\ifnum\theemail@counter>0\email@toks=\expandafter{\the\email@toks, \@email{#1}}%
\else\email@toks=\expandafter{\the\email@toks\@email{#1}}%
\fi\stepcounter{email@counter}}
\newcommand{\@email}[1]{\href{mailto:#1}{\tt #1}}
\newcommand*\collaboration[1]{\gdef\@collaboration{#1}}
\newcommand*\collaborationImg[2][]{\gdef\@collaborationImg{#2}}
\newcommand\afterLogoSpace{\smallskip}
\newcommand\afterSubheaderSpace{\vskip3pt plus 2pt minus 1pt}
\newcommand\afterProceedingsSpace{\vskip21pt plus0.4fil minus15pt}
\newcommand\afterTitleSpace{\vskip23pt plus0.06fil minus13pt}
\newcommand\afterRuleSpace{\vskip23pt plus0.06fil minus13pt}
\newcommand\afterCollaborationSpace{\vskip3pt plus 2pt minus 1pt}
\newcommand\afterCollaborationImgSpace{\vskip3pt plus 2pt minus 1pt}
\newcommand\afterAuthorSpace{\vskip5pt plus4pt minus4pt}
\newcommand\afterAffiliationSpace{\vskip3pt plus3pt}
\newcommand\afterEmailSpace{\vskip16pt plus9pt minus10pt\filbreak}
\newcommand\afterXtumSpace{\par\bigskip}
\newcommand\afterAbstractSpace{\vskip16pt plus9pt minus13pt}
\newcommand\afterKeywordsSpace{\vskip16pt plus9pt minus13pt}
\newcommand\afterArxivSpace{\vskip3pt plus0.01fil minus10pt}
\newcommand\afterDedicatedSpace{\vskip0pt plus0.01fil}
\newcommand\afterTocSpace{\bigskip\medskip}
\newcommand\afterTocRuleSpace{\bigskip\bigskip}
\newlength{\affiliationsSep}\setlength{\affiliationsSep}{-3pt}
\newcommand\beforetochook{\pagestyle{myplain}\pagenumbering{roman}}
\DeclareFixedFont\trfont{OT1}{phv}{b}{sc}{11}
\renewcommand\maketitle{
%% First page
\pagestyle{empty}
\thispagestyle{titlepage}
\setcounter{page}{0}
\noindent{\small\scshape\@fpheader}\@preprint\par

\afterLogoSpace
% Subheader
\if!\@subheader!\else\noindent{\trfont{\@subheader}}\fi
\afterSubheaderSpace
% Proceedings
\if!\@proceeding!\else\noindent{\sc\@proceeding}\fi
\afterProceedingsSpace
% Title
{\LARGE\flushleft\sffamily\bfseries\@title\par}
\afterTitleSpace
% Rule
\hrule height 1.5\p@%
\afterRuleSpace
% Collaboration
\if!\@collaboration!\else
{\Large\bfseries\sffamily\raggedright\@collaboration}\par
\afterCollaborationSpace
\fi
\if!\@collaborationImg!\else
{\normalsize\bfseries\sffamily\raggedright\@collaborationImg}\par
\afterCollaborationImgSpace
%% I leave the size and font so that if there are two collaboration
%% they can be linked with an 'and'
\fi
% Author
{\bfseries\raggedright\sffamily\the\auth@toks\par}
\afterAuthorSpace
% Affiliation
\ifaffil\begin{list}{}{%
\setlength{\leftmargin}{0.28cm}%
\setlength{\labelsep}{0pt}%
\setlength{\itemsep}{\affiliationsSep}%
\setlength{\topsep}{-\parskip}}
\itshape\small%
\the\affil@toks
\end{list}\fi
\afterAffiliationSpace
% E-mail
\ifemailadd %% if emailadd is true
\noindent\hspace{0.28cm}\begin{minipage}[l]{.9\textwidth}
\begin{flushleft}
\textit{E-mail:} \the\email@toks
\end{flushleft}
\end{minipage}
\else %% if emailaddfalse do nothing
\PackageWarningNoLine{\jname}{E-mails are missing.\MessageBreak Plese use \protect\emailAdd\space macro to provide e-mails.}
\fi
\afterEmailSpace
%Erratum or addendum
\if!\@xtum!\else\noindent{\@xtum}\afterXtumSpace\fi
% Abstract
\if!\@abstract!\else\noindent{\renewcommand\baselinestretch{.9}\textsc{Abstract:}}\ \@abstract\afterAbstractSpace\fi
% Keywords
\if!\@keywords!\else\noindent{\textsc{Keywords:}} \@keywords\afterKeywordsSpace\fi
% Arxivnumber
\if!\@arxivnumber!\else\noindent{\textsc{ArXiv ePrint:}} \href{http://arxiv.org/abs/\@arxivnumber}{\@arxivnumber}\afterArxivSpace\fi
% Dedication
\if!\@dedicated!\else\vbox{\small\it\raggedleft\@dedicated}\afterDedicatedSpace\fi
\ifnotoc\else
\iftoccontinuous\else\newpage\fi
\beforetochook\hrule
\tableofcontents
\afterTocSpace
\hrule
\afterTocRuleSpace
\fi
\setcounter{footnote}{0}
\pagestyle{myplain}\pagenumbering{arabic}
} % close the \renewcommand\maketitle{
\renewcommand{\baselinestretch}{1.1}\normalsize
\renewcommand{\@dotsep}{10000}
\newcommand\ps@myplain{
\pagenumbering{arabic}
\renewcommand\@oddfoot{\hfill-- \thepage\ --\hfill}
\renewcommand\@oddhead{}}
\let\ps@plain=\ps@myplain
\newcommand\ps@titlepage{\renewcommand\@oddfoot{}\renewcommand\@oddhead{}}
\numberwithin{equation}{section}
\renewcommand\section{\@startsection{section}{1}{\z@}%
                                   {-3.5ex \@plus -1.3ex \@minus -.7ex}%
                                   {2.3ex \@plus.4ex \@minus .4ex}%
                                   {\normalfont\large\bfseries}}
\renewcommand\subsection{\@startsection{subsection}{2}{\z@}%
                                   {-2.3ex\@plus -1ex \@minus -.5ex}%
                                   {1.2ex \@plus .3ex \@minus .3ex}%
                                   {\normalfont\normalsize\bfseries}}
\renewcommand\subsubsection{\@startsection{subsubsection}{3}{\z@}%
                                   {-2.3ex\@plus -1ex \@minus -.5ex}%
                                   {1ex \@plus .2ex \@minus .2ex}%
                                   {\normalfont\normalsize\bfseries}}
\renewcommand\paragraph{\@startsection{paragraph}{4}{\z@}%
                                   {1.75ex \@plus1ex \@minus.2ex}%
                                   {-1em}%
                                   {\normalfont\normalsize\bfseries}}
\renewcommand\subparagraph{\@startsection{subparagraph}{5}{\parindent}%
                                   {1.75ex \@plus1ex \@minus .2ex}%
                                   {-1em}%
                                   {\normalfont\normalsize\bfseries}}
\def\fnum@figure{\textbf{\figurename\nobreakspace\thefigure}}
\def\fnum@table{\textbf{\tablename\nobreakspace\thetable}}
\long\def\@makecaption#1#2{%
  \vskip\abovecaptionskip
  \sbox\@tempboxa{\small #1. #2}%
  \ifdim \wd\@tempboxa >\hsize
    \small #1. #2\par
  \else
    \global \@minipagefalse
    \hb@xt@\hsize{\hfil\box\@tempboxa\hfil}%
  \fi
  \vskip\belowcaptionskip}
\renewenvironment{thebibliography}[1]{%
\begin{oldthebibliography}{#1}%
\small%
\raggedright%
\setlength{\itemsep}{5pt plus 0.2ex minus 0.05ex}%
}%
{%
\end{oldthebibliography}%
}
\begin{document}

%%%%%%%%%%%%%%%%%%炎籾匈%%%%%%%%%%%%%%%%%%%%%%%%%%%%%&&&&&&&&&&&&&&&&&&&&&&

\title{\boldmath Unified framework for generalized quantum statistics: canonical partition function,
 maximum occupation number, and permutation phase of wave function}

% more complex case: 4 authors, 3 institutions, 2 footnotes

\author[a,b,1]{Chi-Chun Zhou}\note{zhouchichun@dali.edu.cn}

\author[b,2]{and Wu-Sheng Dai}\note{daiwusheng@tju.edu.cn.}

% The "\note" macro will give a warning: "Ignoring empty anchor..."
% you can safely ignore it.

\affiliation[a]{School of Engineering, Dali University, Dali, Yunnan 671003, PR China}
\affiliation[b]{Department of Physics, Tianjin University, Tianjin 300350, PR China}

%\affiliation[c]{DP School}

% e-mail addresses: one for each author, in the same order as the authors
%\emailAdd{Ccc@one.edu.cn}
%\emailAdd{second@asas.edu}
%\emailAdd{daiwusheng@tju.edu.cn}
%\emailAdd{fourth@one.univ}

%\title{\boldmath A title with some math: $x=1$}
%% %simple case: 2 authors, same institution
%% \author{A. Uthor}
%% \author{and A. Nother Author}
%% \affiliation{Institution,\\Address, Country}

% more complex case: 4 authors, 3 institutions, 2 footnotes
%\author[a,b,1]{F. Irst,\note{Corresponding author.}}
%\author[c]{S. Econd,}
%\author[a,2]{T. Hird\note{Also at Some University.}}
%\author[a,2]{and Fourth}

% The "\note" macro will give a warning: "Ignoring empty anchor..."
% you can safely ignore it.

%\affiliation[a]{One University,\\some-street, Country}
%\affiliation[b]{Another University,\\different-address, Country}
%\affiliation[c]{A School for Advanced Studies,\\some-location, Country}

% e-mail addresses: one for each author, in the same order as the authors

%\emailAdd{first@one.univ}
%\emailAdd{second@asas.edu}
%\emailAdd{third@one.univ}
%\emailAdd{fourth@one.univ}

%\date{date}

\abstract{Beyond Bose and Fermi statistics, there still exist various kinds of generalized quantum statistics. 
Two ways to approach generalized quantum statistics: (1) in quantum mechanics, generalize the permutation 
symmetry of the wave function and (2) in statistical mechanics, generalize the maximum occupation 
number of quantum statistics. The connection between these two approaches, however, is obscure. 
In this paper, we suggest a unified framework to describe various kinds of generalized quantum statistics. 
We first provide a general formula of canonical partition functions of ideal $N$-particle gases obeying various kinds 
of generalized quantum statistics. Then we reveal the connection between the permutation phase of the wave 
function and the maximum occupation number, through constructing a method to obtain the permutation 
phase and the maximum occupation number from the canonical partition function. We show that the 
permutation phase of the wave function is closely related to the higher dimensional representation 
of the permutation group. In our scheme, for generalized quantum statistics, the permutation phase of wave 
functions is generalized to a matrix phase, rather than a number. The permutation phase of Bose or 
Fermi wave function, $e^{i0}=1$ or $e^{i\pi}=-1$, is regarded as $1\times1$ matrices, as special cases of generalized 
statistics. It is commonly accepted that different kinds of statistics are distinguished by the maximum 
number. We show that the maximum occupation number is not sufficient to distinguish different kinds 
of generalized quantum statistics. As examples, we discuss a series of generalized quantum statistics in the unified 
framework, giving the corresponding canonical partition functions, maximum occupation numbers, and the 
permutation phase of wave functions. Especially, we propose three new kinds of generalized quantum statistics 
which seem to be the missing pieces in the puzzle. The mathematical basis of the scheme are the 
mathematical theory of the invariant matrix, the Schur-Weyl duality, the symmetric function, and 
the representation theory of the permutation group and the unitary group. The result in this paper 
builds a bridge between the statistical mechanics and such mathematical theories.
}
%\keywords{}

\maketitle
\flushbottom
%%%%%%%%%%%%%%%%%%炎籾匈潤崩%%%%%%%%%%%%%%%%%%%%%%%%%%%%%&&&&&&&&&&&&&&&&&&&

%%%%%%%%%%屎猟蝕兵

\section{Introduction}
The principle of indistinguishability requires that exchanging two identical 
particles does not lead to any observable effect
\cite{reichl2009modern,pathria2011statistical,leinaas1977theory}. In quantum mechanics, 
a physical system is described by a complex wave
function, but the observable is a real number. The principle of
indistinguishability allows a change on wave functions after exchanging two
identical particles so long as the observable does not change. Consequently,
the wave function may change a phase factor after exchanging identical
particles. It comes naturally Bose-Einstein statistics and Fermi-Dirac
statistics whose phase factors change $e^{i0}=1$ and $e^{i\pi}=-1$, 
respectively. Beside
Bose-Einstein statistics and Fermi-Dirac statistics, however, one can still
consider other kinds of quantum statistics so long as it does not violate the
principle of indistinguishability, i.e., there are no changes on observables
after exchanging identical particles. Generalizing quantum statistics along
this line is to consider phase factors between $0$ and $\pi$, e.g., anyons are
successful in explaining the fractional quantum Hall effect
\cite{khare2005fractional,haldane1991fractional}.

In statistical mechanics, macroscopic systems are treated averagely. The
number of microstates is the key in the calculation of the average value.
Particles are indistinguishable, so exchanging particles occupying different
states does not lead to new microstates. In quantum statistics allowed by the
principle of indistinguishability, there is no new microstate after exchanging
identical particles. From the view of statistical mechanics, the difference
between various quantum statistics is reflected in the maximum occupation
numbers, e.g., for Fermi-Dirac statistics the maximum occupation number is $1$
and for Bose-Einstein statistics there is no limitation on the maximum
occupation number. Generalizing statistics along this line is to generalize the
maximum occupation number to arbitrary number, e.g., the spin wave satisfies
Gentile statistics \cite{dai2009intermediate}.

The above analyses shows two approaches of constructing generalized
statistics: (1) in quantum mechanics, generalize the permutation symmetry of the wave function
and (2) in statistical mechanics, generalize the maximum occupation number. The 
connection between those two approaches, however, is obscure.

In statistical mechanics, a system with the fixed particle number should be considered 
in the canonical ensemble and the canonical partition function is the key. 
It is because all the thermodynamic information 
is embedded in the canonical partition function. 
For example, the eigenvalue spectrum can be obtained from the canonical partition 
function \cite{zhou2018calculating}. However, to calculate the canonical partition function
is difficult. It is because one has to deal with the inter-particle interactions 
and at the same time takes the
constraint of fixed particle number into consideration. For example, the previous 
work \cite{zhou2018canonical} gives the canonical partition function for ideal Bose, 
Fermi, and Gentile statistics. 

In this paper, based on the mathematical theory of the invariant matrix \cite{littlewood1977theory}, 
the Schur-Weyl duality \cite{meijer2017schur}, and the symmetric function 
\cite{littlewood1977theory,macdonald1998symmetric}, we suggest a unified framework to 
describe various kinds of generalized quantum statistics. 
We first provide a general formula of canonical partition functions of ideal $N$-particle gases who obeying 
various kinds of generalized quantum statistics. Then we reveal the connection between the permutation 
phase of the wave function and the maximum occupation number, through constructing a method to 
obtain the permutation phase and the maximum occupation number from the canonical partition function. 

We show that the permutation phase of the wave function is closely related to the higher dimensional 
representation of the permutation group. In our scheme, for generalized quantum statistics, the permutation 
phase of wave functions is generalized to a matrix, rather than a number. 
The permutation phase of Bose or Fermi wave function, $1$ or $-1$, is regarded as $1\times1$ matrices, 
as special cases of generalized quantum statistics. 

It is commonly accepted that different kinds of statistics are distinguished by the maximum 
number. We show that the maximum occupation number is not sufficient to distinguish different kinds 
of generalized quantum statistics. 

As examples, we describe various kinds of statistics in a unified framework,
including parastatistics proposed in 1952 by H. Green \cite{green1953generalized,ohnuki1982quantum}, the
intermediate statistics or Gentile statistics proposed in 1940 by G.
Gentile Jr \cite{gentile1940itosservazioni,dai2004gentile}, Gentileonic
statistics proposed by Cattani and Fernandes in
1984 \cite{cattani1984general}, and the immannons proposed by Tichy in 2017 \cite{tichy2017extending},
etc, where the canonical partition function, the maximum occupation number, and the permutation phase 
of the wave function are given. Especially, we propose three new kinds of generalized 
statistics which seem to be the missing pieces in the puzzle.

The mathematical basis of the scheme are the 
mathematical theory of the invariant matrix \cite{littlewood1977theory}, 
the Schur-Weyl duality \cite{meijer2017schur}, the symmetric function \cite{littlewood1977theory,macdonald1998symmetric},
and the representation theory of the permutation group and the 
unitary group \cite{iachello2006lie,hamermesh1962group}. The result in this paper 
builds a bridge between the quantum statistical mechanics and such mathematical theories
and enables one to use the fruitful result in the theory of the symmetric function to solve the
problem in quantum statistical mechanics.

There are discussions on various kinds of generalized quantum statistics. For example, the
generalized quantum statistics such as parastatistics and Gentile statistics are discussed in
Ref. \cite{cattani2009intermediate}. The distinctions between the intermediate statistics, parastatistics, and
Okayama statistics are discussed in Ref. \cite{katsura1970intermediate}.
The connection between the irreducible representation of $S_{N}$ and the
parastatistics is given by Okayama \cite{okayama1952generalization}.
The operator realization of Gentile statistics is given in Ref. \cite{dai2004representation} 
and the statistical distribution 
of various intermediate statistics is calculated from operator relations in
Ref. \cite{dai2012calculating}. 
The relation between properties of Gentile
statistics and the fractional statistics of anyons is discussed in Ref.
\cite{shen2010relation}. The operator realization of intermediate
statistics is discussed in Ref. \cite{shen2007intermediate}.
Statistical distributions for generalized ideal gas of
fractional-statistics gases are given in Ref. \cite{wu1994statistical}.

This paper is organized as follows. In Sec. 2, we briefly review the
mathematical theory of integer partitions and symmetric functions. In Sec.
3, we give a general formula of the canonical partition function of 
an ideal $N$-particle gases under various kinds of statistics. 
In Sec. 4, we give a method to obtain the maximum occupation number 
and show that the maximum occupation number is not sufficient to distinguish different statistics. 
In Sec. 5, we give a method to obtain the permutation phase from the canonical partition 
function and discuss the permutation phase of the wave function for different statistics. 
In Sec. 6, we give a unified framework to describe a series of generalized
statistics. The canonical partition function, the maximum occupation number, 
and the permutation phase of the wave function are given. 
The intermediate statistics such as Gentile
statistics, parastatistics, the immanonns, Gentileonic statistics, and
the new proposed generalized quantum statistics is discussed as examples. In Sec. 7,
the conclusion is given. Some details of the calculation are given
in appendixes.

\section{The integer partition and the symmetric function: a brief review}

The main result of the present paper involves some basic knowledges of the
mathematical theory of the integer partition and the symmetric function. In
this section, we give a brief review on the theory. For more details, one can refer to
Refs.
\cite{littlewood1977theory,vilenkin2013representation,macdonald1998symmetric}%
.

\subsection{The integer partition}

\textit{The integer partition and the length of an integer partition.}\textbf{
}An integer $N$ can be represented as a sum of other integers:
\begin{equation}
N=\lambda_{1}+\lambda_{2}+\ldots+\lambda_{l},
\end{equation}
where $\lambda_{1}\geq\lambda_{2}\geq\ldots\geq\lambda_{l}>0$. The integer
partition of $N$ is denoted by the notation $\left(  \lambda\right)  =\left(
\lambda_{1},\lambda_{2},\ldots,\lambda_{l}\right)  $. The number of the integer in
$\left(  \lambda\right)  $ is the length of $\left(  \lambda\right)  $, denoted
by $l_{\left(  \lambda\right)  }$. $N$ is the size of $\left(  \lambda\right)
$. For example, for an integer partition $\left(  \lambda\right)  =\left(
3,2,1\right)  $, the length is $l_{\left(  \lambda\right)  }=3$ and the size
is $6$.

\textit{The unrestricted partition function }$P(N)$\textit{ and arranging
integer partitions in a prescribed order. }An integer $N$ has many integer
partitions and the unrestricted partition function $P(N)$ counts the number of
integer partitions [3]. For a given $N$, one arranges the integer partition in
the following order: $\left(  \lambda\right)  $, $\left(  \lambda\right)
^{\prime}$, when $\lambda_{1}$ $>$ $\lambda_{1}^{\prime}$; $\left(
\lambda\right)  $, $\left(  \lambda\right)  ^{\prime}$, when $\lambda
_{1}=\lambda_{1}^{\prime}$ but $\lambda_{2}$ $>$ $\lambda_{2}^{\prime}$; and
so on. One keeps comparing $\lambda_{i}$ and $\lambda_{i}^{\prime}$ until all
the integer partitions of $N$ are arranged in a prescribed order. $\left(
\lambda\right)  _{I}$ is the $Ith$ integer partition function. For example,
the integer partitions of $4$ are $\left(  \lambda\right)  _{1}=\left(
4\right)  $, $\left(  \lambda\right)  _{2}=\left(  3,1\right)  $, $\left(
\lambda\right)  _{3}=\left(  2^{2}\right)  $, $\left(  \lambda\right)
_{4}=\left(  2,1^{2}\right)  $, and $\left(  \lambda\right)  _{5}=\left(
1^{4}\right)  $, where, e.g., the superscript in $1^{4}$ means $1$ appearing
twice, the superscript in $2^{2}$ means $2$ appearing twice, and so on.

\textit{The conjugate integer partition.} For an integer partition $\left(
\lambda\right)  =\left(  \lambda_{1},\lambda_{2},\ldots,\lambda_{l}\right)  $,
there is one and only one integer partition $\left(  \lambda\right)  ^{\ast}$
that is conjugate to $\left(  \lambda\right)  $. To get the conjugate integer
partition $\left(  \lambda\right)  ^{\ast}$ from $\left(  \lambda\right)  $,
an efficient method is to use the Young diagram
\cite{littlewood1977theory,macdonald1998symmetric,andrews1998theory}. For
example, the conjugate integer partition of $\left(  \lambda\right)  =\left(
3,1\right)  $, as shown in Fig. (\ref{duality}), is $\left(  \lambda\right)
^{\ast}=\left(  2,1^{2}\right)  $.

\begin{figure}[H]
\centering
\includegraphics[width=0.8\textwidth]{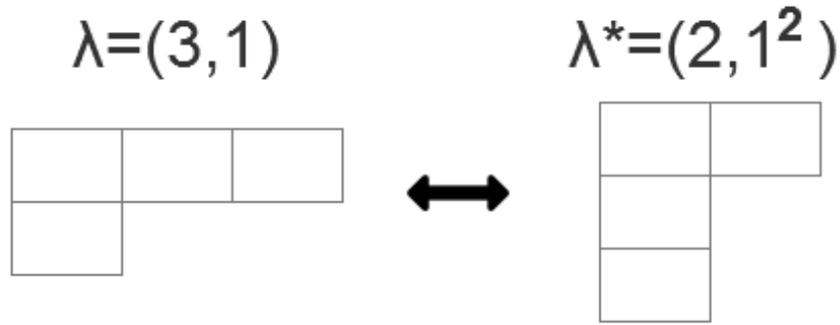}
\caption{An example of the method to obtain the conjugate integer partition by Young diagram.}
\label{duality}
\end{figure}

\subsection{The symmetric function}

The symmetric function, first studied by Hall in 1950s, is an important
issue in algebraic combinatorics
\cite{littlewood1977theory,macdonald1998symmetric}. It is closely related to
the integer partition in number theory and plays an important role in the
theory of group representations
\cite{littlewood1977theory,vilenkin2013representation,macdonald1998symmetric}.

In this section, we give a brief review on several important types of
symmetric functions, such as the S-function $s_{\left(  \lambda\right)
}\left(  x_{1},x_{2},x_{3},\ldots\right)  $, the monomial symmetric function
$m_{\left(  \lambda\right)  }\left(  x_{1},x_{2},x_{3},\ldots\right)  $, and the
power sum symmetric function $p_{\left(  \lambda\right)  }\left(  x_{1}%
,x_{2},x_{3},\ldots\right)  $.

\textit{The S-function }$s_{\left(  \lambda\right)  }\left(  x_{1},x_{2}%
,x_{3},\ldots\right)  $. The S-function, also called the Schur function, is an
important type of the symmetric function. For an integer partition $\left(
\lambda\right)  $ of the integer $N$, the S-function is defined by
\cite{littlewood1977theory,macdonald1998symmetric}\textit{ }%
\begin{equation}
s_{\left(  \lambda\right)  }\left(  x_{1},x_{2},...,x_{n}\right)  =\frac
{\det\left(  x_{i}^{\lambda_{j}+n-j}\right)  }{\det\left(  x_{i}^{n-j}\right)
},
\end{equation}
where $\det (A)$ represents the determinate of matrix $A$. There is also another definition of $s_{\left(  \lambda\right)  }\left(
x_{1},x_{2},...\right)  $ without limitation on the number of variables
\cite{littlewood1977theory,macdonald1998symmetric}:
\begin{equation}
s_{\left(  \lambda\right)  _{I}}\left(  x_{1},x_{2},...\right)  =\sum
_{J=1}^{P\left(  N\right)  }\frac{g_{J}}{N!}\chi_{J}^{I}%
%TCIMACRO{\dprod \limits_{m=1}^{k}}%
%BeginExpansion
{\displaystyle\prod\limits_{m=1}^{k}}
%EndExpansion
\left(  \sum_{i}x_{i}^{m}\right)  ^{a_{J,m}}, \label{XS036}%
\end{equation}
where $\left(  \lambda\right)  _{I}$ is the $Ith$ integer partition of the
integer $N$ and $a_{J,m}$ represents the times $m$ occuring in $\left(
\lambda\right)  _{J}$, $\chi_{J}^{I}$ is the simple characteristic of the
permutation group of order $N$. $g_{I}$ satisfies%
\begin{equation}
g_{I}=N!\left(
%TCIMACRO{\dprod \limits_{j=1}^{N}}%
%BeginExpansion
{\displaystyle\prod\limits_{j=1}^{N}}
%EndExpansion
j^{a_{I,j}}a_{I,j}!\right)  ^{-1}.
\end{equation}

\textit{The monomial symmetric function }$m_{\left(  \lambda\right)  }\left(
x_{1},x_{2},\ldots\right)  $. For an integer partition $\left(  \lambda
\right)  $, there is a corresponding monomial symmetric polynomial, defined by \cite{littlewood1977theory,macdonald1998symmetric}
\begin{equation}
m_{\left(  \lambda\right)  }\left(  x_{1},x_{2},\ldots,x_{n}\right)
=\sum_{perm}x_{1}^{\lambda_{1}}x_{2}^{\lambda_{2}}\ldots., \label{defm}%
\end{equation}
where $\sum_{perm}$ indicates that the summation runs over all possible
monotonically increasing permutations of $x_{i}$.

\textit{The power sum symmetric function }$p_{\left(  \lambda\right)  }\left(
x_{1},x_{2},\ldots\right)  $. For an integer partition $\left(  \lambda
\right)  $, there is a symmetric polynomial, defined by \cite{littlewood1977theory,macdonald1998symmetric}:
\begin{equation}
p_{\left(  \lambda\right)  }\left(  x_{1},x_{2},\ldots,x_{n}\right)  =%
%TCIMACRO{\dprod \limits_{i=1}^{l_{\left(  \lambda\right)  }}}%
%BeginExpansion
{\displaystyle\prod\limits_{i=1}^{l_{\left(  \lambda\right)  }}}
%EndExpansion
m_{\left(  \lambda_{i}\right)  }.
\end{equation}

\textit{The relation among }$s_{\left(  \lambda\right)  }\left(  x_{1}%
,x_{2},...\right)  $\textit{, }$m_{\left(  \lambda\right)  }\left(
x_{1},x_{2},\ldots,x_{n}\right)  $\textit{, and }$p_{\left(  \lambda\right)
}\left(  x_{1},x_{2},\ldots,x_{n}\right)  $. There is a relation between
$m_{\left(  \lambda\right)  }\left(  x_{1},x_{2},\ldots,x_{n}\right)  $ and
$s_{\left(  \lambda\right)  }\left(  x_{1},x_{2},...\right)  $: the S-function
can be represented as a linear combination of the monomial symmetric
polynomial \cite{littlewood1977theory,macdonald1998symmetric}, i.e.,
\begin{equation}
s_{\left(  \lambda\right)  _{K}}\left(  x_{1},x_{2},...\right)  =\sum
_{I=1}^{P\left(  N\right)  }k_{K}^{I}m_{\left(  \lambda\right)  _{I}}\left(
x_{1},x_{2},...\right)  , \label{smgx}%
\end{equation}
where $k_{K}^{I}$ is the Kostka number
\cite{littlewood1977theory,macdonald1998symmetric}. There is a relation
between $p_{\left(  \lambda\right)  }\left(  x_{1},x_{2},\ldots,x_{n}\right)
$ and $s_{\left(  \lambda\right)  }\left(  x_{1},x_{2},...\right)  $: the
S-function $s_{\left(  \lambda\right)  }\left(  x_{1},x_{2},...\right)  $ can
be represented as a linear combination of the \textit{power sum symmetric
function }$p_{\left(  \lambda\right)  }\left(  x_{1},x_{2},\ldots\right)  $
\cite{goulden1992immanants}:
\begin{equation}
p_{\left(  \lambda\right)  _{K}}\left(  x_{1},x_{2},...\right)  =\sum
_{I=1}^{P\left(  N\right)  }\chi_{K}^{I}s_{\left(  \lambda\right)  _{I}%
}\left(  x_{1},x_{2},...\right)  , \label{spgx}%
\end{equation}
where $\chi_{K}^{I}$ is the simple characteristics of $S_{N}$
\cite{vilenkin2013representation,goulden1992immanants}.

\section{The canonical partition function for various kinds of statistics: a general result}

In statistical mechanics, the canonical partition function carries all the thermodynamic information
and plays a centeral role. In this section, we provide a general formula of 
the canonical parition function of various
kinds of generalized quantum statistics. 
The canonical partition function of Bose, Fermi, and Gentile statistics 
considered in the previous work \cite{zhou2018canonical} is the special case of the formula. 

The result in this section is a bridge between the maximum occupation number 
and the permutation phase of the wave function.

\begin{theorem}
\label{theorem1} (1) The canonical partition function
of an ideal gas consisting of $N$-identical particles is%
\begin{equation}
Z\left(  \beta,N\right)  =\sum_{I}^{P\left(  N\right)  }C_{I}s_{\left(
\lambda\right)  _{I}}\left(  e^{-\beta\varepsilon_{1}},e^{-\beta
\varepsilon_{2}},...\right)  ,\label{5301}%
\end{equation}
where $C_{I}\ $is an nonnegative integer and $s_{\left(  \lambda\right)  _{I}%
}\left(  e^{-\beta\varepsilon_{1}},e^{-\beta\varepsilon_{2}},...\right)  $ is
the S-function with $\varepsilon_{i}$ the eigenvalue of single particles.

(2) The Hilbert subspace describing the system is 
\begin{equation}
D=%
%TCIMACRO{\dbigoplus \limits_{I=1}^{P\left(  N\right)  }}%
%BeginExpansion
{\displaystyle\bigoplus\limits_{I=1}^{P\left(  N\right)  }}
%EndExpansion
\left(  V^{I}\right)  ^{\oplus C_{I}},
\end{equation}
where $V^{I}$ carries the inequivalent and irreducible representation of $S_{N}$ 
corresponding to the integer partition $\left(
\lambda\right)  _{I}$.
\end{theorem}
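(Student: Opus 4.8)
The plan is to realize the thermal weight operator as a single group element and then read the partition function off Schur--Weyl duality. Write $x_i=e^{-\beta\varepsilon_i}$ and observe that for $N$ non-interacting particles $e^{-\beta H}=\bigl(e^{-\beta H_1}\bigr)^{\otimes N}$ is exactly the diagonal action of $X=\mathrm{diag}(x_1,x_2,\dots)\in GL(\mathcal{H})$ on the tensor space $\mathcal{H}^{\otimes N}$, where $\mathcal{H}$ is the single-particle Hilbert space in which $H_1$ is diagonalised. A choice of quantum statistics amounts to a choice of the physical $N$-body subspace $D\subseteq\mathcal{H}^{\otimes N}$, and the canonical partition function is the restricted trace $Z(\beta,N)=\mathrm{Tr}_D\,X$. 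Thus the theorem reduces to computing this trace once the representation content of $D$ is identified.

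First I would invoke the two commuting actions on $\mathcal{H}^{\otimes N}$: the permutation action of $S_N$ on the tensor factors and the diagonal $GL(\mathcal{H})$ action that contains $X$. By Schur--Weyl duality (equivalently, Littlewood's theory of invariant matrices) these generate each other's commutant, giving $\mathcal{H}^{\otimes N}=\bigoplus_{\lambda}V^{\lambda}\otimes W^{\lambda}$, summed over partitions $\lambda$ of $N$, with $V^{\lambda}$ the irreducible $S_N$-module and $W^{\lambda}$ the irreducible $GL$-module, and with the multiplicity space of $W^{\lambda}$ being precisely the $S_N$-irrep $V^{\lambda}$. The single analytic input I then need is the character identity $\mathrm{Tr}_{W^{\lambda}}X=s_{(\lambda)}(x_1,x_2,\dots)$: the $GL$-character of the Weyl module evaluated at the diagonal element $X$ is exactly the S-function reviewed in Section 2. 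This is the step that converts a representation-theoretic trace into symmetric functions.

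Next I would use indistinguishability together with energy conservation to pin down $D$. Permutation invariance forces $D$ to be an $S_N$-submodule, while the demand that the trace be meaningful for an arbitrary single-particle spectrum forces $D$ to be stable under the full diagonal $GL$-action. Decomposing the resulting doubly invariant subspace and grouping terms by $\lambda$, one writes $D=\bigoplus_{I}(V^{I})^{\oplus C_{I}}$, where $C_{I}$ is the multiplicity of the sector labelled by $(\lambda)_I$ and is manifestly a nonnegative integer; this is part (2). Taking the trace block by block, and using that $X$ acts trivially on the $C_I$-dimensional multiplicity space while acting as $W^{\lambda}$ on the Weyl factor, each block collapses to $C_{I}\,s_{(\lambda)_I}(x)$. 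Summing over $I$ yields $Z(\beta,N)=\sum_{I}C_{I}\,s_{(\lambda)_I}(e^{-\beta\varepsilon_1},\dots)$, which is part (1). As consistency checks I would confirm that the trivial and sign sectors reproduce $s_{(N)}=h_N$ (Bose) and $s_{(1^N)}=e_N$ (Fermi), each with $C_I=1$, and that retaining every sector recovers the distinguishable result $(\sum_i x_i)^N=\sum_{\lambda}f^{\lambda}s_{(\lambda)}$ via the power-sum expansion (\ref{spgx}) applied to $p_{(1^N)}$.

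The main obstacle is the bookkeeping of multiplicities: I must show that the integer multiplying $s_{(\lambda)_I}$ in the trace coincides with the multiplicity with which the $S_N$-irrep $V^{I}$ occurs in $D$, so that parts (1) and (2) carry the \emph{same} $C_I$. This is the delicate point where the $S_N$-labels and the $GL$-labels must be matched, and it is precisely the content of the double-commutant/Schur--Weyl statement, supplemented by complete reducibility (Maschke) and character orthogonality to extract $C_I$ unambiguously. A secondary subtlety is justifying that every physically admissible statistics yields a $GL$-stable $D$, so that only whole Weyl modules, and hence whole S-functions, appear; for the ideal (interaction-free) gas this is automatic, and it is exactly the hypothesis that makes the clean S-function expansion (\ref{5301}) possible.
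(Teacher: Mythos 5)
Your overall route is the paper's route: realize $e^{-\beta H_N}$ as $X^{\otimes N}$ with $X$ diagonal in $GL(\mathcal{H})$, invoke Schur--Weyl duality (Littlewood's invariant-matrix theory in the paper's language), use the character identity $\mathrm{Tr}_{W^{\lambda}}X=s_{(\lambda)}(x_1,x_2,\ldots)$, decompose the physical subspace $D$ into sectors, and trace block by block. But there is a genuine gap in the step where you pin down $D$. You demand that $D$ be stable under \emph{both} the $S_N$-action and the diagonal $GL$-action, and then assert a decomposition $D=\bigoplus_I\left(V^{I}\right)^{\oplus C_I}$ with arbitrary nonnegative multiplicities $C_I$ on which $X$ ``acts trivially on the multiplicity space.'' These two statements are incompatible. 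By the very double-commutant theorem you invoke, the joint $S_N\times GL(\mathcal{H})$ action on each isotypic block $V^{\lambda}\otimes W^{\lambda}$ is irreducible, so a doubly invariant subspace must be a direct sum of \emph{whole} blocks; the multiplicity of the $GL$-irrep $W^{\lambda}$ is then forced to be $0$ or $f_{\lambda}=\dim V^{\lambda}$, and each included block contributes $f_{\lambda}\,s_{(\lambda)}(x)$, not $s_{(\lambda)}(x)$. Carried out consistently, your argument proves only $Z=\sum_{I\in S}f_I\,s_{(\lambda)_I}$, i.e. $C_I\in\{0,f_I\}$ --- the Boltzmann-type answer restricted to a set of sectors. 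It cannot produce the arbitrary nonnegative integers $C_I$ of part (1), nor the paper's central examples (parastatistics, Gentileonic statistics, the immanons), all of which have $C_I=1$ on blocks with $f_I>1$. Your closing claim that the multiplicity bookkeeping ``is precisely the content of the double-commutant statement, supplemented by Maschke and character orthogonality'' does not close this gap; the double-commutant theorem is what makes the contradiction sharp.

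The paper escapes this by not imposing joint stability in your strong sense. Its trace computation is done on the irreducible invariant-matrix blocks $D^{I}(e^{-\beta H})$, i.e. on single copies of the $GL$-irrep $W^{(\lambda)_I}$ (the spaces it calls $V^{\prime I}$), each of which has trace exactly $s_{(\lambda)_I}$; it then deliberately identifies $V^{\prime I}$ with $V^{I}$ (absorbing the factor $f_I$ of its Eq.~(\ref{ca1}) by the normalization $\left\vert\Psi^{\prime}\right\rangle=f_I^{-1/2}\left\vert\Psi\right\rangle$), so that $C_I$ counts copies of these blocks and can be any nonnegative integer. The $S_N$-irrep of part (2) is attached to each block through the duality pairing, not through literal $S_N$-invariance of a single copy of $W^{(\lambda)_I}$ inside $\mathcal{H}^{\otimes N}$: such a copy, of the form $v\otimes W^{\lambda}$, is not $S_N$-stable whenever $f_{\lambda}>1$. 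To repair your proof, drop the requirement of simultaneous $S_N$-stability, let $C_I$ be the multiplicity of the $GL$-irrep $W^{(\lambda)_I}$ (equivalently, of the irreducible invariant matrix) in $D$, and recover the $S_N$ statement of part (2) from the Schur--Weyl correspondence, which is what the paper's own (also somewhat heuristic) proof does.
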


\begin{proof}
The direct sum decomposition of the $N$-particle Hilbert space $V^{\otimes N}$
and the trace of the operator $e^{-\beta H}$ in the subspace of $V^{\otimes
N}$ are the basises of the proof. Before proving Eqs. (\ref{5301})
and (\ref{5302}), we prove that the $N$-particle Hilbert space $V^{\otimes N}$
can be decomposed into subspaces labeled by the integer partition $\left(
\lambda\right)  $ of $N$ and the trace of the operator $e^{-\beta H}$ in the
subspace is the S-function.

\textit{A brief review on the mathematical theory of the Schur-Weyl duality.}
Let $V$ be a linear space of the dimension $m$. Let $g\in GL\left(
V\right)  $ be a linear operator on $V$. The action of $g$ on a vector
$e_{i_{1}}\otimes e_{i_{2}}\otimes\ldots\otimes e_{i_{N}}$ in $V^{\otimes N}$
is%
\begin{equation}
g\left(  e_{i_{1}}\otimes e_{i_{2}}\otimes\ldots\otimes e_{i_{N}}\right)
\equiv g e_{i_{1}}\otimes g e_{i_{2}}\otimes\ldots\otimes g
e_{i_{N}}, \label{pro1}%
\end{equation}
where, $\left\{  e_{1},e_{2},...\right\}  $ is a basis in $V$. For $\sigma\in
S_{N}$, the action of $\sigma$ on the vector $e_{i_{1}}\otimes e_{i_{2}%
}\otimes\ldots\otimes e_{i_{N}}$ is%
\begin{equation}
\sigma\left(  e_{i_{1}}\otimes e_{i_{2}}\otimes\ldots\otimes e_{i_{N}%
}\right)  \equiv e_{\sigma_{i_{1}}}\otimes e_{\sigma_{i_{2}}}\otimes
\ldots\otimes e_{\sigma_{i_{N}}}. \label{pro2}%
\end{equation}
Eqs. (\ref{pro1}) and (\ref{pro2}) imply that the operator $g$ commute with
$\sigma$ on $V^{\otimes N}$. For $m\geq
N$, the space $V^{\otimes N}$ can be decomposed into a direct sum of the
subspace $V^{I}$ \cite{meijer2017schur,dao2018schur}. The number of $V^{I}$ is
$P\left(  N\right)  $ and each integer partition $\left(  \lambda\right)
_{I}$ of $N$ corresponds to a subspace $V^{I}$
\cite{meijer2017schur,dao2018schur}, i.e.,%
\begin{equation}
V^{\otimes N}=%
%TCIMACRO{\dbigoplus \limits_{I=1}^{P\left(  N\right)  }}%
%BeginExpansion
{\displaystyle\bigoplus\limits_{I=1}^{P\left(  N\right)  }}
%EndExpansion
V^{I}.
\end{equation}
The subspace $V^{I}$ carries irreducible representations for $S_{N}$ and
$GL\left(  V\right)  $ with the dimension $f_{I}$ and $R_{I}$ respectively,
where%
\begin{equation}
f_{I}=N!%
%TCIMACRO{\dprod \limits_{i=1,i<j}}%
%BeginExpansion
{\displaystyle\prod\limits_{i=1,i<j}}
%EndExpansion
\left(  \lambda_{I,i}-\lambda_{I,j}-i+j\right)
%TCIMACRO{\dprod \limits_{i=1}}%
%BeginExpansion
{\displaystyle\prod\limits_{i=1}}
%EndExpansion
\left[  \left(  l_{\left(  \lambda\right)  }+\lambda_{I,i}-i\right)  !\right]
^{-1}, \label{pro20}%
\end{equation}
and%
\begin{equation}
R_{I}=%
%TCIMACRO{\dprod \limits_{i<j}^{m}}%
%BeginExpansion
{\displaystyle\prod\limits_{i<j}^{m}}
%EndExpansion
\left(  a_{I,i}-a_{I,j}+j-i\right)  \left(  j-i\right)  ^{-1}%
\end{equation}
with $a_{I,1}=\lambda_{I,1}$, $a_{I,2}=\lambda_{I,2}$, $\ldots$,
$a_{I,l_{\left(  \lambda\right)  }}=\lambda_{I,l_{\left(  \lambda\right)  }}$,
$a_{I,l_{\left(  \lambda\right)  }+1}=0$,$\ldots$, and $a_{I,m}=0$
\cite{meijer2017schur,dao2018schur}. The dimension of $V^{I\text{ }}$is
\cite{meijer2017schur,dao2018schur}
\begin{equation}
\dim\left(  V^{I}\right)  =f_{I}R_{I}.
\end{equation}
For $S_{N}$, the inequivalent and irreducible representation with the
dimension $f_{I}$\ occurs $R_{I}$ times in $V^{I}$
\cite{meijer2017schur,dao2018schur}. For $GL\left(  V\right)  $, the
inequivalent and irreducible representation of the dimension $R_{I}$\ occurs
$f_{I}$ times in $V^{I}$ \cite{meijer2017schur,dao2018schur}. It can be
verified that
\begin{equation}
\dim\left(  V^{\otimes N}\right)  =m^{N}=\sum_{I=1}^{P\left(  N\right)  }%
\dim\left(  V^{I}\right)  =\sum_{I=1}^{P\left(  N\right)  }R_{I}f_{I}.
\end{equation}

\textit{A brief review on the mathematical theory of the invariant matrix.
}For an $m$-dimensional matrix group $G$. Let $A$ be an $m$-dimensional matrix
in $G$. Let $T\left(  A\right)  $ be a function of $A$. $T\left(  A\right)  $
is an invariant matrix \cite{littlewood1977theory} if%
\begin{equation}
T\left(  A\right)  T\left(  B\right)  =T\left(  AB\right)  ,
\end{equation}
where $B$ is also an $m$-dimensional matrix in $G$. The invariant matrix gives
a representation of the group $G$. If $T$ is reducible, then for any $A$ in $G$,
$T\left(  A\right)  $ can be diagonalized in the same way and the matrix in
diagonal is a new invariant matrix of $G$ \cite{littlewood1977theory}. The
$N$ times direct product of $G$, $G^{\otimes N}$, is an invariant matrix
\cite{littlewood1977theory}. The $G^{\otimes N}$ can be decomposed into
$P\left(  N\right)  $ irreducible invariant matrices. An integer partition
$\left(  \lambda\right)  _{I}$ corresponds to an irreducible invariant matrix,
denoted by $T^{I}\left(  G\right)  $. For $A$ in $G$, the trace of
$T^{I}\left(  A\right)  $ is \cite{littlewood1977theory}
\begin{equation}
tr\left[  T^{I}\left(  A\right)  \right]  =s_{\left(  \lambda\right)  _{I}%
}\left(  a_{1},a_{2},\ldots\right)  ,
\end{equation}
where $a_{i}$ is the eigenvalue of $A$.

\textit{The direct sum decomposition of the }$N$\textit{-particle Hilbert
space }$V^{\otimes N}$. By using the Schur-Weyl duality, the Hilbert space of
a $N$-particle system $V^{\otimes N}$ can be decomposed into a direct sum of
subspaces $V^{I}$. The number of the subspace is $P\left(  N\right)  $ and
each integer partition $\left(  \lambda\right)  _{I}$ of $N$ corresponds to a
subspace $V^{I}$. The dimension of the subspace $V^{I}$ is $R_{I}f_{I}$. The
space $V^{I}$ gives $f_{I}$ equivalent and irreducible representations with
the dimension $R_{I}$ for the Hamiltonian and $R_{I}$ equivalent and
irreducible representations with the dimension $f_{I}$ for $S_{N}$.

\textit{The trace of the operator }$e^{-\beta H}$\textit{ in the subspace.
}Let $H$ be the Hamiltonian of a single particle and $V$ be the Hilber space
of a single particle. One can give the matrix expression of the operator
$e^{-\beta H}$ on $V$%
\begin{equation}
e^{-\beta H}=\sum_{i=1}e^{-\beta\varepsilon_{i}}|\phi_{i}\rangle\langle
\phi_{i}|,
\end{equation}
where $|\phi_{i}\rangle$ is the eigenfunction of the Hamiltonian $H$ and
$\varepsilon_{i}$ is the corresponding eigenvalue. $e^{-\beta H_{N}}=\left(
e^{-\beta H}\right)  ^{\otimes N}$ is an operator on $V^{\otimes N}$, where
$H_{N}=%
%TCIMACRO{\dbigoplus \limits_{i=1}^{N}}%
%BeginExpansion
{\displaystyle\bigoplus\limits_{i=1}^{N}}
%EndExpansion
H_{i}$ is the Hamiltonian of an $N$-identical-particle gas system. Since $e^{-\beta
H_{N}}$ is an invariant matrix of $e^{-\beta H}$, by using the mathematical
theory of the invariant matrix\textit{,} $e^{-\beta H_{N}}$ can be decomposed
into $P\left(  N\right)  $ irreducible invariant matrices. Each irreducible
invariant matrix, denoted by $D^{I}\left(  e^{-\beta H}\right)  $,
corresponds to an integer partition $\left(  \lambda\right)  _{I}$ of $N$.
The trace of $D^{I}\left(  e^{-\beta H}\right)  $ is%
\begin{equation}
tr\left[  D^{I}\left(  e^{-\beta H}\right)  \right]  =s_{\left(
\lambda\right)  _{I}}\left(  e^{-\beta\varepsilon_{1}},e^{-\beta
\varepsilon_{2}},\ldots\right)  . \label{pro4}%
\end{equation}
By the mathematical theory of the Schur-Weyl duality\textit{,} we recongnize
that Eq. (\ref{pro4}) is the trace of $e^{-\beta H_{N}}$ in the subspace
$V^{\prime I}$ with $V^{\prime I}$ the inequivalent and irreducible
representation corresponding to the integer partition $\left(  \lambda\right)
_{I}$. That is, for a complete basis $\left\vert \Phi\right\rangle $ in
$V^{\prime I}$, one has
\begin{equation}
\sum\left\langle \Phi\right\vert e^{-\beta H_{N}}\left\vert \Phi\right\rangle
=s_{\left(  \lambda\right)  _{I}}\left(  e^{-\beta\varepsilon_{1}}%
,e^{-\beta\varepsilon_{2}},\ldots\right)  .
\end{equation}
According to the Schur-Weyl duality\textit{, }the
inequivalent and irreducible representations $V^{\prime I}$ occurs $f_{I}$
times in $V^{I}$, that is, for a complete basis $\left\vert \Psi\right\rangle $
in $V^{I}$, the equation
\begin{equation}
\sum\left\langle \Psi\right\vert e^{-\beta H_{N}}\left\vert \Psi\right\rangle
=f_{I}s_{\left(  \lambda\right)  _{I}}\left(  e^{-\beta\varepsilon_{1}%
},e^{-\beta\varepsilon_{2}},\ldots\right)  \label{ca1}%
\end{equation}
holds. In Eq. (\ref{ca1}), the coefficient $f_{I}$ can be canceled by
setting
\begin{equation}
\left\vert \Psi^{\prime}\right\rangle =\frac{1}{\sqrt{f_{I}}}\left\vert
\Psi\right\rangle .
\end{equation}
Thus, we make no distinguish between the subspace $V^{\prime I}$ and $V^{I}$
in the rest discussion of the present paper.

\textit{The proof of Eq. (\ref{5301}).} An
identical-particle system is described in a Hilbert subspace $D$. The
space $D$ can be decomposed into subspace $V^{I}$ that carries the equivalent
and irreducible representation of $S_{N}$ \cite{hamermesh1962group}, i.e.,%
\begin{equation}
D=%
%TCIMACRO{\dbigoplus \limits_{I=1}^{P\left(  N\right)  }}%
%BeginExpansion
{\displaystyle\bigoplus\limits_{I=1}^{P\left(  N\right)  }}
%EndExpansion
\left(  V^{I}\right)  ^{\oplus C_{I}},
\end{equation}
where $C_{I}$ are nonnegative integers representing the times of $V^{I}$ occuring
in $D$. By the definition of the canonical partition function, $Z\left(
\beta,N\right)  =tr\left[  D\left(  e^{-\beta H}\right)  \right]  $, and Eq.
(\ref{pro4})\textit{,} we give the canonical partition function of an
$N$-identical-particle system:%
\begin{align}
Z\left(  \beta,N\right)   &  =tr\left[  D\left(  e^{-\beta H}\right)  \right]
\nonumber\\
&  =\sum_{I=1}^{P\left(  N\right)  }C_{I}tr\left[  D^{I}\left(  e^{-\beta
H}\right)  \right] \nonumber\\
&  =\sum_{I=1}^{P\left(  N\right)  }C_{I}s_{\left(  \lambda\right)  _{I}%
}\left(  e^{-\beta\varepsilon_{1}},e^{-\beta\varepsilon_{2}},...\right)  ,
\label{pro5}%
\end{align}
where $D\left(  e^{-\beta H}\right)  $ is the representation of $e^{-\beta H}$
on $D$. Therefore, Eq. (\ref{pro5}) proves Eq. (\ref{5301}). For a basis $|\Phi\rangle$ in
$V^{I}$, it gives the equivalent and irreducible representation of $S_{N}$.
Therefore, Eq. (\ref{pro4}) requires that Eqs. (\ref{5302}) and (\ref{pro6}) hold.
\end{proof}

In the proof of Theorem (\ref{theorem1}), we show that the S-function $s_{\left(
\lambda\right)  }\left(  x_{1},x_{2},\ldots\right)  $ in mathematics is closely related to
the Hilbert subspace: if the canonical partition function is written as a linear
combination of the S-function $s_{\left(  \lambda\right)  }\left(  x_{1}%
,x_{2},\ldots\right)  $, the coefficient gives the Hilbert subspace.

\textit{An example of decomposing the Hilbert space }$V^{N}$\textit{.} For the
sake of clarity, we give an example of the decomposition of the $V^{\otimes
N}$. Let the dimension of the Hilbert space of a single particle $V$ be $6$. For
a system consists of $5$ particles, the Hilbert space $V^{\otimes5}$ is a
$7776$-dimensional space. It can be decomposed into $7$ subspaces, as shown in
Table. (\ref{table}) : the subspace $V^{\left(  5\right)  }$ corresponds
to the integer partition $\left(  5\right)  $ and the dimension of $V^{\left(
5\right)  }$ is $252$. It gives a one-dimensional representation of $S_{N}$,
and a $252$-dimensional representation of the Hamiltonian, and so on.%

\begin{table}[H]  
\caption{An example of decomposing the Hilbert space}
 \label{table}
\centering
 \begin{tabular}{llll}  

\hline   

  $\text{subspace }V^{\left(  \lambda\right)}$ & $\dim\left(  V^{\left(\lambda\right)  }\right)$ &$\text{ }f_{\left(  \lambda\right)  })$ & $R_{\left(
\lambda\right)  }$\\  

\hline   

  $V^{\left(  5\right)  }$ & 252 & 1 &252 \\ 

  $V^{\left(  4,1\right)  }$& 2016 & 4 & 504 \\
  $V^{\left(   3,2\right)  }$& 2100 & 5 & 420 \\
  $V^{\left(  3,1^{2}\right)  }$& 2016 & 6 & 336 \\
  $V^{\left(  2^{2},1\right)  }$& 1050 & 5 & 210 \\
  $V^{\left(  2,1^{3}\right)  }$& 336 & 4 & 84 \\
  $V^{\left(  1^{5}\right)  }$& 6 & 1 & 6 \\

\hline 

\end{tabular}

\end{table}

\section{The maximum occupation number for various kinds of statistics}

In statistical mechanics, it is commonly believed that quantum statistics 
is determined by the maximum occupation number. Different maximum occupation 
numbers lead to different kinds of quantum statistics, e.g., 
setting the maximum occupation number to $1$ leads to Fermi statistics.
Therefore, the maximum occupation number is an important issue. 
For example, Ref. \cite{dai2012calculating} shows that a number of
quantization schemes in quantum field theory corresponds to different maximum
occupation numbers in quantum statistical mechanics.
Ref. \cite{dai2004gentile} considers quantum statistics
where the maximum occupation number for different states 
is different.

In this section, (1) we provide a method to obtain the maximum occupation
number from the canonical partition function, and (2) we point out that the maximum
occupation number is not sufficient to distinguish different kinds of
statistics. 

\subsection{Obtaining the maximum occupation number from the canonical
partition function}

In this section,
we show that as long as the canonical partition function is expressed in terms
of $m_{\left(  \lambda\right)  }\left(  x_{1},x_{2},\ldots\right)  $, the
maximum occupation number can be obtained directly.

\begin{theorem}
\label{theorem2} For an ideal gas consisting of $N$-identical particles, if the canonical partition function $Z\left(
\beta,N\right)  $ can be written in terms of the monomial symmetric polynomial
$m_{\left(  \lambda\right)  }\left(  x_{1},x_{2},\ldots\right)  $ with
nonnegative-integers coefficient, i.e.,%
\begin{equation}
Z\left(  \beta,N\right)  =m_{\left(  \lambda\right)  ^{q}}\left(
e^{-\beta\varepsilon_{1}},e^{-\beta\varepsilon_{2}},...\right)  +\sum
_{\lambda_{I,1}\leq q}^{P\left(  N\right)  }M^{I}m_{\left(  \lambda\right)
_{I}}\left(  e^{-\beta\varepsilon_{1}},e^{-\beta\varepsilon_{2}},...\right)
\label{maximum1}%
\end{equation}
with $\left(  \lambda\right)  ^{q}$ denoting the integer partition with
$\lambda_{1}=q$ and $M^{I}$ a nonnegative integer, then the maximum occupation
number is $q$.
\end{theorem}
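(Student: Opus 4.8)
The plan is to exploit the combinatorial meaning of the monomial symmetric function once its variables are specialized to the single-particle Boltzmann weights $x_i = e^{-\beta\varepsilon_i}$. First I would establish the basic dictionary. A microstate of the ideal $N$-particle gas is fixed by an occupation-number assignment $\{n_i\}$ with $\sum_i n_i = N$, carrying Boltzmann weight $\prod_i (e^{-\beta\varepsilon_i})^{n_i}$. Collecting the nonzero occupation numbers in decreasing order produces an integer partition $(\lambda)$ of $N$ whose parts are exactly the occupancies of the occupied levels; its largest part $\lambda_1$ is the highest single-level occupancy in that microstate. Summing the weights over all distinct ways of assigning a fixed occupation profile $(\lambda)$ to the single-particle levels reproduces, by the definition (\ref{defm}), precisely $m_{(\lambda)}(e^{-\beta\varepsilon_1}, e^{-\beta\varepsilon_2}, \ldots)$.

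With this dictionary in hand, the interpretation of the hypothesis (\ref{maximum1}) is immediate. Writing $Z(\beta,N)$ as a nonnegative-integer combination of monomial symmetric functions organizes the canonical partition function by occupation profile, the coefficient of each $m_{(\lambda)_I}$ recording the multiplicity with which the profile $(\lambda)_I$ is realized under the given statistics. The set of occupation profiles that actually occur is therefore $\{(\lambda)_I : M^I > 0\} \cup \{(\lambda)^q\}$, and the maximum occupation number---by its very definition as the largest single-level occupancy over all allowed microstates---equals the largest first part $\lambda_{I,1}$ among these profiles.

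The conclusion then follows by reading off the two structural features of (\ref{maximum1}). The singled-out term $m_{(\lambda)^q}$ carries coefficient $1 \neq 0$, so the profile $(\lambda)^q$ with $\lambda_1 = q$ is genuinely allowed and occupancy $q$ is attained. Every remaining term is constrained by $\lambda_{I,1} \leq q$, so no allowed profile contains a part exceeding $q$ and occupancy larger than $q$ never occurs. Hence the largest attainable single-level occupation is exactly $q$.

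I expect the only real subtlety to lie in the first step---pinning down the correspondence between the monomial symmetric function and occupation-number configurations rigorously. In particular one must verify that the distinct permutations in the definition (\ref{defm}) of $m_{(\lambda)}$ are in bijection with the distinct assignments of a fixed occupation profile to the single-particle levels, and that the nonnegativity of all coefficients $M^I$ precludes any cancellation that could remove a profile from the expansion. Once that bijection and the nonnegativity are secured, the identification of $q$ as the maximum occupation number is purely a matter of inspecting which partitions appear.
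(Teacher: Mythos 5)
Your proposal is correct and follows essentially the same route as the paper: both rest on specializing $m_{(\lambda)}$ at $x_i = e^{-\beta\varepsilon_i}$ so that it becomes the generating sum over microstates with occupation profile $(\lambda)$, and then reading off the maximum occupation number as the largest part $\lambda_{I,1}$ among the partitions appearing with nonzero (nonnegative) coefficient, with the distinguished term $m_{(\lambda)^q}$ guaranteeing that occupancy $q$ is actually attained. The paper phrases this through the microstate-count decomposition $\omega(E,N)=\sum_I \omega_I(E,N)$, but the underlying dictionary and conclusion are identical to yours.
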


\begin{proof}
The canonical partition function is
\begin{equation}
Z\left(  \beta,N\right)  =\sum_{E}\omega\left(  E,N\right)  e^{-\beta E},
\label{maximum2}%
\end{equation}
where $\omega\left(  E,N\right)  $ is the number of the microstate in the
macrostate $(N,E)$ \cite{reichl2009modern,pathria2011statistical}. By letting
$x_{i}=e^{-\beta\varepsilon_{i}}$ with $\varepsilon_{i}$ the energy of quantum
state in Eq. (\ref{defm}), the monomial symmetric polynomial becomes
\begin{equation}
m_{\left(  \lambda\right)  }\left(  e^{-\beta\varepsilon_{1}},e^{-\beta
\varepsilon_{2}},\ldots\right)  =\sum_{perm}e^{-\beta\varepsilon_{i_{1}%
}\lambda_{1}}e^{-\beta\varepsilon_{i_{2}}\lambda_{2}}\ldots. \label{maximum3}%
\end{equation}
If the canonical partition function of a quantum system $Z\left(
\beta,N\right)  $ is the monomial symmetric polynomial $m_{\left(
\lambda\right)  }\left(  e^{-\beta\varepsilon_{1}},e^{-\beta\varepsilon_{2}%
},\ldots\right)  $, i.e.,%
\begin{equation}
Z\left(  \beta,N\right)  =m_{\left(  \lambda\right)  }\left(  e^{-\beta
\varepsilon_{1}},e^{-\beta\varepsilon_{2}},\ldots\right)  , \label{maximum30}%
\end{equation}
Then, substituting Eqs. (\ref{maximum2}) and (\ref{maximum3}) into Eq.
(\ref{maximum30}) gives
\begin{equation}
\sum_{E}\omega\left(  E,N\right)  e^{-\beta E}=\sum_{perm}e^{-\beta
\varepsilon_{i_{1}}\lambda_{1}}e^{-\beta\varepsilon_{i_{2}}\lambda_{2}}\ldots,
\label{maximum4}%
\end{equation}
where $\omega\left(  E,N\right)  $ in Eq. (\ref{maximum4})
counts the number of microstate where there are $\lambda_{1}$ particles
occupying a quantum state, $\lambda_{2}$ particles occupying another quantum
states, and so on \cite{zhou2018canonical,zhou2018statistical}. 
In this case, the maximum occupation number is $\lambda_{1}$.

If the canonical partition function is a linear combination of $m_{\left(
\lambda\right)  }\left(  e^{-\beta\varepsilon_{1}},e^{-\beta\varepsilon_{2}%
},\ldots\right)  $, say, Eq. (\ref{maximum1}), then the corresponding
$\omega\left(  E,N\right)  $ satisfies
\begin{equation}
\sum_{E}\omega\left(  E,N\right)  e^{-\beta E}=m_{\left(  \lambda\right)
^{q}}\left(  e^{-\beta\varepsilon_{1}},e^{-\beta\varepsilon_{2}},...\right)
+\sum_{\lambda_{I,1}\leq q}^{P\left(  N\right)  }M^{I}m_{\left(
\lambda\right)  _{I}}\left(  e^{-\beta\varepsilon_{1}},e^{-\beta
\varepsilon_{2}},...\right)  . \label{maximum5}%
\end{equation}
Introducing $\omega_{I}\left(  E,N\right)  $ that satisfies
\begin{equation}
\omega\left(  E,N\right)  =\sum_{I=1}^{P\left(  N\right)  }\omega_{I}\left(
E,N\right)  \label{maximum6}%
\end{equation}
and substituting Eq. (\ref{maximum6}) into Eq. (\ref{maximum5}) give
\begin{align}
\sum_{E}\omega_{\left(  \lambda\right)  ^{q}}\left(  E,N\right)  e^{-\beta E}
&  =m_{\left(  \lambda\right)  ^{q}}\left(  e^{-\beta\varepsilon_{1}%
},e^{-\beta\varepsilon_{2}},...\right), \\
\sum_{\lambda_{I,1}\leq q}^{P\left(  N\right)  }\sum_{E}\omega_{I}\left(
E,N\right)  e^{-\beta E}  &  =\sum_{\lambda_{I,1}\leq q}^{P\left(  N\right)
}M^{I}m_{\left(  \lambda\right)  _{I}}\left(  e^{-\beta\varepsilon_{1}%
},e^{-\beta\varepsilon_{2}},...\right)  .
\end{align}
Since $\omega_{I}\left(  E,N\right)  $ counts the number of microstate where
there are $\lambda_{I,1}$ particles occupying a quantum state, $\lambda
_{I,2}$ particles occupying another quantum state, and so on. Thus, the
maximum occupation number of the system is the largest $\lambda_{J,1}$, for
$M^{J}\neq0$. In Eq. (\ref{maximum5}), the largest $\lambda_{J,1}$ is $q$.
Therefore the maximum occupation number of the system is $q$.
\end{proof}

In the proof of Theorem (\ref{theorem2}), we show that the monomial symmetric polynomials
$m_{\left(  \lambda\right)  }\left(  x_{1},x_{2},\ldots\right)  $ in mathematics is closely
related to the maximum occupation number in physics: if the canonical partition function is written as a linear
combination of the monomial symmetric polynomials
$m_{\left(  \lambda\right)  }\left(  x_{1},x_{2},\ldots\right)  $, 
the maximum element in the integer partition with non-zero coefficient gives the maximum occupation
number.

\textit{Examples: Bose-Einstein, Fermi-Dirac, and Gentile statistics. } As examples, we give a brief
discussion on Bosee-Einstein, Fermi-Dirac, and Gentile statistics as examples. The
canonical partition functions for ideal Bose, Fermi, and Gentile gases are given in the previous
work \cite{zhou2018canonical}:%
\begin{align}
Z_{B}\left( N,\beta \right)& =m_{\left( N\right) }\left( e^{-\beta
\varepsilon _{1}},e^{-\beta \varepsilon _{2}},\ldots \right) +\sum_{\lambda
_{I,1}<N}m_{\left( \lambda \right) }\left( e^{-\beta \varepsilon
_{1}},e^{-\beta \varepsilon _{2}},\ldots \right)  ,\label{maximum8}\\
Z_{F}\left( N,\beta \right)& =m_{\left( 1^{N}\right) }\left( e^{-\beta
\varepsilon _{1}},e^{-\beta \varepsilon _{2}},\ldots \right) ,\label{maximum9}\\
Z_{q}\left( N,\beta \right)& =m_{\left( \lambda \right) ^{q}}\left( e^{-\beta
\varepsilon _{1}},e^{-\beta \varepsilon _{2}},\ldots \right) +\sum_{\lambda
_{I,1}\leq q}m_{\left( \lambda \right) }\left( e^{-\beta \varepsilon
_{1}},e^{-\beta \varepsilon _{2}},\ldots \right) . \label{maximum10}%
\end{align}
From Eqs. (\ref{maximum8}), (\ref{maximum9}), and (\ref{maximum10}), we can
directly obtain the maximum occupation number: for Fermi cases, the maximum occupation number is $1$; for the Gentile cases, the maximum occupation number is $q$;
for Bose cases, there is no limitation on the maximum occupation number.

\subsection{The failure of distinguishing different statistics by the maximum
occupation number}

It is commonly accepted that various statistics are distinguished by the maximum occupation number.
In this section, we show that the maximum occupation number is not
sufficient to distinguish different kinds of statistics. It is the maximum
occupation number together with the coefficient of the monomial symmetric
polynomial $m_{\left(  \lambda\right)  }\left(  e^{-\beta\varepsilon_{1}%
},e^{-\beta\varepsilon_{2}},\ldots\right)  $ in Eqs. (\ref{maximum1}) 
determines the kind of statistics. 
A discussion of para-Fermi statistics and Gentile statistics is given as examples.

\begin{corollary}
\label{coro2} (1) The maximum occupation number is not sufficient to
distinguish different statistics. The nonnegative-integers
coefficient $M^{I}$ in Eq. (\ref{maximum1}) distinguishes statistics with the
same maximum occupation number $q$. 

(2) The microstate where there 
are $\lambda_{I,1}$ particles occupying a
quantum state, $\lambda_{I,2}$ particles occupying another quantum state,
and so on, will be counted $M^{I}$ times in the number of the microstate
$\omega\left(  E,N\right)  $.
\end{corollary}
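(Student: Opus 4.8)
The plan is to read both parts directly off the monomial expansion (\ref{maximum1}) together with the occupation-number interpretation already established in the proof of Theorem \ref{theorem2}. I would prove part (2) first, since part (1) is essentially a non-uniqueness corollary of it.

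For part (2), I start by equating the thermodynamic expansion (\ref{maximum2}) with the assumed form (\ref{maximum1}), giving
\begin{equation}
\sum_{E}\omega\left(  E,N\right)  e^{-\beta E}=\sum_{I}M^{I}m_{\left(  \lambda\right)  _{I}}\left(  e^{-\beta\varepsilon_{1}},e^{-\beta\varepsilon_{2}},\ldots\right)  .
\end{equation}
Expanding each monomial through (\ref{maximum3}), every term $e^{-\beta\left(  \lambda_{I,1}\varepsilon_{i_{1}}+\lambda_{I,2}\varepsilon_{i_{2}}+\cdots\right)  }$ corresponds to exactly one microstate configuration in which the single-particle levels $\varepsilon_{i_{1}},\varepsilon_{i_{2}},\ldots$ carry occupations $\lambda_{I,1},\lambda_{I,2},\ldots$. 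Introducing the splitting $\omega\left(  E,N\right)  =\sum_{I}\omega_{I}\left(  E,N\right)$ of (\ref{maximum6}) and matching the piece $\sum_{E}\omega_{I}\left(  E,N\right)  e^{-\beta E}=M^{I}m_{\left(  \lambda\right)  _{I}}$, I conclude that each configuration of type $\left(  \lambda\right)  _{I}$ contributes $M^{I}$ to $\omega\left(  E,N\right)$ rather than the single unit a bare monomial would contribute. This is exactly statement (2): the occupation pattern $\left(  \lambda\right)  _{I}$ is counted $M^{I}$ times.

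For part (1), I would invoke Theorem \ref{theorem2}, which identifies the maximum occupation number with the largest first part $\lambda_{I,1}$ among the monomials carrying a nonzero coefficient, namely $q$. The decisive observation is that this value is insensitive to the coefficients $M^{I}$ of every monomial with $\lambda_{I,1}\leq q$: one may alter these coefficients (keeping them nonnegative integers and introducing no monomial with first part exceeding $q$) while leaving $q$ fixed as the maximum. Two partition functions that share the leading index $q$ but differ in some $M^{I}$ then carry the same maximum occupation number while, by part (2), assigning different microstate multiplicities and hence describing genuinely different statistics. To make this concrete rather than merely abstract, I would point to the explicit pair worked out in Section 6 — para-Fermi statistics of order $q$ versus Gentile statistics with maximum occupation $q$ — which agree on $q$ but differ in their coefficient patterns $\left\{  M^{I}\right\}$.

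The main obstacle is not the algebra but the legitimacy of the non-uniqueness step in part (1): I must ensure that a coefficient pattern $\left\{  M^{I}\right\}$ with fixed leading index $q$ really corresponds to an admissible physical statistics, i.e., to a bona fide Hilbert subspace $D$ of the form guaranteed by Theorem \ref{theorem1}. This demands that the $m$-expansion coefficients $M^{I}$ be compatible, through the Kostka relation (\ref{smgx}), with some nonnegative-integer $s$-expansion coefficients $C_{I}$; an arbitrary choice of $\left\{  M^{I}\right\}$ need not arise from any valid $\left\{  C_{I}\right\}$. The cleanest resolution is to rest part (1) on the concrete examples of Section 6, whose admissibility is checked directly, rather than on an unconstrained coefficient count.
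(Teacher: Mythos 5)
Your proposal is correct and takes essentially the same route as the paper: the paper declares the proof of this corollary to be embedded in the proof of Theorem \ref{theorem2} (the counting interpretation of the coefficients $M^{I}$ that you re-derive in part (2)), and then establishes part (1) by exhibiting exactly the pair you cite, $Z_{2}^{PF}\left(\beta,5\right)=m_{\left(2^{2},1\right)}+3m_{\left(2,1^{3}\right)}+10m_{\left(1^{5}\right)}$ versus $Z_{2}^{G}\left(\beta,5\right)=m_{\left(2^{2},1\right)}+m_{\left(2,1^{3}\right)}+m_{\left(1^{5}\right)}$. Your closing admissibility caveat is reasonable but stricter than the paper requires, since the paper admits statistics whose $s$-expansion coefficients are not nonnegative (Gentile statistics itself, by Theorem \ref{theorem111}); resting part (1) on the concrete examples is precisely what the paper does.
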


The proof of Corollary. (\ref{coro2}) is embedded in the proof of Theorem (\ref{theorem2}).
We give examples to illustrate Corollary. (\ref{coro2}).
 
\textit{Example: Para-Fermi and Gentile statistics}. For the sake of clarity, we give
explicit expressions of the canonical partition function of para-Fermi
statistics and Gentile statistics as examples.
The detail of the calculation is given in the following section. For example, the canonical
partition function for para-Fermi statistics with parameter $q=2$ is%
\begin{equation}
Z_{2}^{PF}\left(  \beta,5\right)  =m_{\left(  2^{2},1\right)  }+3m_{\left(
2,1^{3}\right)  }+10m_{\left(  1^{5}\right)  } \label{maximum13}%
\end{equation}
and the canonical partition function for Gentile statistics of maximum
occupation number $q=2$ is
\begin{equation}
Z_{2}^{G}\left(  \beta,5\right)  =m_{\left(  2^{2},1\right)  }+m_{\left(
2,1^{3}\right)  }+m_{\left(  1^{5}\right)  }, \label{maximum14}%
\end{equation}
where we denote $m_{\left(  \lambda\right)  }=m_{\left(  \lambda\right)
}\left(  e^{-\beta\varepsilon_{1}},e^{-\beta\varepsilon_{2}},...\right)  $ for
convenience. Eqs. (\ref{maximum13}) and (\ref{maximum14}) show the difference between the
para-Fermi statistics and Gentile statistics: although the maximum
occupation numbers both are $2$, the weight for microstate corresponding to
the integer partition $\left(  2,1^{3}\right)  $ and $\left(  1^{5}\right)  $
are different. The microstate with less number of particles occupying the same
states has larger weight in para-Fermi statistics. This is because
exchanging two different-state particles that obey para-Fermi statistics
will lead to a new microstate.

\section{The permutation phase of the wave function for various kinds of statistics}

Unlike that in Bose and Fermi statistics, the permutation phase of the wave 
function of some kinds of generalized quantum statistics with a given maximum occupation 
number is unknown. For example, the permutation phase of the wave function corresponding 
to Gentile statistics which 
is determined by a maximum occupation number is obscure.
 
In this section, we show that for generalized quantum statistics 
with the Hamiltonian invariant 
under permutations, the permutation phase
is generalized to a matrix, rather than a number. 
We provide a method to obtain the permutation phase of the wave function. 
The permutation phase of Bose or Fermi wave function, $1$ or $-1$, is 
regarded as $1\times1$ matrices, as special cases in the scheme. 

We also point out that there is generalized quantum statistics with 
the Hamiltonian nonnvariant 
under permutations. 
For such statistics, 
the permutation phase of the wave function can not be constructed. 
We provide a method to distinguish them.

\subsection{The statistics with the Hamiltonian invariant under permutations: the permutation phase}
In this section, for various kinds of statistics with the Hamiltonian invariant under permutations,
we provide a method to obatain the permutation phase from the canonical partition function.
\begin{theorem}
\label{theorem11} If the Hamiltonian is invariant under permutations, i.e., 
$\left[  H,S_{N}\right]=0$, then the wave function after exchanging two particles $\sigma_{ij}|\Phi\rangle$ satisfies
\begin{equation}
\sigma_{ij}|\Phi\rangle=D\left(  \sigma_{ij}\right)  |\Phi\rangle,
\label{5302}%
\end{equation}
where $\sigma_{ij}|\Phi\rangle$ represents exchanging the $ith$ particle and the $jth$ particle 
in the
wave function $|\Phi\rangle$. $D\left(  \sigma_{ij}\right) $ here is the permutation phase,
\begin{equation}
D\left(  \sigma_{ij}\right)  =%
%TCIMACRO{\dbigoplus \limits_{I=1}^{P\left(  N\right)  }}%
%BeginExpansion
{\displaystyle\bigoplus\limits_{I=1}^{P\left(  N\right)  }}
%EndExpansion
\left[  D^{I}\left(  \sigma_{ij}\right)  \right]  ^{\oplus C_{I}}, \label{pro6}%
\end{equation}
where $D^{I}$ is the inequivalent and irreducible representation of $S_{N}$ corresponding to the integer
partition $\left(  \lambda\right)_{I}$, and $D^{I}$ occurs $C_{I}$ times 
with $C_{I}$ given in Eq.(\ref{5301}).
\end{theorem}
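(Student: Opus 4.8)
The plan is to read Eq.~(\ref{5302}) as the assertion that, on the physical subspace $D$ identified in Theorem~(\ref{theorem1}), the transposition $\sigma_{ij}$ is represented by a definite linear operator $D(\sigma_{ij})$, and then to show that this operator carries exactly the block structure of Eq.~(\ref{pro6}). The only genuinely new ingredient beyond Theorem~(\ref{theorem1}) is the hypothesis $[H,S_N]=0$, so I would start there.

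First I would use $[H,S_N]=0$ to show that the physical subspace $D$ is invariant under $S_N$. Since $e^{-\beta H_N}=(e^{-\beta H})^{\otimes N}$ and, by Eqs.~(\ref{pro1}) and~(\ref{pro2}), every $\sigma\in S_N$ commutes with $(e^{-\beta H})^{\otimes N}$ on $V^{\otimes N}$, the permutation operators commute with the Boltzmann operator and hence map each energy eigenspace into itself. Consequently the subspace $D$ on which the $N$-particle system lives is carried into itself by every $\sigma_{ij}$, so that $\sigma_{ij}|\Phi\rangle$ again lies in $D$ for every $|\Phi\rangle\in D$, and the map $\sigma_{ij}\mapsto D(\sigma_{ij})$ is a well-defined representation of $S_N$ on $D$.

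Next I would feed in the decomposition of Theorem~(\ref{theorem1}), in which $D$ is written as the $S_N$-module
\[
D=\bigoplus_{I=1}^{P(N)}\left(V^{I}\right)^{\oplus C_{I}},
\]
each $V^{I}$ carrying the inequivalent irreducible representation $D^{I}$ of $S_N$ attached to the partition $(\lambda)_I$, with multiplicity $C_I$ equal to the nonnegative integer that weights the S-function $s_{(\lambda)_I}$ in the canonical partition function~(\ref{5301}). Because each summand $V^{I}$ is an $S_N$-invariant subspace on which $\sigma_{ij}$ acts by $D^{I}(\sigma_{ij})$, and because $V^I$ appears $C_I$ times, the restriction of the representation to the $I$th isotypic block is $[D^{I}(\sigma_{ij})]^{\oplus C_I}$. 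Assembling the blocks over all $I$ yields $D(\sigma_{ij})=\bigoplus_{I}[D^{I}(\sigma_{ij})]^{\oplus C_I}$, which is Eq.~(\ref{pro6}); applying this operator to an arbitrary $|\Phi\rangle\in D$ then gives Eq.~(\ref{5302}).

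The step I expect to carry the real content is the identification of the multiplicity $C_I$ of the irrep $D^{I}$ in the permutation representation with the same $C_I$ that weights the S-function in the partition function---this is precisely what turns Eq.~(\ref{pro6}) into a bridge between the statistical and the group-theoretic descriptions. It is legitimate because the subspaces $V^I$ of Theorem~(\ref{theorem1}) are defined through the Schur--Weyl duality as genuine $S_N$-irreducible components, so the Hilbert-space decomposition and the partition-function expansion are governed by one and the same set of coefficients $C_I$; the main care needed is to keep the decomposition throughout as one of $S_N$-modules rather than merely of vector spaces. I would close by recording the elementary special cases: when $D$ reduces to a single copy of the partition $(N)$ or $(1^N)$, the block $D^{I}(\sigma_{ij})$ is the $1\times1$ matrix $+1$ or $-1$, so the matrix phase specialises to the scalar Bose and Fermi phases $e^{i0}$ and $e^{i\pi}$.
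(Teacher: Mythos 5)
Your proposal is correct and follows essentially the same route as the paper: use the hypothesis $[H,S_N]=0$ to conclude that the physical subspace $D$ carries a representation of $S_N$, invoke the decomposition $D=\bigoplus_{I}\left(V^{I}\right)^{\oplus C_{I}}$ from Theorem~(\ref{theorem1}) with the same coefficients $C_I$ appearing in Eq.~(\ref{5301}), and read off the block-diagonal form~(\ref{pro6}) of $\sigma_{ij}$ on the isotypic components. The only difference is that you spell out the $S_N$-invariance of $D$ via commutation with $e^{-\beta H_N}$, a step the paper simply asserts, so your write-up is if anything slightly more detailed than the original.
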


\begin{proof}
Since the Hamiltonian of the system is 
invariant under permutations. The wave function $|\Phi\rangle$ forms a base 
of the space $D$ that carries the representation of $S_{N}$.
By using Theorem (\ref{theorem1}), one can find that if 
the canonical partition function is written in the form Eq. (\ref{5301}),
then the space $D$ can be decomposed into subspace $V^{I}$ that carries the equivalent
and irreducible representation of $S_{N}$, and $V^{I}$ occurs $C_{I}$ times with $C_{I}$
given in Eq.(\ref{5301}). Thus, Eq. (\ref{pro6}) holds.
\end{proof}

It is obvious that the permutation phase given in Eq. (\ref{pro6}) is a matrix. 
For Bose and Fermi cases, the permutation phase,
$D\left(  \sigma_{ij}\right) $, recovers $\pm1$.

\subsection{The statistics with the Hamiltonian noninvariant under permutations}

A physics system is described by a Hamiltonian. The canonical partition 
function which is obtained by taking statistical averages, however, 
does not contain all the information of the physics system. 
Consequently, the invariance of the Hamiltonian under permutations 
ensures the invariance of the canonical partition function under 
permutations, but not vice versa.

In this section, we provide a method to distinguish statistics with
the Hamiltonian noninvariant under permutations. 
For such statistics, the permutation phase can not be constructed in the scheme.

\begin{theorem}
\label{theorem111} If the canonical partition function of an ideal gas consisting of $N$%
-identical particles, $Z\left(  \beta,N\right)  $, can not be written in
the form Eq. (\ref{5301}) with non-negative-integer coefficient, then the Hamiltonian 
is noninvariant under permutations, 
i.e., $\left[  H,S_{N}\right]  \neq0$.
\end{theorem}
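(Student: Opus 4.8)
The plan is to prove the contrapositive: rather than argue directly that non-representability forces $[H,S_N]\neq 0$, I would show that the commutation relation $[H,S_N]=0$ \emph{forces} $Z(\beta,N)$ into the form of Eq.~(\ref{5301}) with nonnegative-integer coefficients, and then negate both sides. The forward implication is essentially a restatement of Theorem~\ref{theorem1} together with one structural observation, so the real work consists in justifying that observation carefully.

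First I would note that when $[H,S_N]=0$ the operator $e^{-\beta H_{N}}=\left(e^{-\beta H}\right)^{\otimes N}$ commutes with every $\sigma\in S_{N}$ acting on $V^{\otimes N}$ as in Eq.~(\ref{pro2}). The physical Hilbert subspace $D\subseteq V^{\otimes N}$ describing the $N$-identical-particle system is selected by the symmetry of the dynamics; since the dynamics commute with the permutation action, $D$ is an invariant subspace under $S_{N}$. Because $S_{N}$ is a finite group, its representation on $D$ is completely reducible, so $D$ decomposes as a direct sum of the inequivalent irreducible representations $V^{I}$ of $S_{N}$ with multiplicities $C_{I}$ that are necessarily nonnegative integers,
\begin{equation}
D=\bigoplus_{I=1}^{P\left(  N\right)  }\left(  V^{I}\right)^{\oplus C_{I}},\qquad C_{I}\in\mathbb{Z}_{\geq 0}.
\end{equation}

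With this decomposition in hand, Theorem~\ref{theorem1} applies verbatim: taking the trace of $e^{-\beta H_{N}}$ over $D$ and using Eq.~(\ref{pro4}), which identifies the trace over each irreducible component $V^{I}$ with the S-function $s_{\left(  \lambda\right)_{I}}$, reproduces exactly Eq.~(\ref{5301}) with the same nonnegative integers $C_{I}$. Thus $[H,S_{N}]=0$ implies that $Z(\beta,N)$ admits an expansion of the form Eq.~(\ref{5301}) with nonnegative-integer coefficients, and the contrapositive of this implication is precisely the assertion of the theorem.

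The step I expect to be the main obstacle is the conceptual one: verifying that $[H,S_{N}]=0$ genuinely guarantees that the physically realized subspace $D$ is $S_{N}$-invariant, rather than merely a subspace on which $e^{-\beta H_{N}}$ happens to act. One must argue that the selection of $D$ is dictated by the commuting symmetry, so that no $S_{N}$-orbit of a physical state can be partially included or excluded; once this invariance is granted, the complete reducibility of finite-group representations does the rest, and the integrality of the $C_{I}$ is automatic. The remainder is bookkeeping already carried out in Theorem~\ref{theorem1}.
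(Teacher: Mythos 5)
Your proposal is correct and takes essentially the same route as the paper: the paper's own proof consists of the single remark that Theorem~\ref{theorem111} is the contrapositive (the ``reverse proposition'') of Theorem~\ref{theorem1}, which is precisely the implication $\left[H,S_{N}\right]=0\Rightarrow$ Eq.~(\ref{5301}) with nonnegative-integer $C_{I}$ that you establish and then negate. The additional detail you supply---that $\left[H,S_{N}\right]=0$ makes $D$ an $S_{N}$-invariant subspace, so complete reducibility of finite-group representations yields the decomposition $D=\bigoplus_{I}\left(V^{I}\right)^{\oplus C_{I}}$ with $C_{I}\in\mathbb{Z}_{\geq0}$---is exactly the content already embedded in the paper's proof of Theorem~\ref{theorem1}, so your write-up is a more explicit version of the same argument rather than a different one.
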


\begin{proof}
Theorem (\ref{theorem111}) is the reverse proposition of Theorem (\ref{theorem1}). 
Thus Theorem (\ref{theorem111}) holds.
\end{proof}

Examples such as Gentile statistics will be given in the following section.

\section{A unified framework}

For Bose statistics, there is no limitation on the maximum occupation and
the wave function is symmetric with the permutation phase $e^{i0}=1$.
For Fermi statistics, the maximum occupation number is $1$ and 
the wave function is anti-symmetric with the permutation phase $e^{i\pi}=-1$.
The Hilbert subspace 
describing Bose statistics is the symmetric 
subspace, as shown in Fig.
(\ref{Bose}). The Hilbert subspace describing Fermi statistics 
is the anti-symmetric subspace, as shown in Fig. (\ref{Fermi}). 

In this section, as examples, 
we describe a series of generalized quantum statistics, such
as parastatistics \cite{green1953generalized,ohnuki1982quantum}, the
intermediate statistics or Gentile statistics
\cite{gentile1940itosservazioni,dai2004gentile}, Gentileonic statistics
\cite{cattani1984general}, and the immannons \cite{tichy2017extending}, in a
unified framework. For these kinds of statistics,
the canonical partition function, the maximum number, and the
permutation phase of the wave function 
are given. Especially, three new generalized quantum statistics, 
which seem to be the missing pieces in the puzzle are proposed. 

The Hilbert subspace helps to illustrate the difference 
between generalized quantum statistics intuitively. Therefore, we also
give the Hilbert subspace for these kinds of statistics.

\subsection{The $N$-distinguishable-particle gas system: Boltzmann statistics}

The system consisting of distinguishable particles obeys Boltzmann statistics. 
The canonical partition function for an ideal $N$-distinguishable-particle gas is
\cite{reichl2009modern,pathria2011statistical}
\begin{equation}
Z_{cl}\left(  \beta,N\right)  =\left(  \sum_{i}e^{-\beta\varepsilon_{i}%
}\right)  ^{N}.\label{uni1}%
\end{equation}
In this section, by discussing the $N$-distinguishable-particle gas in the scheme, 
we suggest an unconventional perspective to study the Hilbert subspace and the maximum
occupation number of the system. The distinguishablility of
the particle of Boltzmann statistics appears automatically.

\subsubsection{The Hilbert subspace}
One can verify that Eq. (\ref{uni1}) can be expressed as a linear combination of
the S-function:%
\begin{equation}
Z_{cl}\left(  \beta,N\right)  =\left(  \sum_{i}e^{-\beta\varepsilon_{i}%
}\right)  ^{N}=\sum_{I=1}^{P\left(  N\right)  }f_{I}s_{\left(  \lambda\right)
_{I}}\left(  e^{-\beta\varepsilon_{1}},e^{-\beta\varepsilon_{2}},...\right)
,\label{uni2}%
\end{equation}
where the coefficient $f_{I}$ is defined in Eq. (\ref{pro20}). By using
Theorem (\ref{theorem1}), Eq.
(\ref{uni2}) implies that the Hilbert space describing the $N$%
-distinguishable-particle system is
\begin{equation}
D=%
%TCIMACRO{\dbigoplus \limits_{I=1}^{P\left(  N\right)  }}%
%BeginExpansion
{\displaystyle\bigoplus\limits_{I=1}^{P\left(  N\right)  }}
%EndExpansion
\left(  V^{\prime I}\right)  ^{\oplus f_{I}}=%
%TCIMACRO{\dbigoplus \limits_{I=1}^{P\left(  N\right)  }}%
%BeginExpansion
{\displaystyle\bigoplus\limits_{I=1}^{P\left(  N\right)  }}
%EndExpansion
V^{I}=V^{\otimes N},
\end{equation}
where we use the fact that $V^{\prime I}$ occurs $f_{I}$ times in $V^{I}$,
i.e., $V^{I}=\left(  V^{\prime I}\right)  ^{\oplus f_{I}}$. Therefore, as shown
in Fig. (\ref{classical}), the  Hilbert space $V^{\otimes N}$ describes an
$N$-distinguishable-particle gas system.

\begin{figure}[H]
\centering
\includegraphics[width=0.8\textwidth]{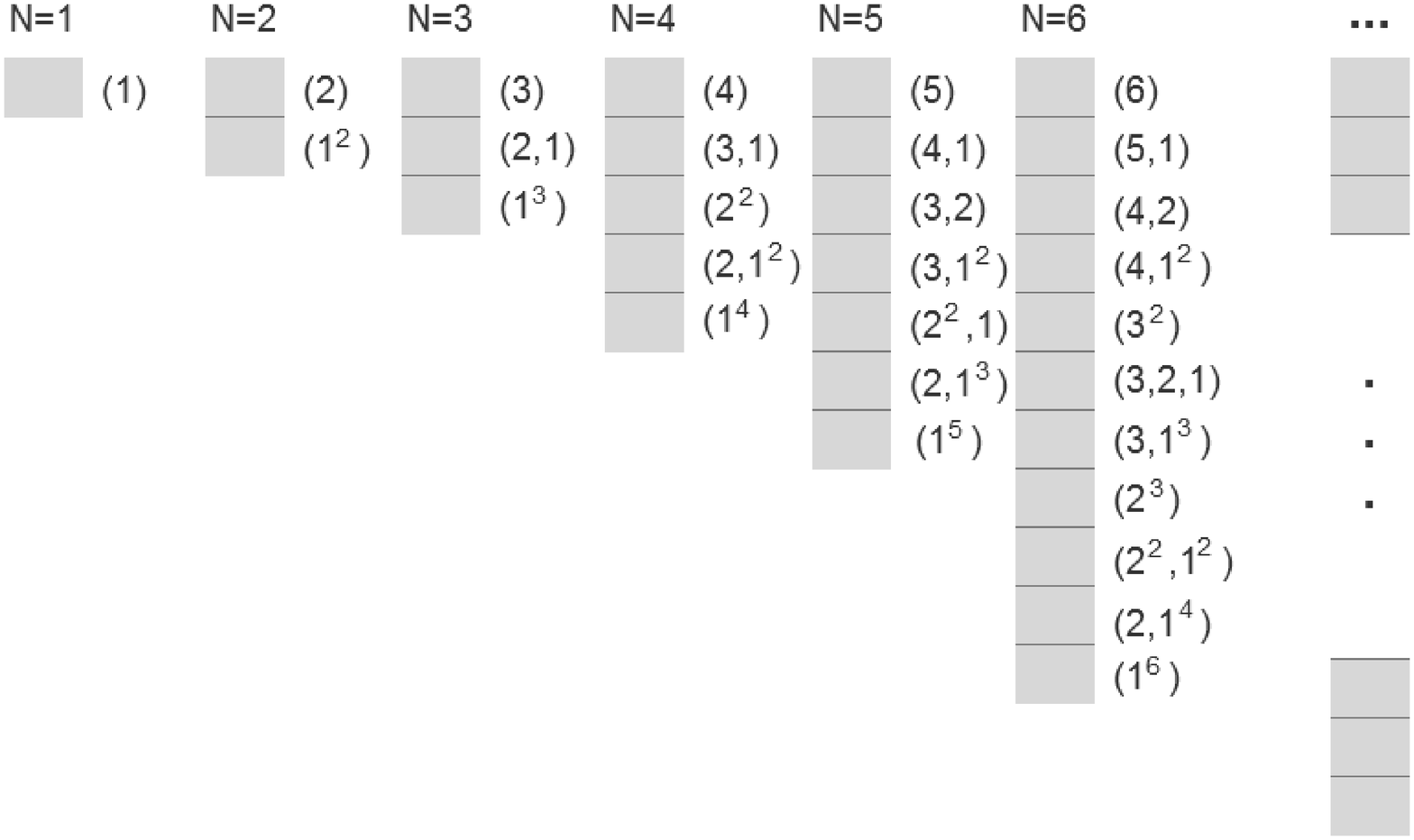}
\caption{The space $V^{\otimes N}$ describing the $N$-distinguishable-particle gas system. Each cube represents a subspace corresponding to an integer partition.}
\label{classical}
\end{figure}

\begin{figure}[H]
\centering
\includegraphics[width=0.8\textwidth]{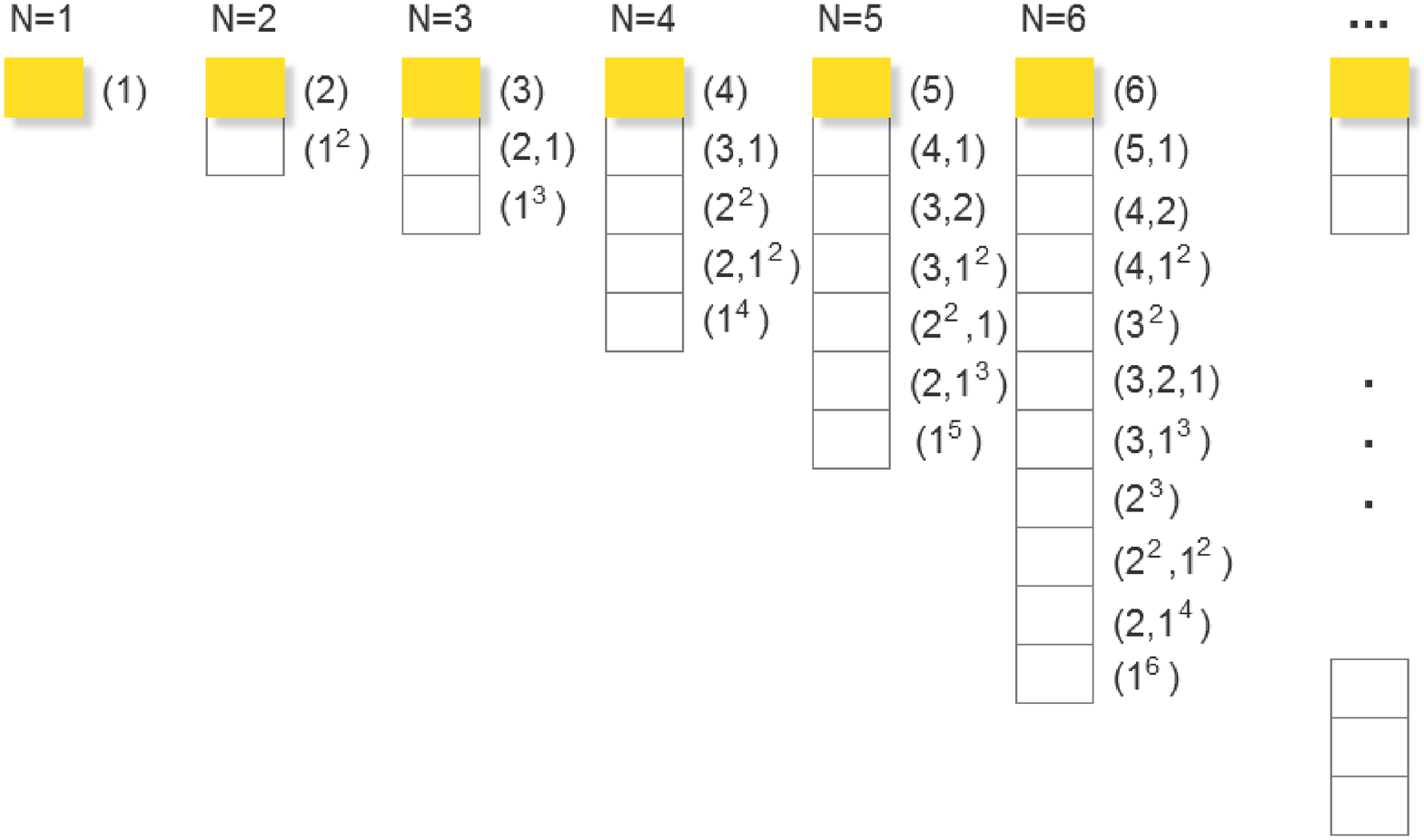}
\caption{The Hilbert subspace describing Bose-Einstein statistics.}
\label{Bose}
\end{figure}

\begin{figure}[H]
\centering
\includegraphics[width=0.8\textwidth]{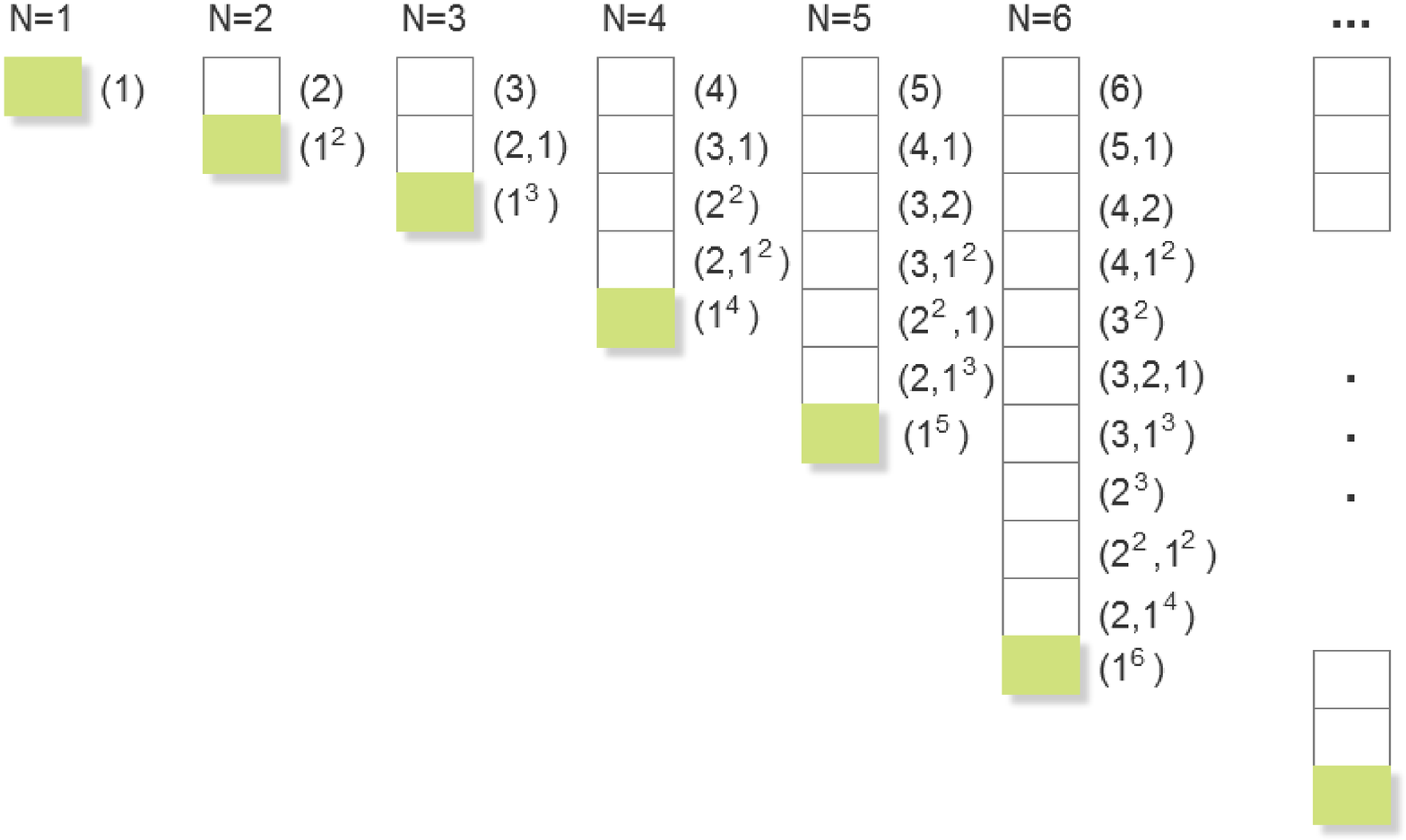}
\caption{The Hilbert subspace describing Fermi-Dirac statistics.}
\label{Fermi}
\end{figure}

\subsubsection{The maximum occupation number and the distinguishablility of
the particle.}

One can verify that Eq. (\ref{uni1}) can be expressed as a linear combination of
the monomial symmetric polynomial $m_{\left(  \lambda\right)  }\left(
e^{-\beta\varepsilon_{1}},e^{-\beta\varepsilon_{2}},\ldots\right)  $:%
\begin{equation}
Z_{cl}\left(  \beta,N\right)  =\left(  \sum_{i}e^{-\beta\varepsilon_{i}%
}\right)  ^{N}=\sum_{I=1}^{P\left(  N\right)  }z_{I}m_{\left(  \lambda\right)
_{I}}\left(  e^{-\beta\varepsilon_{1}},e^{-\beta\varepsilon_{2}},...\right)  ,
\label{dis0}%
\end{equation}
where $z_{I}$ satisfies%

\begin{equation}
z_{I}=N!\left(
%TCIMACRO{\dprod \limits_{j=1}^{N}}%
%BeginExpansion
{\displaystyle\prod\limits_{j=1}^{N}}
%EndExpansion
\lambda_{I,j}!\right)  ^{-1}.\label{dis1}%
\end{equation}

By using Theorem (\ref{theorem2}), one can find that there is no limitation on the maximum
occupation number for an $N$-distinguishable-particle gas system, since the first term in Eqs. (\ref{dis0}) is $m_{\left(  N\right)  }\left(
e^{-\beta\varepsilon_{1}},e^{-\beta\varepsilon_{2}},...\right)  $.

A direct manifestation of the distinguishablility of
the particle is given: the coefficient $z_{I}$ appears because the number of microstates with
$\lambda_{I,1}$ distinguishable particles occupying a quantum state,
$\lambda_{I,2}$ distinguishable particles occupying another quantum state,
and so on, is exactly $z_{I}$.

\subsection{Gentile statistics}

Gentile statistics is a generalization of Bose-Einstein and Fermi-Dirac statistics. The
maximum occupation number of Gentile statistics is an integer $q$
\cite{gentile1940itosservazioni,dai2004gentile,maslov2017relationship}.

The canonical partition function of an ideal Gentile gas is
\cite{zhou2018canonical}
\begin{equation}
Z_{q}^{G}\left(  \beta,N\right)  =\sum_{I=1}^{P\left(  N\right)  \ }Q^{I}\left(
q\right)  s_{\left(  \lambda\right)  _{I}}\left(  e^{-\beta\varepsilon_{1}%
},e^{-\beta\varepsilon_{2}},...\right)  , \label{gen1}%
\end{equation}
where the coefficient $Q^{I}\left(  q\right)  $ is given as
\begin{equation}
\sum_{J=1}^{P\left(  N\right)  }\Gamma^{J}\left(  q\right)  \left(  k_{J}%
^{I}\right)  ^{-1}=Q^{J}\left(  q\right)  . \label{gen2}%
\end{equation}
$\left(  k_{J}^{I}\right)  ^{-1}$ satisfies $\left(  k_{J}^{I}\right)
^{-1}k_{I}^{L}=\delta_{J}^{L}$ and $\Gamma^{J}\left(  q\right)  $ satisfies
\begin{equation}
\Gamma^{J}\left(  q\right)  =\left\{
\begin{array}
[c]{c}%
1,\text{if }\lambda_{I,1}\leqslant q\\
0\text{, otherwise}%
\end{array}
\right.  . \label{gen3}%
\end{equation}

In this section, we give a discussion on Gentile statistics in the scheme.
Especially, we show that the Hamiltonian of a Gentile-statistics system is noninvariant
under permutations.

\subsubsection{Discussions on the permutation phase of the wave function}

By using Theorem (\ref{theorem111}), one can find that, for Gentile statistics, 
the Hamiltonian is noninvariant under permutations. 
The permutation phase can not be constructed in the scheme.
This can be verified, from Eq. (\ref{gen1}), that the coefficient, $Q^{I}\left(  q\right)  $, is not
nonnegative. As examples, we list the explicit expression of the canonical partition
function of $N=3$, $4$, $5$, and $6$ with various maximum occupation
numbers $q$.

$N=3$,
\begin{equation}
Z_{2}^{G}\left(  \beta,3\right)  =s_{\left(  2,1\right)  }-s_{\left(
1^{3}\right)  }, \label{gen4}%
\end{equation}
where we denote $s_{\left(  \lambda\right)  }=s_{\left(  \lambda\right)
}\left(  e^{-\beta\varepsilon_{1}},e^{-\beta\varepsilon_{2}},...\right)  $ for convenience.

$N=4$,%
\begin{equation}
Z_{2}^{G}\left(  \beta,4\right)  =s_{\left(  2^{2}\right)  }-s_{\left(
1^{4}\right)  },
\end{equation}%
\begin{equation}
Z_{3}^{G}\left(  \beta,4\right)  =s_{\left(  3,1\right)  }-s_{\left(
2,1^{2}\right)  }+s_{\left(  1^{4}\right)  }.
\end{equation}

$N=5$,%
\begin{equation}
Z_{2}^{G}\left(  \beta,5\right)  =s_{\left(  2^{2},1\right)  }-s_{\left(
2,1^{3}\right)  },
\end{equation}%
\begin{equation}
Z_{3}^{G}\left(  \beta,5\right)  =s_{\left(  3,2\right)  }-s_{\left(
2^{2},1\right)  }+s_{\left(  1^{5}\right)  },
\end{equation}%
\begin{equation}
Z_{4}^{G}\left(  \beta,5\right)  =s_{\left(  4,1\right)  }-s_{\left(
3,1^{2}\right)  }+s_{\left(  2,1^{3}\right)  }-s_{\left(  1^{5}\right)  }.
\end{equation}

$N=6$,%
\begin{equation}
Z_{2}^{G}\left(  \beta,6\right)  =s_{\left(  2^{3}\right)  }-s_{\left(
2,1^{4}\right)  }+s_{\left(  1^{6}\right)  },
\end{equation}%
\begin{equation}
Z_{3}^{G}\left(  \beta,6\right)  =s_{\left(  3,3\right)  }-s_{\left(  2^{2}%
,1^{2}\right)  }+s_{\left(  2,1^{4}\right)  },
\end{equation}%
\begin{equation}
Z_{4}^{G}\left(  \beta,6\right)  =s_{\left(  4,2\right)  }-s_{\left(
3,2,1\right)  }+s_{\left(  2^{2},1^{2}\right)  }-s_{\left(  1^{6}\right)  },
\end{equation}%
\begin{equation}
Z_{5}^{G}\left(  \beta,6\right)  =s_{\left(  5,1\right)  }-s_{\left(
4,1^{2}\right)  }+s_{\left(  3,1^{3}\right)  }-s_{\left(  2,1^{4}\right)
}+s_{\left(  1^{6}\right)  }. \label{gen6}%
\end{equation}

The Hilbert subspace describing Gentile statistics is the subspaces with
nonzero coefficient $Q^{I}\left(  q\right)  $, as shown in Figs.
(\ref{gentile2})-(\ref{gentile4}). 
One can find that the subspace becomes complicated as $N$ increases.

\begin{figure}[H]
\centering
\includegraphics[width=0.8\textwidth]{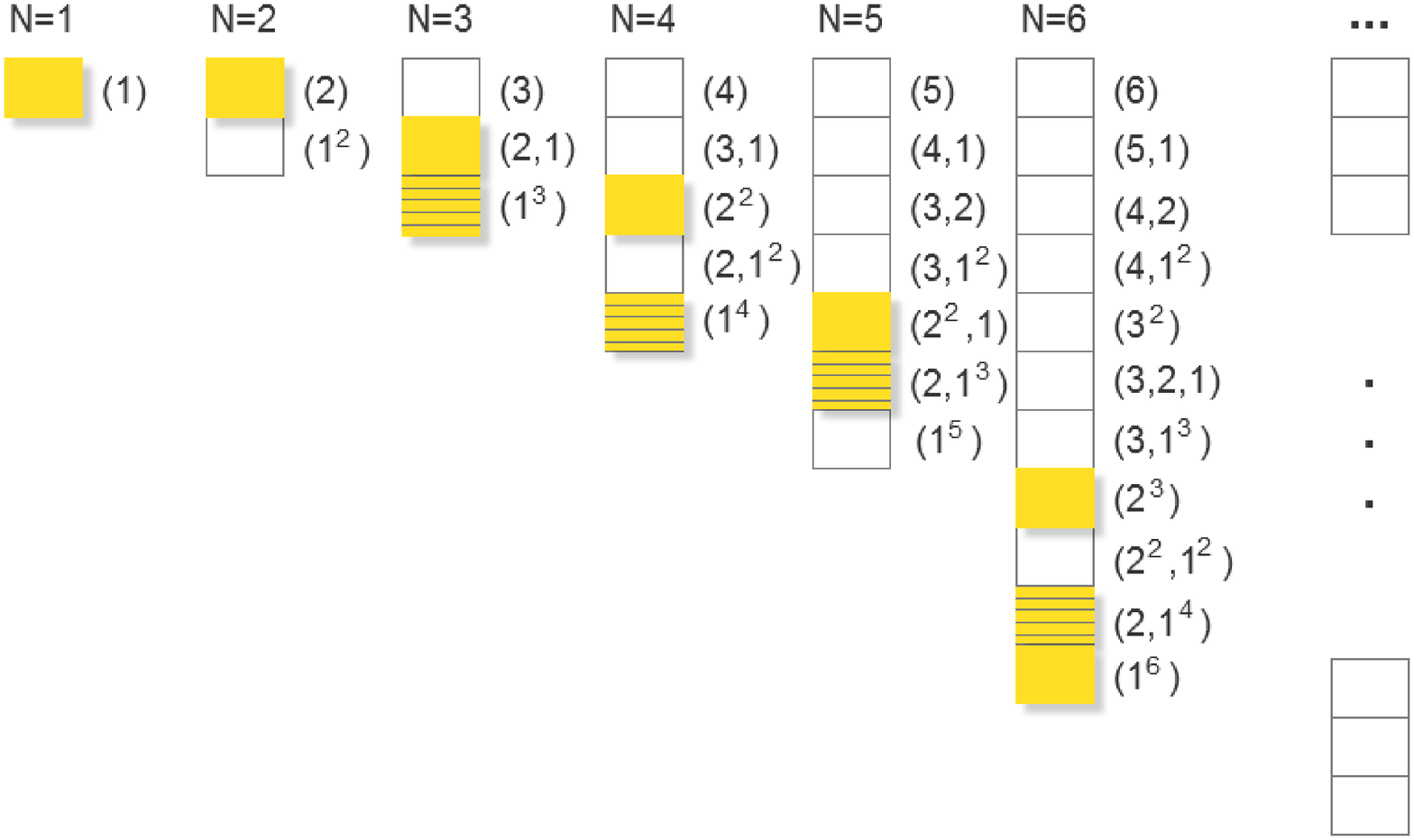}
\caption{The Hilbert subspace describing Gentile statistics with maximum occupation number q=2. The cube with horizontal lines means that the corresponding coefficient is negative.}
\label{gentile2}
\end{figure}

\begin{figure}[H]
\centering
\includegraphics[width=0.8\textwidth]{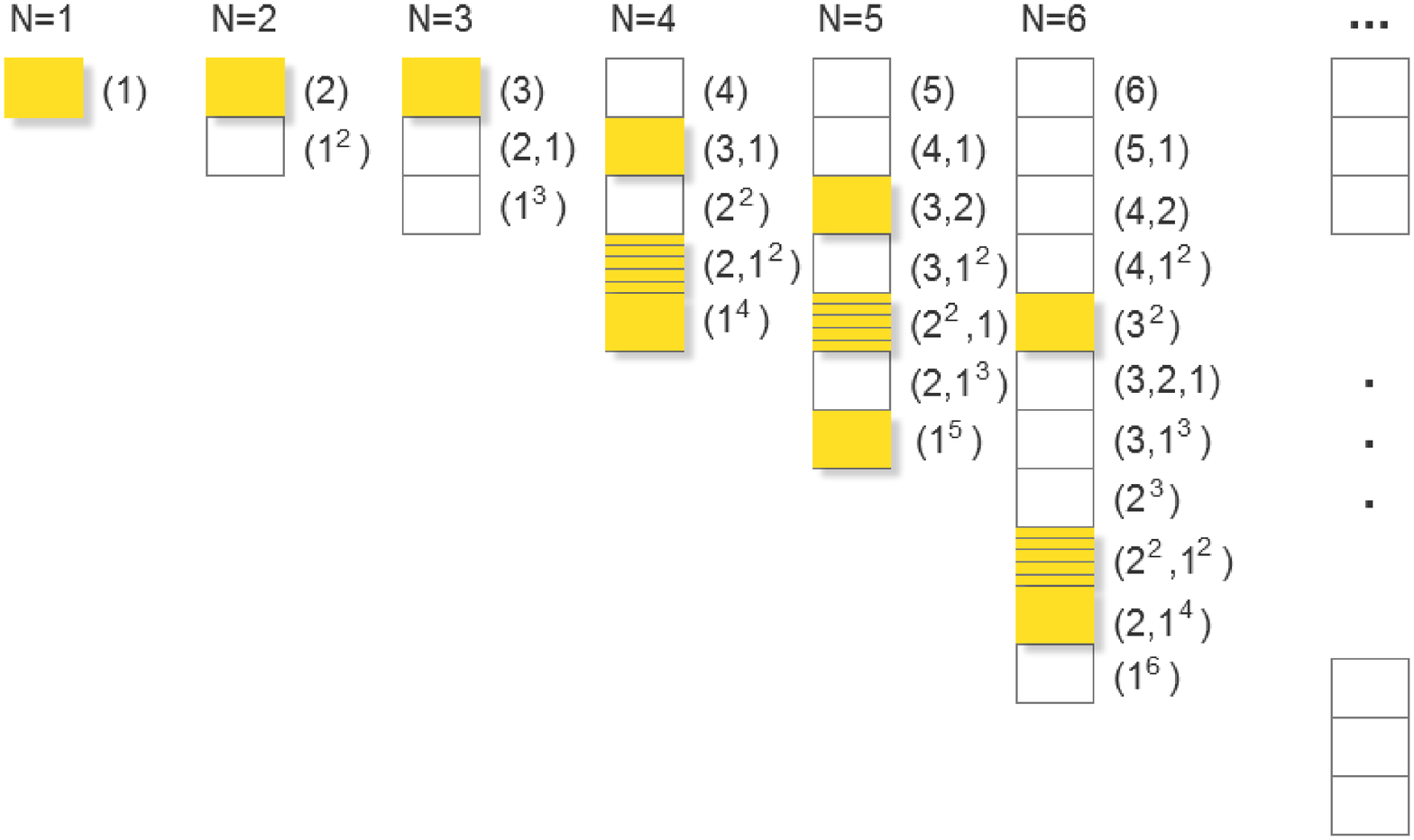}
\caption{The Hilbert subspace describing Gentile statistics with maximum occupation number q=3.}
\label{gentile3}
\end{figure}

\begin{figure}[H]
\centering
\includegraphics[width=0.8\textwidth]{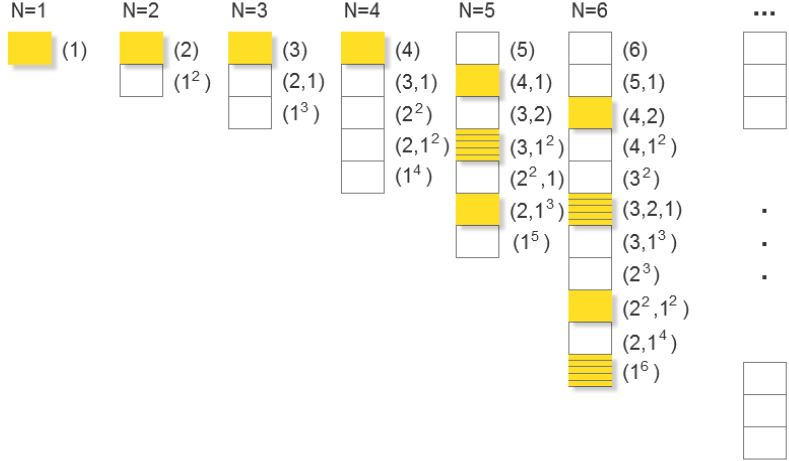}
\caption{The Hilbert subspace describing Gentile statistics with maximum occupation number q=4.}
\label{gentile4}
\end{figure}

\subsection{Parastatistics}

Parastatitics is proposed by H. Green as a generalization of Bose and Fermi
statistics in 1953 \cite{green1953generalized,cattani2009intermediate}. In
Green's generalization, a trilinear relation of the algebra of creation and
annihilation operators is proposed
\cite{chaturvedi1996canonical,chaturvedi1996grand}. Moreover, one already
knows that parastatitics corresponding to the higher dimensional
representation of the permutation group \cite{khare2005fractional}. For
instance, Okayama (1952) \cite{okayama1952generalization} suggests
that all irreducible representations associated with Young diagrams of at most
$p$ columns yield parastatitics \cite{vo1972first}.

The canonical partition function for para-Bose and para-Fermi statistics with $q$ the parameter is
\cite{chaturvedi1996canonical}%
\begin{align}
Z_{q}^{PB}\left(  \beta,N\right)   &  =\sum_{I=1,l_{\left(  \lambda\right)
}\leq q}^{P\left(  N\right)  }s_{\left(  \lambda\right)  _{I}}\left(
e^{-\beta\varepsilon_{1}},e^{-\beta\varepsilon_{2}},...\right)  ,
\label{para1}\\
Z_{q}^{PF}\left(  \beta,N\right)   &  =\sum_{I=1,\lambda_{I}\leq q}^{P\left(
N\right)  }s_{\left(  \lambda\right)  _{I}}\left(  e^{-\beta\varepsilon_{1}%
},e^{-\beta\varepsilon_{2}},...\right)  . \label{para2}%
\end{align}

In this section, we discuss parastatistics in the scheme. 
The permutation phase for parastatistics is discussed in 
Refs. \cite{khare2005fractional,okayama1952generalization,vo1972first}. Our method is 
new and gives results such as the dimension of the permutation phase of the wave function
corresponding to parastatistics.

\subsubsection{The maximum occupation number}

The canonical partition function of parastatistics, Eqs. (\ref{para1}) and
(\ref{para2}), can be rewritten in terms of $m_{\left(  \lambda\right)  _{I}%
}\left(  e^{-\beta\varepsilon_{1}},e^{-\beta\varepsilon_{2}},...\right)  $:%
\begin{align}
Z_{q}^{PB}\left(  \beta,N\right)   &  =\sum_{I,J=1}^{P\left(  N\right)  }%
P_{b}^{J}\left(  q\right)  k_{J}^{I}m_{\left(  \lambda\right)  _{I}}\left(
e^{-\beta\varepsilon_{1}},e^{-\beta\varepsilon_{2}},...\right)  ,
\label{para3}\\
Z_{q}^{PF}\left(  \beta,N\right)   &  =\sum_{I,J=1}^{P\left(  N\right)  }%
P_{f}^{J}\left(  q\right)  k_{J}^{I}m_{\left(  \lambda\right)  _{I}}\left(
e^{-\beta\varepsilon_{1}},e^{-\beta\varepsilon_{2}},...\right)  ,
\label{para4}%
\end{align}
where%
\begin{align}
P_{b}^{J}\left(  q\right)   &  =\left\{
\begin{array}
[c]{c}%
1\text{, for }l_{\left(  \lambda\right)  _{J}}\leq q,\\
0\text{, otherwise}%
\end{array}
\right.  ,\\
P_{f}^{J}\left(  q\right)   &  =\left\{
\begin{array}
[c]{c}%
1\text{, for }\lambda_{J,1}\leq q,\\
0\text{, otherwise}%
\end{array}
\right.  ,
\end{align}

Since the Kostka number is a lower triangular matrix
\cite{littlewood1977theory,macdonald1998symmetric}, for para-Fermi
statistics, the first term in the canonical partition function is
$m_{\left(  \lambda\right)  ^{q}}\left(  e^{-\beta\varepsilon_{1}}%
,e^{-\beta\varepsilon_{2}},...\right)  $ with $\lambda_{I,1}=q$. For the
para-Bose statistics, the first term in the canonical partition function is $m_{\left(  N\right)  }\left(  e^{-\beta\varepsilon_{1}},e^{-\beta
\varepsilon_{2}},...\right)  $ with $\lambda_{I,1}=N$. Eqs. (\ref{para3}) and
(\ref{para4}) can be written in the form%
\begin{align}
Z_{q}^{PB}\left(  \beta,N\right)   &  =m_{\left(  N\right)  }\left(
e^{-\beta\varepsilon_{1}},e^{-\beta\varepsilon_{2}},...\right)  +\sum
_{\lambda_{I,1}<N}^{P\left(  N\right)  }M^{\prime I}m_{\left(  \lambda\right)
_{I}}\left(  e^{-\beta\varepsilon_{1}},e^{-\beta\varepsilon_{2}},...\right)
,\\
Z_{q}^{PF}\left(  \beta,N\right)   &  =m_{\left(  \lambda\right)  ^{q}}\left(
e^{-\beta\varepsilon_{1}},e^{-\beta\varepsilon_{2}},...\right)  +\sum
_{\lambda_{I,1}<q}^{P\left(  N\right)  }M^{\prime\prime I}m_{\left(
\lambda\right)  _{I}}\left(  e^{-\beta\varepsilon_{1}},e^{-\beta
\varepsilon_{2}},...\right)  .
\end{align}
By using Theorem (\ref{theorem2}), one can find that for para-Fermi statistics with 
parameter $q$, the maximum
occupation number is $q$. For para-Bose statistics, there is no limitation on the maximum
occupation number.

For the sake of clarity, we list some of the explicit expression of the
canonical partition function for parastatistics of $N=3$, $4$, $5$ with
different $q$ (for $q=1$ parastatistics recovers Bose and Fermi
statistics). The detail of the calculation can be found in appendixes.

$N=3$, for para-Bose cases,%

\begin{equation}
Z_{2}^{PB}\left(  \beta,3\right)  =m_{\left(  3\right)  }+2m_{\left(
2,1\right)  }+3m_{\left(  1^{3}\right)  },\label{para001}%
\end{equation}

\begin{equation}
Z_{3}^{PB}\left(  \beta,3\right)  =m_{\left(  3\right)  }+2m_{\left(
2,1\right)  }+3m_{\left(  1^{3}\right)  }.
\end{equation}
For para-Fermi cases,%
\begin{equation}
Z_{2}^{PF}\left(  \beta,3\right)  =m_{\left(  2,1\right)  }+3m_{\left(
1^{3}\right)  },
\end{equation}

\begin{equation}
Z_{3}^{PF}\left(  \beta,3\right)  =m_{\left(  3\right)  }+2m_{\left(
2,1\right)  }+4m_{\left(  1^{3}\right)  }.
\end{equation}
$N=4$, for para-Bose cases,%

\begin{equation}
Z_{2}^{PB}\left(  \beta,4\right)  =m_{\left(  4\right)  }+2m_{\left(
3,1\right)  }+3m_{\left(  2^{2}\right)  }+4m_{\left(  2,1^{2}\right)
}+6m_{\left(  1^{4}\right)  },
\end{equation}%
\begin{equation}
Z_{3}^{PB}\left(  \beta,4\right)  =m_{\left(  4\right)  }+2m_{\left(
3,1\right)  }+3m_{\left(  2^{2}\right)  }+5m_{\left(  2,1^{2}\right)
}+9m_{\left(  1^{4}\right)  },
\end{equation}%
\begin{equation}
Z_{4}^{PB}\left(  \beta,4\right)  =m_{\left(  4\right)  }+2m_{\left(
3,1\right)  }+3m_{\left(  2^{2}\right)  }+5m_{\left(  2,1^{2}\right)
}+10m_{\left(  1^{4}\right)  }.
\end{equation}
For para-Fermi cases,%
\begin{equation}
Z_{2}^{PF}\left(  \beta,4\right)  =m_{\left(  2^{2}\right)  }+2m_{\left(
2,1^{2}\right)  }+6m_{\left(  1^{4}\right)  },
\end{equation}%
\begin{equation}
Z_{3}^{PF}\left(  \beta,4\right)  =m_{\left(  3,1\right)  }+2m_{\left(
2^{2}\right)  }+4m_{\left(  2,1^{2}\right)  }+9m_{\left(  1^{4}\right)  },
\end{equation}

\begin{equation}
Z_{4}^{PF}\left(  \beta,4\right)  =m_{\left(  4\right)  }+2m_{\left(
3,1\right)  }+3m_{\left(  2^{2}\right)  }+5m_{\left(  2,1^{2}\right)
}+10m_{\left(  1^{4}\right)  }.
\end{equation}
$N=5$, for para-Bose cases,%

\begin{align}
Z_{2}^{PB}\left(  \beta,5\right)   &  =m_{\left(  5\right)  }+2m_{\left(
4,1\right)  }+3m_{\left(  3,2\right)  }+4m_{\left(  3,1^{2}\right)
}\nonumber\\
&  +5m_{\left(  2^{2},1\right)  }+7m_{\left(  2,1^{3}\right)  }+10m_{\left(
1^{5}\right)  },
\end{align}%
\begin{align}
Z_{3}^{PB}\left(  \beta,5\right)   &  =m_{\left(  5\right)  }+2m_{\left(
4,1\right)  }+3m_{\left(  3,2\right)  }+5m_{\left(  3,1^{2}\right)
}\nonumber\\
&  +7m_{\left(  2^{2},1\right)  }+12m_{\left(  2,1^{3}\right)  }+21m_{\left(
1^{5}\right)  },
\end{align}%
\begin{align}
Z_{4}^{PB}\left(  \beta,5\right)   &  =m_{\left(  5\right)  }+2m_{\left(
4,1\right)  }+3m_{\left(  3,2\right)  }+5m_{\left(  3,1^{2}\right)
}\nonumber\\
&  +7m_{\left(  2^{2},1\right)  }+13m_{\left(  2,1^{3}\right)  }+25m_{\left(
1^{5}\right)  },
\end{align}%
\begin{align}
Z_{5}^{PB}\left(  \beta,5\right)   &  =m_{\left(  5\right)  }+2m_{\left(
4,1\right)  }+3m_{\left(  3,2\right)  }+5m_{\left(  3,1^{2}\right)
}\nonumber\\
&  +7m_{\left(  2^{2},1\right)  }+13m_{\left(  2,1^{3}\right)  }+26m_{\left(
1^{5}\right)  }.
\end{align}
For para-Fermi cases,%
\begin{equation}
Z_{2}^{PF}\left(  \beta,5\right)  =m_{\left(  2^{2},1\right)  }+3m_{\left(
2,1^{3}\right)  }+10m_{\left(  1^{5}\right)  },
\end{equation}

\begin{equation}
Z_{3}^{PF}\left(  \beta,5\right)  =m_{\left(  3,2\right)  }+2m_{\left(
3,1^{2}\right)  }+4m_{\left(  2^{2},1\right)  }+9m_{\left(  2,1^{3}\right)
}+21m_{\left(  1^{5}\right)  },
\end{equation}

\begin{align}
Z_{4}^{PF}\left(  \beta,5\right)   &  =m_{\left(  4,1\right)  }+2m_{\left(
3,2\right)  }+4m_{\left(  3,1^{2}\right)  }+6m_{\left(  2^{2},1\right)
}\nonumber\\
&  +12m_{\left(  2,1^{3}\right)  }+25m_{\left(  1^{5}\right)  },
\end{align}

\begin{align}
Z_{5}^{PF}\left(  \beta,5\right)   &  =m_{\left(  5\right)  }+2m_{\left(
4,1\right)  }+3m_{\left(  3,2\right)  }+5m_{\left(  3,1^{2}\right)
}\nonumber\\
&  +7m_{\left(  2^{2},1\right)  }+13m_{\left(  2,1^{3}\right)  }+26m_{\left(
1^{5}\right)  }. \label{para02}%
\end{align}

From Eqs. (\ref{para001}) - (\ref{para02}), one can see that the coefficient
distinguishes parastatistics from Gentile statistics. For
example, for para-Fermi statistics with parameter $q$, although the
maximum occupation number is $q$, however, it does not yield Gentile
statistics, because the weight, represented by the coefficient, 
distinguishes those two kinds of statistics, e.g., $Z_{2}^{PF}\left(  \beta,5\right)  =m_{\left(  2^{2},1\right)
}+3m_{\left(  2,1^{3}\right)  }+10m_{\left(  1^{5}\right)  }$ and
$Z_{2}^{G}\left(  \beta,5\right)  =m_{\left(  2^{2},1\right)  }+m_{\left(
2,1^{3}\right)  }+m_{\left(  1^{5}\right)  }$. In Gentile statistics, the
microstate with the same occupation number only be counted once. 

\subsubsection{The permutation phase of the wave function}

The canonical partition function of 
parastatitics, Eqs. (\ref{para1}) and
(\ref{para2}), shows, using Theorem (\ref{theorem1}), that the Hilbert subspace describing para-Bose
statistics is a direct sum of those spaces corresponding to $\left(
\lambda\right)  $ with length smaller than $q$, as shown in Figs. (\ref{parab2}) and
(\ref{parab3}). The Hilbert subspace
describing para-Fermi statistics is a direct sum of those spaces corresponding to $\left(
\lambda\right)  $ with $\lambda_{1}$ smaller than $q$, as shown in Figs. (\ref{paraf2}) and
(\ref{paraf3}). That is,
\begin{align*}
&
%TCIMACRO{\dbigoplus \limits_{l_{\left(  \lambda\right)  }\leq q}}%
%BeginExpansion
{\displaystyle\bigoplus\limits_{l_{\left(  \lambda\right)  }\leq q}}
%EndExpansion
V^{\left(  \lambda\right)  }\text{ for para-Bose statistics with parameter
}q\text{,}\\
&
%TCIMACRO{\dbigoplus \limits_{\lambda_{I,1}\leq q}}%
%BeginExpansion
{\displaystyle\bigoplus\limits_{\lambda_{I,1}\leq q}}
%EndExpansion
V^{\left(  \lambda\right)  }\text{ for para-Fermi statistics with
parameter }q\text{.}%
\end{align*}
\begin{figure}[H]
\centering
\includegraphics[width=0.8\textwidth]{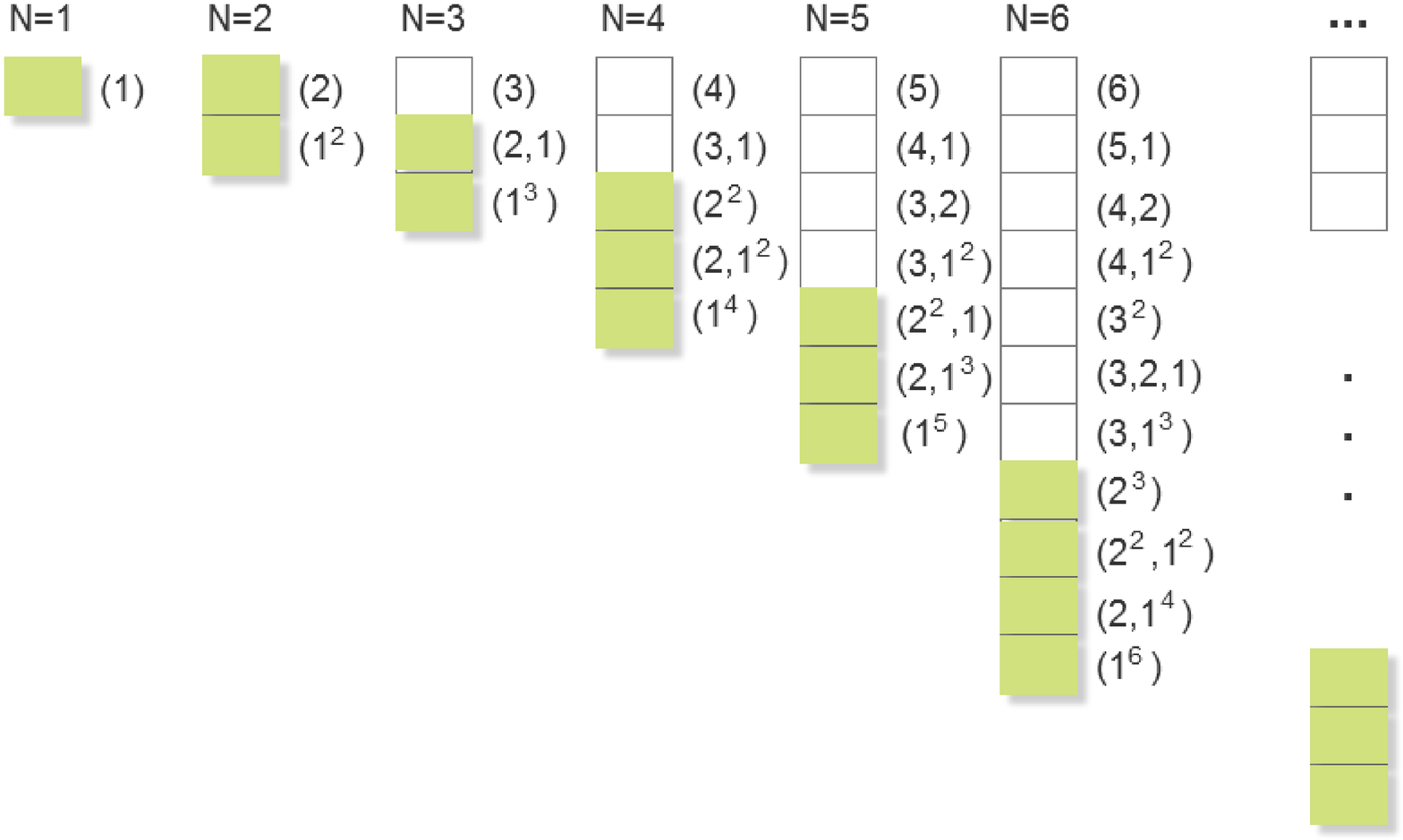}
\caption{The Hilbert subspace describing para-Fermi statistics with q=2.}
\label{paraf2}
\end{figure}

\begin{figure}[H]
\centering
\includegraphics[width=0.8\textwidth]{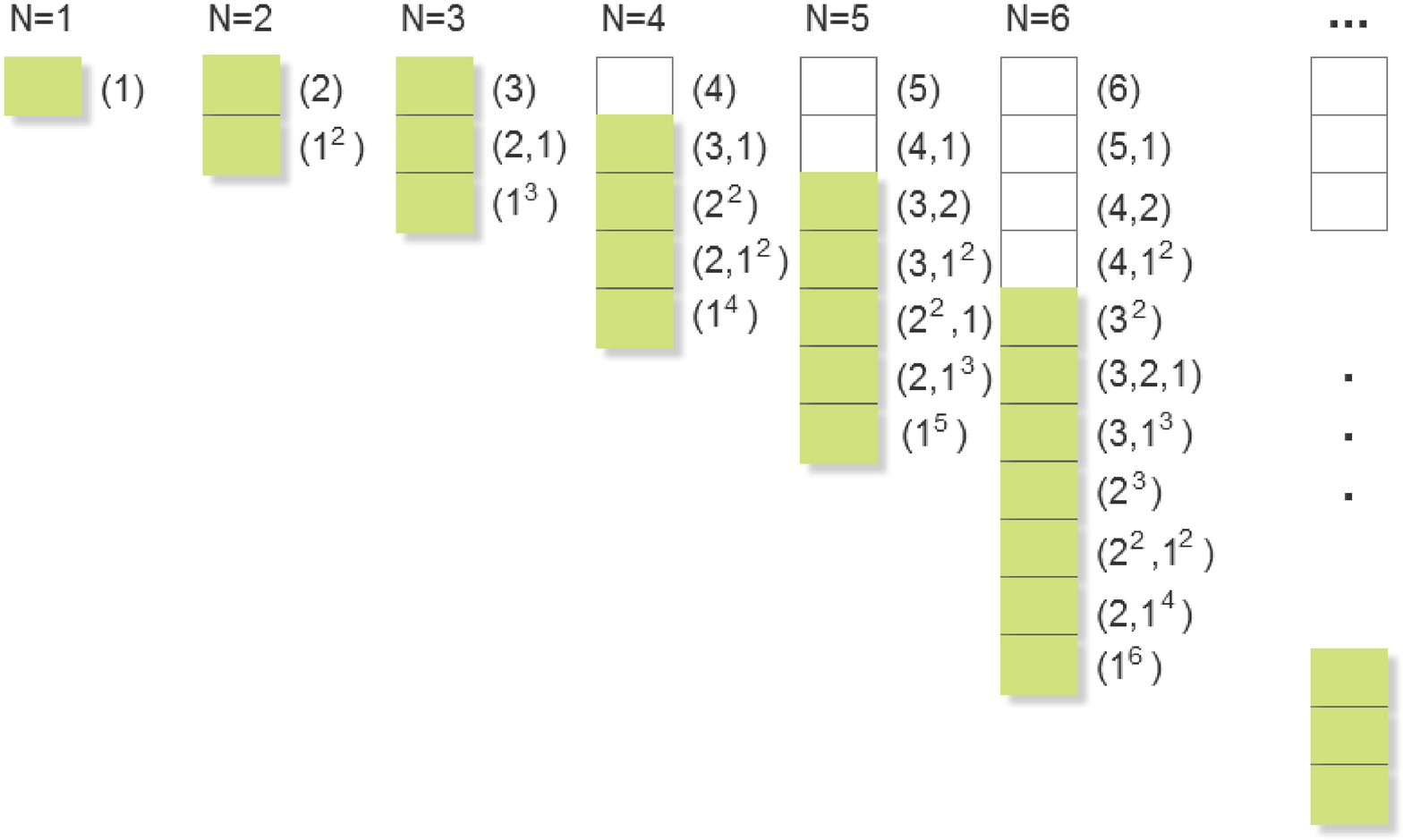}
\caption{The Hilbert subspace describing para-Fermi statistics with q=3.}
\label{paraf3}
\end{figure}
\begin{figure}[H]
\centering
\includegraphics[width=0.8\textwidth]{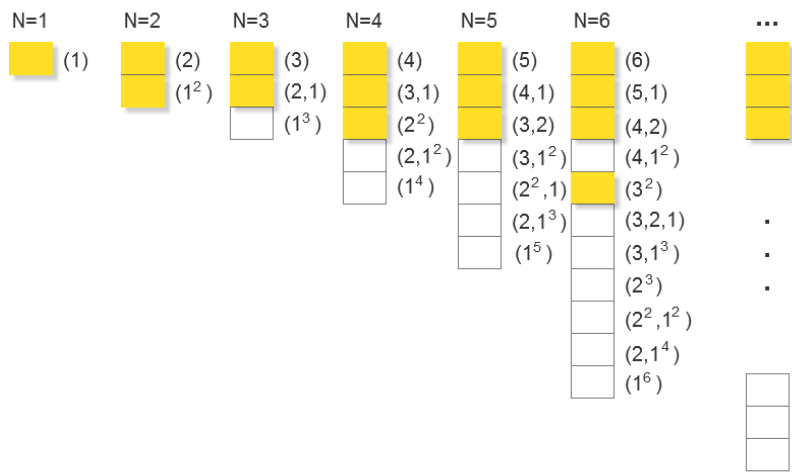}
\caption{The Hilbert subspace describing para-Bose statistics with q=2.}
\label{parab2}
\end{figure}

\begin{figure}[H]
\centering
\includegraphics[width=0.8\textwidth]{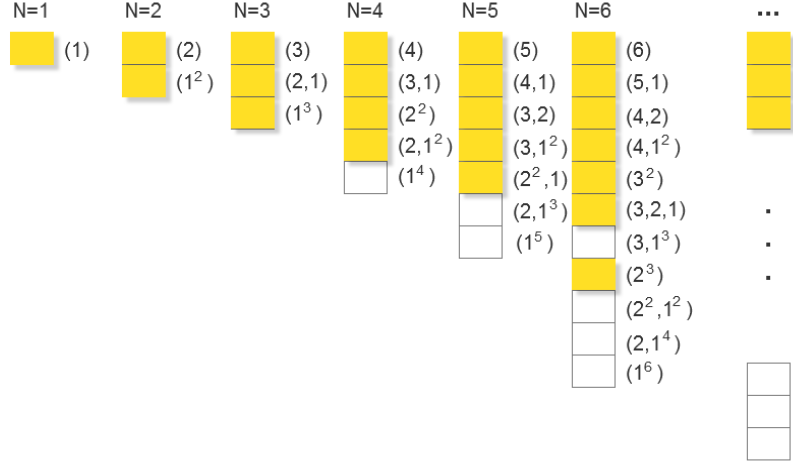}
\caption{The subspace describing para-Bose statistics with q=3.}
\label{parab3}
\end{figure}

For parastatistics, using Theorem (\ref{theorem11}),
the permutation phase of the wave function can be given as%
\begin{equation}
\sigma_{ij}|\Phi\rangle=D\left(  \sigma_{ij}\right)  |\Phi\rangle,
\end{equation}
where
\begin{equation}
D=\left\{
\begin{array}
[c]{c}%
%TCIMACRO{\dbigoplus \limits_{l_{\left(  \lambda\right)  }\leq q}}%
%BeginExpansion
{\displaystyle\bigoplus\limits_{l_{\left(  \lambda\right)  }\leq q}}
%EndExpansion
D^{I}\text{ for para-Bose statistics with parameter }q,\\%
%TCIMACRO{\dbigoplus \limits_{\lambda_{I,1}\leq q}}%
%BeginExpansion
{\displaystyle\bigoplus\limits_{\lambda_{I,1}\leq q}}
%EndExpansion
D^{I}\text{ for para-Fermi statistics with parameter }q.
\end{array}
\right.
\end{equation}
This is a multi-dimensional representation. By using the dimension
of the representation of the permutation group \cite{iachello2006lie},
we give the dimension of the permutation phase.
For para-Bose statistics with
parameter $q$,%

\begin{equation}
\dim\left(  D\right)  =\sum_{l_{\left(  \lambda\right)  _{I}}\leq q}N!%
%TCIMACRO{\dprod \limits_{i=1,i<j}}%
%BeginExpansion
{\displaystyle\prod\limits_{i=1,i<j}}
%EndExpansion
\left(  \lambda_{I,i}-\lambda_{I,j}-i+j\right)
%TCIMACRO{\dprod \limits_{i=1}}%
%BeginExpansion
{\displaystyle\prod\limits_{i=1}}
%EndExpansion
\left[  \left(  l_{\left(  \lambda\right)  }+\lambda_{I,i}-i\right)  !\right]
^{-1}.
\end{equation}
For para-Fermi statistics with parameter $q$,
\begin{equation}
\dim\left(  D\right)  =\sum_{\lambda_{I,1}\leq q}N!%
%TCIMACRO{\dprod \limits_{i=1,i<j}}%
%BeginExpansion
{\displaystyle\prod\limits_{i=1,i<j}}
%EndExpansion
\left(  \lambda_{I,i}-\lambda_{I,j}-i+j\right)
%TCIMACRO{\dprod \limits_{i=1}}%
%BeginExpansion
{\displaystyle\prod\limits_{i=1}}
%EndExpansion
\left[  \left(  l_{\left(  \lambda\right)  }+\lambda_{I,i}-i\right)  !\right]
^{-1}.
\end{equation}

\subsection{The immannons and Gentileonic statistics}

Gentileonic statistics is proposed by Cattani and Fernandes in 1984
\cite{cattani1984general}. The immannons is proposed by Tichy in 2017
\cite{tichy2017extending}. They both are generalized quantum statistics
corresponding to the higher dimensional representation of permutation groups
\cite{cattani1984general,tichy2017extending}.

In this section, we discuss the immannons and Gentileonic statistics in the scheme.
The result shows that in statistical mechanics, the
immannons and Gentileonic statistics are essentially equivalent.

\subsubsection{The canonical partition function}

\begin{theorem}
\label{theorem3} The canonical partition function of the immanonns and
Gentileonic statistics is%
\begin{equation}
Z^{IG}_{\left(  \lambda\right)  }\left(  \beta,N\right)  =s_{\left(
\lambda\right)  }\left(  e^{-\beta\varepsilon_{1}},e^{-\beta\varepsilon_{2}%
},...\right).  \label{im01}%
\end{equation}

\end{theorem}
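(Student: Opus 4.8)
The plan is to reduce the claim to the general formula of Theorem (\ref{theorem1}) by identifying the Hilbert subspace of the immanons and of Gentileonic statistics with a single irreducible representation of $S_N$. Both statistics are defined, in their original formulations, by demanding that the $N$-particle wave function transform according to one fixed irreducible representation of the permutation group, i.e. according to a single Young diagram / integer partition $\left(  \lambda\right)$. First I would recall these definitions: Gentileonic statistics (Cattani--Fernandes) prescribes a definite symmetry type for the many-body state, while the immanons (Tichy) are built from immanants of the single-particle amplitude matrix, an immanant being indexed precisely by an irreducible character $\chi_{\left(  \lambda\right)}$ of $S_N$. In both cases the admissible states span exactly the isotypic component $V^{\left(  \lambda\right)}$ attached to $\left(  \lambda\right)$.

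Given that identification, the decomposition $D=\bigoplus_I \left(V^I\right)^{\oplus C_I}$ of Theorem (\ref{theorem1}) specializes to $C_I=1$ for $\left(  \lambda\right)_I=\left(  \lambda\right)$ and $C_I=0$ otherwise, since the subspace describing the system is the single copy $D=V^{\left(  \lambda\right)}$. Substituting these coefficients into Eq. (\ref{5301}) collapses the sum to its one surviving term and yields immediately
\begin{equation}
Z^{IG}_{\left(  \lambda\right)}\left(\beta,N\right) = s_{\left(  \lambda\right)}\left(e^{-\beta\varepsilon_{1}},e^{-\beta\varepsilon_{2}},\ldots\right),
\end{equation}
which is the claimed Eq. (\ref{im01}). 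Equivalently, one may argue directly from the trace formula (\ref{pro4}): the canonical partition function is the trace of $e^{-\beta H_N}$ restricted to $V^{\left(  \lambda\right)}$, and the proof of Theorem (\ref{theorem1}) has already established that this restricted trace equals the S-function $s_{\left(  \lambda\right)}$.

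Finally, because both the immanons and Gentileonic statistics are shown to single out the same irreducible subspace $V^{\left(  \lambda\right)}$, their canonical partition functions coincide, which is the precise sense in which the two statistics are \emph{essentially equivalent} at the level of statistical mechanics. The main obstacle I anticipate is not the algebra --- which is immediate once Theorem (\ref{theorem1}) is in hand --- but the faithful translation of the two original definitions into the common statement that the Hilbert subspace is the single irreducible component $V^{\left(  \lambda\right)}$. In particular one must verify that the immanant construction of Tichy and the symmetry-type prescription of Cattani and Fernandes genuinely select the \emph{same} $V^{\left(  \lambda\right)}$, and a single copy of it (so that the relevant coefficient is $1$ rather than some larger multiplicity), since it is exactly this shared identification that forces the two partition functions to be equal.
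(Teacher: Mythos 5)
Your proposal is correct and takes essentially the same route as the paper: identify the Hilbert subspace selected by each statistics with the component attached to $\left(\lambda\right)$, then read the partition function off the trace formula underlying Theorem \ref{theorem1}. The identification step you flag as the main obstacle is exactly what the paper's proof supplies, by writing the wave functions explicitly --- the Cattani--Fernandes column vector of Young-shape operators $\Phi_i^{\left(\lambda\right)}$ (with a $1/\sqrt{f_I}$ normalization) for Gentileonic statistics, and Tichy's projected state $\hat{P}_{\lambda}\left\vert \psi_1,\ldots,\psi_N\right\rangle$ for the immanons --- and computing $\left\langle \Psi\right\vert e^{-\beta H_N}\left\vert \Psi\right\rangle$ directly in each case. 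One caveat on your premises: the claim that \emph{both} statistics demand the wave function transform under a fixed irreducible representation of $S_N$ is, by the paper's own account, not true of the immanons; their states span $V^{\prime\left(\lambda\right)}$, which carries an irreducible representation for the Hamiltonian but does \emph{not} carry a representation of $S_N$ (the paper states explicitly that the immanons ``do not possess a permutation symmetry''), whereas Gentileonic states span the full component $V^{\left(\lambda\right)}$. This mischaracterization does not damage your conclusion, since the trace of $e^{-\beta H_N}$ over $V^{\prime\left(\lambda\right)}$ is again $s_{\left(\lambda\right)}$ and the paper's convention (set at the end of the proof of Theorem \ref{theorem1}) identifies $V^{\left(\lambda\right)}$ with $V^{\prime\left(\lambda\right)}$ for this purpose --- but the ``single copy'' you ask to verify is, for the immanons, literally $V^{\prime\left(\lambda\right)}$ rather than an $S_N$-symmetry type.
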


\begin{proof}
Gentileonic statistics is related to the higher dimensional
representation of $S_{N}$ and the wave function is give as
\cite{cattani1984general,cattani2009intermediate}
\begin{equation}
\left\vert \Psi_{\left(  \lambda\right)  }\right\rangle =\frac{1}{\sqrt{f_{I}%
}}\left(
\begin{array}
[c]{c}%
\Phi_{1}^{\left(  \lambda\right)  }\\
\Phi_{2}^{\left(  \lambda\right)  }\\
...\\
\Phi_{f_{I}}^{\left(  \lambda\right)  }%
\end{array}
\right)  , \label{im2}%
\end{equation}
where $\Phi_{i}^{\left(  \lambda\right)  }$ is the operator associated with
the Young shapes \cite{hamermesh1962group} corresponding to the integer
partition $\left(  \lambda\right)  $. Since the operator associate with the
Young shape is one of the constructions for the basis of the subspace that
carries the irreducible representation for $S_{N}$ \cite{hamermesh1962group},
that is, the subspace spanned by the wave function $\left\vert \Psi_{\left(
\lambda\right)  }\right\rangle $ is $V^{\left(  \lambda\right)  }$. The
canonical partition function is%
\begin{align}
Z^{IG}_{\left(  \lambda\right)  }\left(  \beta,N\right)   &  =\sum\left\langle
\Psi_{\left(  \lambda\right)  }\right\vert e^{-\beta H_{N}}\left\vert
\Psi_{\left(  \lambda\right)  }\right\rangle\\
& =\frac{1}{f_{I}}\sum_{i=1}%
^{f_{I}}\left\langle \Phi_{i}^{\left(  \lambda\right)  }\right\vert e^{-\beta
H_{N}}\left\vert \Phi_{i}^{\left(  \lambda\right)  }\right\rangle \nonumber\\
&  =s_{\left(  \lambda\right)  }\left(  e^{-\beta\varepsilon_{1}}%
,e^{-\beta\varepsilon_{2}},...\right)  , \label{im3}%
\end{align}
where $\left\langle \Phi_{i}^{\left(  \lambda\right)  }\right\vert e^{-\beta
H_{N}}\left\vert \Phi_{i}^{\left(  \lambda\right)  }\right\rangle
=\left\langle \Phi_{j}^{\left(  \lambda\right)  }\right\vert e^{-\beta H_{N}%
}\left\vert \Phi_{j}^{\left(  \lambda\right)  }\right\rangle =s_{\left(
\lambda\right)  }\left(  e^{-\beta\varepsilon_{1}},e^{-\beta\varepsilon_{2}%
},...\right)  $ is used. $\Phi_{i}^{\left(  \lambda\right)  }$ and
$\Phi_{j}^{\left(  \lambda\right)  }$ give the equivalent and irreducible
representation for $H_{N}$.

The immannons \cite{tichy2017extending} is a kind of generalized quantum statistics, of which,
the inner product of the wave function gives the immanant, labeled by an
integer partition $\left(  \lambda\right)  $ \cite{tichy2017extending}. It
recovers Bose-Einstein statistics for $\left(  \lambda\right)  =\left(  N\right)  $,
and Fermi-Dirac statistics for $\left(  \lambda\right)  =\left(  1^{N}\right)  $.
The wave function of the immannons is \cite{tichy2017extending}
\begin{equation}
\left\vert \Phi_{\left(  \lambda\right)  }\right\rangle \propto\hat
{P}_{\lambda}\left\vert \psi_{1},\psi_{2},\ldots,\psi_{N}\right\rangle ,
\label{im6}%
\end{equation}
where
\begin{equation}
\hat{P}_{\lambda}=\frac{\chi_{\left(  \lambda\right)  }\left(  e\right)  }%
{N!}\sum_{\sigma\in S_{N}}\chi_{\left(  \lambda\right)  }\left(
\sigma\right)  \hat{Q}\left(  \sigma\right)  \label{im0}%
\end{equation}
with $\chi_{\left(  \lambda\right)  }\left(  \sigma\right)  $ the simple
characteristic of $\sigma$ and $\hat{Q}\left(  \sigma\right)  $ an operator
satisfying \cite{tichy2017extending}
\begin{equation}
\hat{Q}\left(  \sigma\right)  \left\vert \psi_{1},\psi_{2},\ldots
,\psi_{N}\right\rangle =\left\vert \psi_{\sigma_{1}},\psi_{\sigma_{2}}%
,\ldots,\psi_{\sigma_{N}}\right\rangle.
\end{equation}
 The wave function, Eq. (\ref{im6}), is a
construction of the basis for the subspace $V^{\prime\left(  \lambda\right)
}$. Thus the wave function, Eq. (\ref{im6}), directly
yields
\begin{equation}
Z^{IG}_{\left(  \lambda\right)  }\left(  \beta,N\right)  =\left\langle
\Phi_{\left(  \lambda\right)  }\right\vert e^{-\beta H_{N}}\left\vert
\Phi_{\left(  \lambda\right)  }\right\rangle =s_{\left(  \lambda\right)
}\left(  e^{-\beta\varepsilon_{1}},e^{-\beta\varepsilon_{2}},...\right).
\end{equation}

\end{proof}

Eq. (\ref{im01}) shows that in statistical mechanics, the immanonns and Gentileonic statistics
share the same canonical partition function and thus are
essentially the same statistics.

\subsubsection{Discussions on the permutation phase of the wave function}

The Hilbert subspace describing Gentileonic statistics and the
immanonns are slightly difference: the subspace describing the
Gentileonic statistics is $V^{\left(  \lambda\right)  }$ and the subspace describing the immanonns
is $V^{\prime\left(  \lambda\right)  }$, i.e.,
\begin{equation}
D=\left\{
\begin{array}
[c]{c}%
V^{\left(  \lambda\right)  }\text{, for Gentileonic statistics,}\\
V^{\prime\left(  \lambda\right)  }\text{, for the immanonns.}%
\end{array}
\right.
\end{equation}

We have shown in the proof of Theorem (\ref{theorem1}) that the subspace
$V^{\prime\left(  \lambda\right)  }$ occurs $f_{\left(  \lambda\right)  }$
times in $V^{\left(  \lambda\right)  }$, thus these two subspaces are
essentially the same.

The immanonns does not possess a permutation symmetry, because
$V^{\prime\left(  \lambda\right)  }$, spanned by wave function Eq. (\ref{im6}),
does not carry a representation for $S_{N}$. However, for Gentileonic
statistics the permutation phase of the wave function is
\begin{equation}
\sigma_{ij}|\Phi\rangle=D^{I}\left(  \sigma_{ij}\right)  |\Phi\rangle.
\end{equation}

\begin{figure}[H]
\centering
\includegraphics[width=0.8\textwidth]{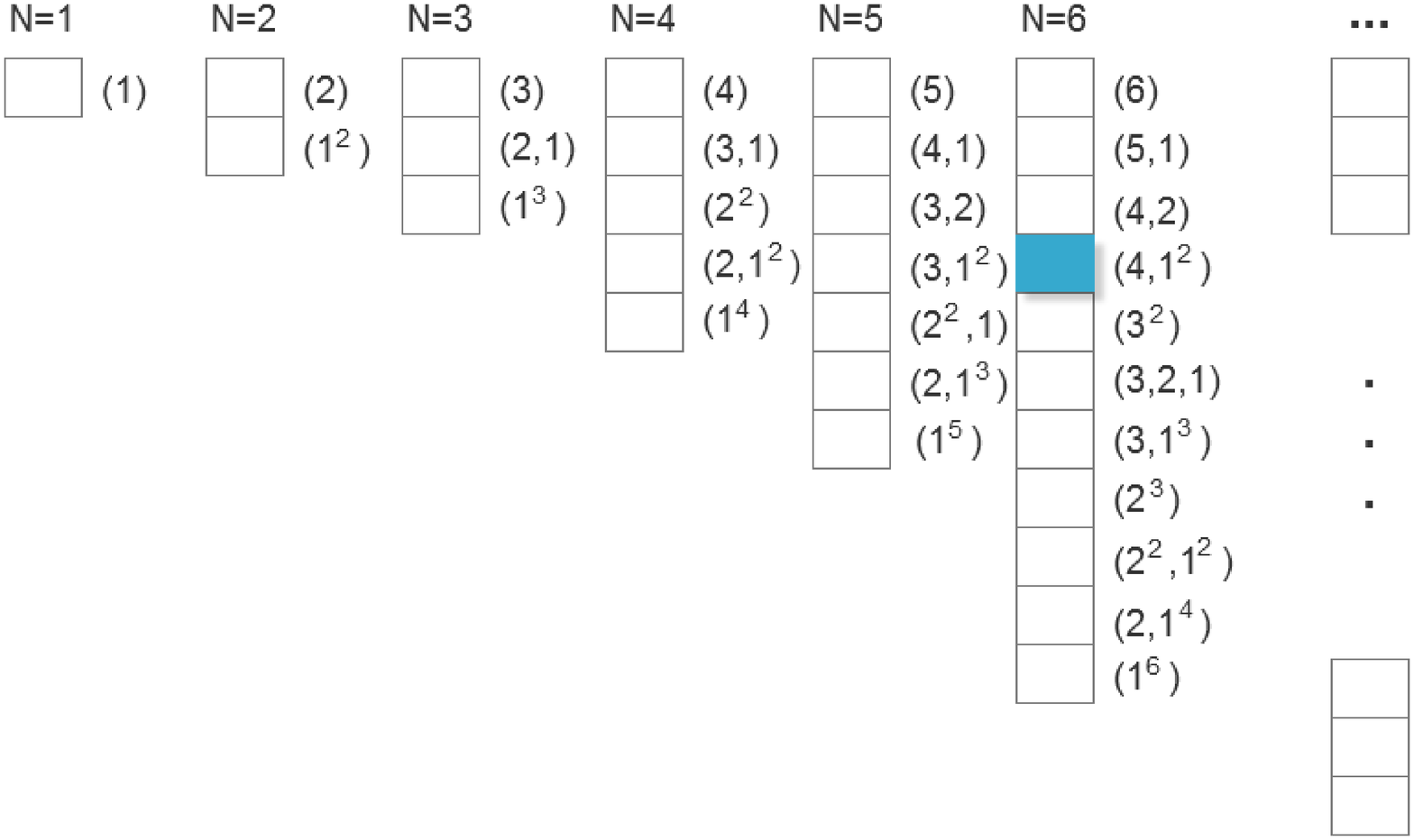}
\caption{The Hilbert subspace describing the immannons and Gentileonic statistics labeled by the integer partition (4,$1^{2}$).}
\label{imm}
\end{figure}

\subsubsection{The maximun occupation number}

We express the canonical partition function of the immannons and the
Gentileonic statistics, Eq. (\ref{im01}), in terms of $m_{\left(
\lambda\right)  _{I}}\left(  x_{1},x_{2},...\right)  $:
\begin{align}
Z_{\left(  \lambda\right)  _{J}}^{IG}\left(  \beta,N\right)   &  =\sum_{K=1}^{P\left(  N\right)  }\sum
_{I=1}^{P\left(  N\right)  }\delta_{J,K}k_{K}^{I}m_{\left(  \lambda\right)
_{I}}\left(  e^{-\beta\varepsilon_{1}},e^{-\beta\varepsilon_{2}},...\right)
\nonumber\\
&  =\sum_{I=1}^{P\left(  N\right)  }k_{J}^{I}m_{\left(  \lambda\right)  _{I}%
}\left(  e^{-\beta\varepsilon_{1}},e^{-\beta\varepsilon_{2}},...\right).
\label{im4}%
\end{align}
Since the Kostka number is a lower triangular matrix, i.e., $k_{J}^{I}=0$ if
$I<J$ \cite{littlewood1977theory,macdonald1998symmetric}, the first term
of in Eq. (\ref{im4}) is always
\begin{equation}
Z_{\left(  \lambda\right)  _{J}}^{IG}\left(  \beta,N\right)  =m_{\left(  \lambda\right)  _{J}}\left(
e^{-\beta\varepsilon_{1}},e^{-\beta\varepsilon_{2}},...\right)  +\ldots,
\end{equation}
which, using Theorem (\ref{theorem2}), implies that the maximum occupation number for the immanonns and Gentileonic
statistics is $\lambda_{J,1}$.

For example, the explicit expression of the canonical partition function for
the immannons and Gentileonic statistics labeled by the integer partition
$\left(  3,1^{2}\right)  $ is%
\begin{equation}
Z_{\left(  3,1^{2}\right)  }^{IG}\left(  \beta,N\right)  =m_{\left(  3,1^{2}%
\right)  }+m_{\left(  2^{2},1\right)  }+3m_{\left(  2,1^{3}\right)
}+6m_{\left(  1^{5}\right)  }.
\end{equation}
The coefficient in Eq. (\ref{im4}) distinguishes the immanonns and Gentileonic
statistics from Gentile statistics.

\subsection{generalized quantum statistics dual to Gentile statistics:
GD-ons}

In this section, we propose a new kind of generalized quantum statistics dual
to Gentile statistics. For convenience, we denote the particle GD-ons. 
By dual we mean that the
relation between GD-ons and Gentile statistics is an analog to the
relation between para-Bose and para-Fermi statistics.

\subsubsection{The canonical partition function}

The canonical partition function for GD-ons labeled by the integer
partition $\left(  \lambda\right)$ is%

\begin{equation}
Z_{q}^{\ast}\left(  N,\beta\right)  \equiv\sum_{I=1,l_{\left(  \lambda\right)
_{I}}\leq q}^{P\left(  N\right)  }m_{\left(  \lambda\right)  _{I}}\left(
e^{-\beta\varepsilon_{1}},e^{-\beta\varepsilon_{2}},...\right)  . \label{gd1}%
\end{equation}
GD-ons recovers bosons when $q=N$.

For Gentile statistics, the canonical partition function is given in Eq.
(\ref{maximum10}). From Eqs.
(\ref{gd1}) and (\ref{maximum10}), one can see that the relation between
GD-ons and Gentile statistics is an analog to the relation between
para-Bose and para-Fermi statistics of which the canonical partition function
is given in Eqs. (\ref{para1}) and (\ref{para2}).

\subsubsection{Discussions on the permutation phase of the wave
function}

Expressing the canonical partition function, Eq. (\ref{gd1}), in terms of the
S-function gives
\begin{equation}
Z_{q}^{\ast}\left(  N,\beta\right)  =\sum_{J=1}^{P\left(  N\right)  }%
G^{I}\left(  q\right)  s_{\left(  \lambda\right)  _{I}}\left(  e^{-\beta
\varepsilon_{1}},e^{-\beta\varepsilon_{2}},...\right)  ,\label{gd2}%
\end{equation}
where the coefficient $G^{I}\left(  q\right)  $ is given as
\begin{equation}
\sum_{J=1}^{P\left(  N\right)  }\Gamma^{\prime J}\left(  q\right)  \left(
k_{J}^{I}\right)  ^{-1}=G^{I}\left(  q\right)
\end{equation}
with
\begin{equation}
\Gamma^{\prime J}\left(  q\right)  =\left\{
\begin{array}
[c]{c}%
1,\text{if }l_{\left(  \lambda\right)  _{I}}\leq q\\
0\text{, otherwise}%
\end{array}
\right.  .
\end{equation}

By calculating the coefficient $G^{I}\left(  q\right)  $, we find that
$G^{I}\left(  q\right)  $ is not nonnegative. For example, the explicit
expressions for the canonical partition function of GD-ons with $N=5$ are%

\begin{equation}
Z_{1}^{\ast}\left(  5,\beta\right)  =s_{\left(  5\right)  }-s_{\left(
4,1\right)  }+s_{\left(  3,1^{2}\right)  }-s_{\left(  2,1^{3}\right)
}+s_{\left(  1^{5}\right)  },
\end{equation}%
\begin{equation}
Z_{2}^{\ast}\left(  5,\beta\right)  =s_{\left(  5\right)  }-s_{\left(
4,1\right)  }+2s_{\left(  2,1^{3}\right)  }-3s_{\left(  1^{5}\right)  },
\end{equation}%
\begin{equation}
Z_{3}^{\ast}\left(  5,\beta\right)  =s_{\left(  5\right)  }-s_{\left(
2,1^{3}\right)  }+3s_{\left(  1^{5}\right)  },
\end{equation}%
\begin{equation}
Z_{4}^{\ast}\left(  5,\beta\right)  =s_{\left(  5\right)  }-s_{\left(
1^{5}\right)  },
\end{equation}%
\begin{equation}
Z_{5}^{\ast}\left(  5,\beta\right)  =s_{\left(  5\right)  }+s_{\left(
4,1\right)  }+s_{\left(  3,2\right)  }+s_{\left(  3,1^{2}\right)  }+s_{\left(
2^{2},1\right)  }+s_{\left(  2,1^{3}\right)  }+s_{\left(  1^{5}\right)  }.
\end{equation}
Thus, for GD-ons, using Theorem (\ref{theorem111}), the Hamiltonian is noninvariant under 
permutations. The permutation phase of GD-ons can not be constructed
in the scheme.

The Hilbert subspace describing GD-ons is
the space with nonzero coefficients.

\subsubsection{The maximum occupation number}

Since the first term in the canonical partition function of GD-ons, Eqs.
(\ref{gd1}), is $m_{\left(  N\right)  }\left(  e^{-\beta\varepsilon_{1}%
},e^{-\beta\varepsilon_{2}},...\right)  $, i.e.,%
\begin{equation}
Z_{q}^{\ast}\left(  N,\beta\right)  =m_{\left(  N\right)  }\left(
e^{-\beta\varepsilon_{1}},e^{-\beta\varepsilon_{2}},...\right)  +\sum
_{I=2,l_{\left(  \lambda\right)  _{I}}\leq q}^{P\left(  N\right)  }m_{\left(
\lambda\right)  _{I}}\left(  e^{-\beta\varepsilon_{1}},e^{-\beta
\varepsilon_{2}},...\right) ,
\end{equation}
the maximum occupation number, using Theorem (\ref{theorem2}), is $N$, 
that is, there is no limitation on the
maximum occupation number for GD-ons.

\subsection{generalized quantum statistics corresponding to the monomial symmetric
function: M-ons}

We have shown that the S-function is closely related to the permutation phase
and the monomial symmetric function is closely related to the maximum
occupation number. The generalized quantum statistics corresponding to the S-function
is given as the immannons and Gentileonic statistics. In this section, 
we propose a kind of generalized quantum statistics corresponding to the
monomial symmetric function $m_{\left(  \lambda\right)}$. For the sake
of convenience, we denote the particle M-ons.

\subsubsection{The canonical partition function}

The canonical partition function of M-ons labeled by the integer
partition $\left(  \lambda\right)  $ is
\begin{equation}
Z_{\left(  \lambda\right)  }^{M}\left(  \beta,N\right)  =m_{\left(
\lambda\right)  }\left(  e^{-\beta\varepsilon_{1}},e^{-\beta\varepsilon_{2}%
},...\right)  . \label{M1}%
\end{equation}
The M-ons recover fermions when $\left(  \lambda\right)  =\left(
1^{N}\right)  $.

\subsubsection{Discussions on the permutation phase of the wave
function}

Expressing the canonical partition function, Eq.(\ref{M1}), in terms of the S-function gives
\begin{align}
Z_{\left(  \lambda\right)  _{J}}^{M}\left(  \beta,N\right)   &  =\sum
_{K=1}^{P\left(  N\right)  }\sum_{I=1}^{P\left(  N\right)  }\delta
_{J,K}\left(  k_{K}^{I}\right)  ^{-1}s_{\left(  \lambda\right)  _{I}}\left(
e^{-\beta\varepsilon_{1}},e^{-\beta\varepsilon_{2}},...\right) \nonumber\\
&  =\sum_{I=1}^{P\left(  N\right)  }\left(  k_{J}^{I}\right)  ^{-1}s_{\left(
\lambda\right)  _{I}}\left(  e^{-\beta\varepsilon_{1}},e^{-\beta
\varepsilon_{2}},...\right).  \label{M111}%
\end{align}

The coefficient in Eq. (\ref{M111}) is not nonnegative. For example, the
explicit expressions of the canonical partition function for M-ons labeled by
$\left(  \lambda\right)  =\left(  5\right)  $, $\left(  4,1\right)  $, ...,
are%
\begin{equation}
Z_{\left(  5\right)  }^{M}\left(  \beta,5\right)  =s_{\left(  5\right)
}-s_{\left(  4,1\right)  }+s_{\left(  3,1^{2}\right)  }-s_{\left(
2,1^{3}\right)  }+s_{\left(  1^{5}\right)  },\label{M2}%
\end{equation}%
\begin{equation}
Z_{\left(  4,1\right)  }^{M}\left(  \beta,5\right)  =s_{\left(  4,1\right)
}-s_{\left(  3,2\right)  }-s_{\left(  3,1^{2}\right)  }+s_{\left(
2^{2},1\right)  }+s_{\left(  2,1^{3}\right)  }-2s_{\left(  1^{5}\right)  },
\end{equation}%
\begin{equation}
Z_{\left(  3,2\right)  }^{M}\left(  \beta,5\right)  =s_{\left(  3,2\right)
}-s_{\left(  3,1^{2}\right)  }-s_{\left(  2^{2},1\right)  }+2s_{\left(
2,1^{3}\right)  }-2s_{\left(  1^{5}\right)  },
\end{equation}%
\begin{equation}
Z_{\left(  3,1^{2}\right)  }^{M}\left(  \beta,5\right)  =s_{\left(
3,1^{2}\right)  }-s_{\left(  2^{2},1\right)  }-s_{\left(  2,1^{3}\right)
}+3s_{\left(  1^{5}\right)  },
\end{equation}%
\begin{equation}
Z_{\left(  2^{2},1\right)  }^{M}\left(  \beta,5\right)  =s_{\left(
2^{2},1\right)  }-2s_{\left(  2,1^{3}\right)  }+3s_{\left(  1^{5}\right)  },
\end{equation}%
\begin{equation}
Z_{\left(  2,1^{3}\right)  }^{M}\left(  \beta,5\right)  =s_{\left(
2,1^{3}\right)  }-4s_{\left(  1^{5}\right)  }.\label{M3}%
\end{equation}

For M-ons, using Theorem (\ref{theorem111}), the Hamiltonian is noninvariant under permutations.
The permutation phase for M-ons can not be constructed in the scheme.

The Hilbert subspace describing the M-ons is the space with nonzero
coefficients.

\subsubsection{The maximum occupation number}

One can find, by using Theorem (\ref{theorem2}), 
that the maximum occupation number is $q=\lambda_{1}$. 
In the case of M-ons, 
the only counted microstate is that there are
$\lambda_{1}$ particles occupying a quantum states, $\lambda_{2}$ particles
occupying another quantum state, and so on.

\subsection{generalized quantum statistics corresponding to the power sum symmetric
function: P-ons}

In this section, we propose a kind of generalized quantum statistics corresponding to the
power sum symmetric function $p_{\left(  \lambda\right)}$. For the sake
of convenience, we denote the particle P-ons.

\subsubsection{The canonical partition function}

The canonical partition function of P-ons labeled by the integer
partition $\left(  \lambda\right)$ is
\begin{equation}
Z_{\left(  \lambda\right)  }^{P}\left(  \beta,N\right)  =p_{\left(
\lambda\right)  }\left(  e^{-\beta\varepsilon_{1}},e^{-\beta\varepsilon_{2}%
},...\right)  .\label{Pons}
\end{equation}
Interestingly, one can verify that the P-ons recovers the distinguishable particle when $\left(
\lambda\right)  =\left(  1^{N}\right)  $.

\subsubsection{Discussions on the permutation phase of the wave
function}

Expressing the canonical partition function, Eq. (\ref{Pons}), in terms of the S-function gives%
\begin{align}
Z_{\left(  \lambda\right)  _{K}}^{P}\left(  \beta,N\right)   &  =p_{\left(
\lambda\right)  _{K}}\left(  e^{-\beta\varepsilon_{1}},e^{-\beta
\varepsilon_{2}},...\right)  \nonumber\\
&  =\sum_{I=1}^{P\left(  N\right)  }\chi_{K}^{I}s_{\left(  \lambda\right)
_{I}}\left(  x_{1},x_{2},...\right).  \label{P1}%
\end{align}
The coefficient in Eq. (\ref{P1}) is not nonnegative. For example, the
explicit expressions of the canonical partition function for P-ons labeled by
$\left(  \lambda\right)  =\left(  5\right)  $, $\left(  4,1\right)  $, ...,
are %

\begin{equation}
Z_{\left(  5\right)  }^{p}\left(  5,\beta\right)  =s_{\left(  5\right)
}-s_{\left(  4,1\right)  }+s_{\left(  3,1^{2}\right)  }-s_{\left(
2,1^{3}\right)  }+s_{\left(  1^{5}\right)  },\label{P2}%
\end{equation}%
\begin{equation}
Z_{\left(  4,1\right)  }^{p}\left(  5,\beta\right)  =s_{\left(  5\right)
}-s_{\left(  3,2\right)  }+s_{\left(  2^{2},1\right)  }-s_{\left(
1^{5}\right)  },
\end{equation}%
\begin{equation}
Z_{\left(  3,2\right)  }^{p}\left(  5,\beta\right)  =s_{\left(  5\right)
}-s_{\left(  4,1\right)  }+s_{\left(  3,2\right)  }-s_{\left(  2^{2},1\right)
}+s_{\left(  2,1^{3}\right)  }-s_{\left(  1^{5}\right)  },
\end{equation}%
\begin{equation}
Z_{\left(  3,1^{2}\right)  }^{p}\left(  5,\beta\right)  =s_{\left(  5\right)
}+s_{\left(  4,1\right)  }-s_{\left(  3,2\right)  }-s_{\left(  2^{2},1\right)
}+s_{\left(  2,1^{3}\right)  }+s_{\left(  1^{5}\right)  },
\end{equation}%
\begin{equation}
Z_{\left(  2^{2},1\right)  }^{p}\left(  5,\beta\right)  =s_{\left(  5\right)
}+s_{\left(  3,2\right)  }-2s_{\left(  3,1^{2}\right)  }+s_{\left(
2^{2},1\right)  }+s_{\left(  1^{5}\right)  },
\end{equation}%
\begin{equation}
Z_{\left(  2,1^{3}\right)  }^{p}\left(  5,\beta\right)  =s_{\left(  5\right)
}+s_{\left(  4,1\right)  }-s_{\left(  3,2\right)  }-s_{\left(  2^{2},1\right)
}+s_{\left(  2,1^{3}\right)  }+s_{\left(  1^{5}\right)  },
\end{equation}%
\begin{equation}
Z_{\left(  1^{5}\right)  }^{p}\left(  5,\beta\right)  =s_{\left(  5\right)
}+4s_{\left(  4,1\right)  }+5s_{\left(  3,2\right)  }+6s_{\left(
3,1^{2}\right)  }+5s_{\left(  2^{2},1\right)  }+4s_{\left(  2,1^{3}\right)
}+s_{\left(  1^{5}\right)  }.\label{P3}%
\end{equation}
Thus, the Hamiltonian, using Theorem (\ref{theorem111}), is noninvariant under permutations.
The permutation phase of P-ons can not be constructed in the scheme.

The Hilbert subspace describing the P-ons is the space with nonzero coefficients.

\subsubsection{The maximum occupation number}

Expressing the canonical partition function, Eq. (\ref{Pons}), in terms of the monomical
symmetric function gives%
\begin{align}
Z_{\left(  \lambda\right)  _{K}}^{P}\left(  \beta,N\right)   &  =p_{\left(
\lambda\right)  _{K}}\left(  e^{-\beta\varepsilon_{1}},e^{-\beta
\varepsilon_{2}},...\right)  \nonumber\\
&  =\sum_{j=1}^{P\left(  N\right)  }\left(  \sum_{I=1}^{P\left(  N\right)
}\chi_{K}^{I}k_{I}^{J}\right)  m_{\left(  \lambda\right)  _{J}}\left(
e^{-\beta\varepsilon_{1}},e^{-\beta\varepsilon_{2}},...\right)  . \label{P4}%
\end{align}
The maximum occupation number can be obtained from the non-zero coefficient in
Eq. (\ref{P4}). Based on the property of the simple characteristics of the
$S_{N}$ and the Kostka number $k_{I}^{J}$, Eq. (\ref{P4}) can be written as
\begin{equation}
Z_{\left(  \lambda\right)  _{K}}^{P}\left(  \beta,N\right)  =m_{\left(
N\right)  }\left(  e^{-\beta\varepsilon_{1}},e^{-\beta\varepsilon_{2}%
},...\right)  +\ldots.
\end{equation}
For example, the explicit expression of the canonical partition function for
P-ons labeled by $\left(  \lambda\right)  =\left(  5\right)  $, $\left(
4,1\right)  $, ..., are%
\begin{equation}
Z_{\left(  5\right)  }^{p}\left(  5,\beta\right)  =m_{\left(  5\right)  },
\label{P5}%
\end{equation}%
\begin{equation}
Z_{\left(  4,1\right)  }^{p}\left(  5,\beta\right)  =m_{\left(  5\right)
}+m_{\left(  4,1\right)  },%
\end{equation}%
\begin{equation}
Z_{\left(  3,2\right)  }^{p}\left(  5,\beta\right)  =m_{\left(  5\right)
}+m_{\left(  3,2\right)  },%
\end{equation}%
\begin{equation}
Z_{\left(  3,1^{2}\right)  }^{p}\left(  5,\beta\right)  =m_{\left(  5\right)
}+2m_{\left(  4,1\right)  }+m_{\left(  3,2\right)  }+2m_{\left(
2^{2},1\right)  },%
\end{equation}%
\begin{equation}
Z_{\left(  2^{2},1\right)  }^{p}\left(  5,\beta\right)  =m_{\left(  5\right)
}+m_{\left(  4,1\right)  }+2m_{\left(  3,2\right)  }+2s_{\left(
2^{2},1\right)  },%
\end{equation}%
\begin{equation}
Z_{\left(  2,1^{3}\right)  }^{p}\left(  5,\beta\right)  =m_{\left(  5\right)
}+2m_{\left(  4,1\right)  }+m_{\left(  3,2\right)  }+2m_{\left(
3,1^{2}\right)  },%
\end{equation}%
\begin{align}
Z_{\left(  1^{5}\right)  }^{p}\left(  5,\beta\right)   &  =m_{\left(
5\right)  }+5m_{\left(  4,1\right)  }+10m_{\left(  3,2\right)  }+20m_{\left(
3,1^{2}\right)  }\nonumber\\
&  +30m_{\left(  2^{2},1\right)  }+60m_{\left(  2,1^{3}\right)  }%
+120m_{\left(  1^{5}\right)  }. \label{P7}%
\end{align}
One can find, by using Theorem (\ref{theorem2}), that for P-ons, 
there is no limitation on the maximum occupation number.

\section{Conclusions}

In this paper, we give the unified
framework to describe various kinds of generalized quantum statistics. 
We provide a general formula of canonical partition functions of ideal $N$-particle gases obeying 
various kinds of generalized quantum statistics. We reveal the connection between the permutation 
phase of the wave function and the maximum occupation number, through constructing a method of 
obtaining the permutation phase and the maximum occupation number from the canonical partition function. 

We show that for generalized quantum statistics, the permutation 
phase of wave functions should be generalized to a matrix, rather than a number. 
The permutation phase of Bose or Fermi wave function, $1$ or $-1$, is regarded as $1\times1$ matrices, 
as special cases of generalized quantum statistics. We suggest a method to distinguish
generalized quantum statistics with the Hamiltonian noninvariant under permutations, 
in which the permutation phase can not be constructed in the scheme. 

It is commonly accepted that different kinds of statistics are distinguished by the maximum 
number. We show that the maximum occupation number is not sufficient to distinguish different kinds 
of generalized quantum statistics. 

As examples, we describe various kinds of statistics in a unified framework,
including parastatistics, Gentile statistics, Gentileonic
statistics, and the immannons. 
The canonical partition function, the maximum occupation number, 
the permutation phase of the wave function, 
and the Hilbert subspace are given. Especially, we propose three new kinds of generalized 
statistics which seem to be the missing pieces in the puzzle.

The mathematical basis of the scheme are the 
mathematical theory of the invariant matrix, the Schur-Weyl duality, the symmetric function, and 
the representation theory of the permutation group and the unitary group. The result in this paper,
together with our previous works
\cite{zhou2018canonical,zhou2018statistical}, 
builds a bridge between the quantum statistical mechanics and such mathematical theories.
This enables one to use the fruitful result in such theories to solve the
problem in quantum statistical mechanics.

\section{Acknowledgments}
We are very indebted to Dr G Zeitrauman for his encouragement. This work is supported
in part by NSF of China under Grant No. 11575125 and No. 11675119.

\section{Appendix}

\subsection{The Kostka number}

$N=3$\textit{.} The Kostka number $k_{K}^{J}$ is $k_{\left(
3\right)  }^{\left(  3\right)  }=k_{1}^{1}=1$, $k_{\left(  2,1\right)
}^{\left(  3\right)  }=k_{2}^{1}=0$, $k_{\left(  1^{3}\right)  }^{\left(
3\right)  }=k_{3}^{1}=0$, $k_{\left(  3\right)  }^{\left(  2,1\right)  }%
=k_{1}^{2}=1$, $k_{\left(  2,1\right)  }^{\left(  2,1\right)  }=k_{2}^{2}=1$,
$k_{\left(  1^{3}\right)  }^{\left(  2,1\right)  }=k_{3}^{2}=0$, $k_{\left(
3\right)  }^{\left(  1^{3}\right)  }=k_{1}^{3}=1$, $k_{\left(  2,1\right)
}^{\left(  1^{3}\right)  }=k_{2}^{3}=2$, and $k_{\left(  1^{3}\right)
}^{\left(  1^{3}\right)  }=k_{3}^{3}=1$
\cite{littlewood1977theory,macdonald1998symmetric}. For clarity , we
rewrite the Kostka number in a matrix form:%
\begin{equation}
k=\left(
\begin{array}
[c]{ccc}%
1 & 0 & 0\\
1 & 1 & 0\\
1 & 2 & 1
\end{array}
\right)  , \label{a1}%
\end{equation}
where we take the upper index of $k_{K}^{J}$ as a row index and the lower
index as a column index.

$N=4$. The Kostka number is
\cite{littlewood1977theory,macdonald1998symmetric}%

\begin{equation}
k=\left(
\begin{array}
[c]{ccccc}%
1 & 0 & 0 & 0 & 0\\
1 & 1 & 0 & 0 & 0\\
1 & 1 & 1 & 0 & 0\\
1 & 2 & 1 & 1 & 0\\
1 & 3 & 2 & 3 & 1
\end{array}
\right)  . \label{a5}%
\end{equation}

$N=5$. The Kostka number is
\cite{littlewood1977theory,macdonald1998symmetric}%
\begin{equation}
k=\left(
\begin{array}
[c]{ccccccc}%
1 & 0 & 0 & 0 & 0 & 0 & 0\\
1 & 1 & 0 & 0 & 0 & 0 & 0\\
1 & 1 & 1 & 0 & 0 & 0 & 0\\
1 & 2 & 1 & 1 & 0 & 0 & 0\\
1 & 2 & 2 & 1 & 1 & 0 & 0\\
1 & 3 & 3 & 3 & 2 & 1 & 0\\
1 & 4 & 5 & 6 & 5 & 4 & 1
\end{array}
\right)  . \label{a10}%
\end{equation}

\subsection{The simple characteristic of $S_{N}$}

$N=3$\textit{.} The simple characteristic $\chi_{K}^{J}$ is
$\chi_{\left(  3\right)  }^{\left(  3\right)  }=\chi_{1}^{1}=1$,
$\chi_{\left(  2,1\right)  }^{\left(  3\right)  }=\chi_{2}^{1}=1$,
$\chi_{\left(  1^{3}\right)  }^{\left(  3\right)  }=\chi_{3}^{1}=1$,
$\chi_{\left(  3\right)  }^{\left(  2,1\right)  }=\chi_{1}^{2}=2$,
$\chi_{\left(  2,1\right)  }^{\left(  2,1\right)  }=\chi_{2}^{2}=0$,
$\chi_{\left(  1^{3}\right)  }^{\left(  2,1\right)  }=\chi_{3}^{2}=-1$,
$\chi_{\left(  3\right)  }^{\left(  1^{3}\right)  }=\chi_{1}^{3}=1$,
$\chi_{\left(  2,1\right)  }^{\left(  1^{3}\right)  }=\chi_{2}^{3}=-1$, and
$\chi_{\left(  1^{3}\right)  }^{\left(  1^{3}\right)  }=\chi_{3}^{3}=1$
\cite{hamermesh1962group}. For clarity , we can rewrite the simple
characteristic in a matrix form:%

\begin{equation}
\chi=\left(
\begin{array}
[c]{ccc}%
1 & 1 & 1\\
2 & 0 & -1\\
1 & -1 & 1
\end{array}
\right)  . \label{2c1}%
\end{equation}

$N=4$. The $\chi$ is \cite{hamermesh1962group}
\begin{equation}
\chi=\left(
\begin{array}
[c]{ccccc}%
1 & 1 & 1 & 1 & 1\\
3 & 0 & -1 & 1 & -1\\
2 & -1 & 2 & 0 & 0\\
3 & 0 & -1 & -1 & 1\\
1 & 1 & 1 & -1 & -1
\end{array}
\right)  . \label{2c2}%
\end{equation}

$N=5$. The $\chi$ is \cite{hamermesh1962group}%
\begin{equation}
\chi=\left(
\begin{array}
[c]{ccccccc}%
1 & 1 & 1 & 1 & 1 & 1 & 1\\
-1 & 0 & -1 & 1 & 0 & 1 & 4\\
0 & -1 & 1 & -1 & 1 & -1 & 5\\
1 & 0 & 0 & 0 & -2 & 0 & 6\\
0 & 1 & -1 & -1 & 1 & -1 & 5\\
-1 & 0 & 1 & 1 & 0 & 1 & 4\\
1 & -1 & -1 & 1 & 1 & 1 & 1
\end{array}
\right)  . \label{2c3}%
\end{equation}

\subsection{The calculation of the maximum occupation number of the
parastatistics: an example of calculation the coefficient in the canonical partition function}

The canonical partition function of an $N$-particle gas under various kinds of generalized quantum statistics
can be expressed in terms of symmetric functions such as the S-function, the
monomical symmetric function, and so on. The procedure of 
the calculation of the coefficient in the canonical partition function
involves the transformation of the coefficient by the
matrix of Kostka number or the simple characteristic.
In this appendix, we
give details of the calculation of the coefficient for parastatistics
as an example.

$N=3$, $q=1$, for para-Bose statistics, the coefficient in Eq. (\ref{para3}) is%

\begin{equation}
\left(
\begin{array}
[c]{ccc}%
1 & 0 & 0\\
1 & 1 & 0\\
1 & 2 & 1
\end{array}
\right)  \left(
\begin{array}
[c]{c}%
1\\
0\\
0
\end{array}
\right)  =\left(
\begin{array}
[c]{c}%
1\\
1\\
1
\end{array}
\right)  ,
\end{equation}
that is,
\begin{align*}
Z_{1}^{PB}\left(  \beta,3\right)   &  =m_{\left(  3\right)  }  +m_{\left(
2,1\right)  } +m_{\left(  1^{3}\right)  };
\end{align*}
for para-Fermi statistics, the coefficient in Eq. (\ref{para4}) is%
\begin{equation}
\left(
\begin{array}
[c]{ccc}%
1 & 0 & 0\\
1 & 1 & 0\\
1 & 2 & 1
\end{array}
\right)  \left(
\begin{array}
[c]{c}%
0\\
0\\
1
\end{array}
\right)  =\left(
\begin{array}
[c]{c}%
0\\
0\\
1
\end{array}
\right)  ,
\end{equation}%
that is,
\[
Z_{1}^{PB}\left(  \beta,3\right)  =m_{\left(  1^{3}\right)  }.
\]

In the following, the corresponding
expression of the canonical partition 
function is already given in Eqs. (\ref{para001})-(\ref{para02}), thus, we
only give the coefficient. $q=2$, for para-Bose statistics, the coefficient in Eq. (\ref{para3}) is%
\begin{equation}
\left(
\begin{array}
[c]{ccc}%
1 & 0 & 0\\
1 & 1 & 0\\
1 & 2 & 1
\end{array}
\right)  \left(
\begin{array}
[c]{c}%
1\\
1\\
0
\end{array}
\right)  =\left(
\begin{array}
[c]{c}%
1\\
2\\
3
\end{array}
\right)  ;
\end{equation} for para-Fermi statistics, the coefficient in Eq. (\ref{para4}) is%
\begin{equation}
\left(
\begin{array}
[c]{ccc}%
1 & 0 & 0\\
1 & 1 & 0\\
1 & 2 & 1
\end{array}
\right)  \left(
\begin{array}
[c]{c}%
0\\
1\\
1
\end{array}
\right)  =\left(
\begin{array}
[c]{c}%
0\\
1\\
3
\end{array}
\right)  .
\end{equation}$q=3$, for para-Bose statistics, the coefficient in Eq. (\ref{para3}) is%
\begin{equation}
\left(
\begin{array}
[c]{ccc}%
1 & 0 & 0\\
1 & 1 & 0\\
1 & 2 & 1
\end{array}
\right)  \left(
\begin{array}
[c]{c}%
1\\
1\\
1
\end{array}
\right)  =\left(
\begin{array}
[c]{c}%
1\\
2\\
3
\end{array}
\right)  ;
\end{equation} for para-Fermi statistics, the coefficient in Eq. (\ref{para4}) is%
\begin{equation}
\left(
\begin{array}
[c]{ccc}%
1 & 0 & 0\\
1 & 1 & 0\\
1 & 2 & 1
\end{array}
\right)  \left(
\begin{array}
[c]{c}%
1\\
1\\
1
\end{array}
\right)  =\left(
\begin{array}
[c]{c}%
1\\
2\\
3
\end{array}
\right)  .
\end{equation}%

$N=4$, $q=1$, for para-Bose statistics, the coefficient in Eq. (\ref{para3}) is%

\begin{equation}
\left(
\begin{array}
[c]{ccccc}%
1 & 0 & 0 & 0 & 0\\
1 & 1 & 0 & 0 & 0\\
1 & 1 & 1 & 0 & 0\\
1 & 2 & 1 & 1 & 0\\
1 & 3 & 2 & 3 & 1
\end{array}
\right)  \left(
\begin{array}
[c]{c}%
1\\
0\\
0\\
0\\
0
\end{array}
\right)  =\left(
\begin{array}
[c]{c}%
1\\
1\\
1\\
1\\
1
\end{array}
\right)  ;
\end{equation} for para-Fermi statistics, the coefficient in Eq. (\ref{para4}) is%
\begin{equation}
\left(
\begin{array}
[c]{ccccc}%
1 & 0 & 0 & 0 & 0\\
1 & 1 & 0 & 0 & 0\\
1 & 1 & 1 & 0 & 0\\
1 & 2 & 1 & 1 & 0\\
1 & 3 & 2 & 3 & 1
\end{array}
\right)  \left(
\begin{array}
[c]{c}%
0\\
0\\
0\\
0\\
1
\end{array}
\right)  =\left(
\begin{array}
[c]{c}%
0\\
0\\
0\\
0\\
1
\end{array}
\right)  .
\end{equation} $q=2$, for para-Bose statistics, the coefficient in Eq. (\ref{para3}) is%

\begin{equation}
\left(
\begin{array}
[c]{ccccc}%
1 & 0 & 0 & 0 & 0\\
1 & 1 & 0 & 0 & 0\\
1 & 1 & 1 & 0 & 0\\
1 & 2 & 1 & 1 & 0\\
1 & 3 & 2 & 3 & 1
\end{array}
\right)  \left(
\begin{array}
[c]{c}%
1\\
1\\
1\\
0\\
0
\end{array}
\right)  =\left(
\begin{array}
[c]{c}%
1\\
2\\
3\\
4\\
6
\end{array}
\right)  ;
\end{equation} for para-Fermi statistics, the coefficient in Eq. (\ref{para4}) is%
\begin{equation}
\left(
\begin{array}
[c]{ccccc}%
1 & 0 & 0 & 0 & 0\\
1 & 1 & 0 & 0 & 0\\
1 & 1 & 1 & 0 & 0\\
1 & 2 & 1 & 1 & 0\\
1 & 3 & 2 & 3 & 1
\end{array}
\right)  \left(
\begin{array}
[c]{c}%
0\\
0\\
1\\
1\\
1
\end{array}
\right)  =\left(
\begin{array}
[c]{c}%
0\\
0\\
1\\
2\\
6
\end{array}
\right)  .
\end{equation} $q=3$, for para-Bose statistics, the coefficient in Eq. (\ref{para3}) is%

\begin{equation}
\left(
\begin{array}
[c]{ccccc}%
1 & 0 & 0 & 0 & 0\\
1 & 1 & 0 & 0 & 0\\
1 & 1 & 1 & 0 & 0\\
1 & 2 & 1 & 1 & 0\\
1 & 3 & 2 & 3 & 1
\end{array}
\right)  \left(
\begin{array}
[c]{c}%
1\\
1\\
1\\
1\\
0
\end{array}
\right)  =\left(
\begin{array}
[c]{c}%
1\\
2\\
3\\
4\\
9
\end{array}
\right)  ;
\end{equation} for para-Fermi statistics, the coefficient in Eq. (\ref{para4}) is%
\begin{equation}
\left(
\begin{array}
[c]{ccccc}%
1 & 0 & 0 & 0 & 0\\
1 & 1 & 0 & 0 & 0\\
1 & 1 & 1 & 0 & 0\\
1 & 2 & 1 & 1 & 0\\
1 & 3 & 2 & 3 & 1
\end{array}
\right)  \left(
\begin{array}
[c]{c}%
0\\
1\\
1\\
1\\
1
\end{array}
\right)  =\left(
\begin{array}
[c]{c}%
0\\
1\\
2\\
4\\
9
\end{array}
\right)  .
\end{equation} $q=4$, for para-Bose statistics, the coefficient in Eq. (\ref{para3}) is%
\begin{equation}
\left(
\begin{array}
[c]{ccccc}%
1 & 0 & 0 & 0 & 0\\
1 & 1 & 0 & 0 & 0\\
1 & 1 & 1 & 0 & 0\\
1 & 2 & 1 & 1 & 0\\
1 & 3 & 2 & 3 & 1
\end{array}
\right)  \left(
\begin{array}
[c]{c}%
1\\
1\\
1\\
1\\
1
\end{array}
\right)  =\left(
\begin{array}
[c]{c}%
1\\
2\\
3\\
5\\
10
\end{array}
\right)  ;
\end{equation} for para-Fermi statistics, the coefficient in Eq. (\ref{para4}) is%
\begin{equation}
\left(
\begin{array}
[c]{ccccc}%
1 & 0 & 0 & 0 & 0\\
1 & 1 & 0 & 0 & 0\\
1 & 1 & 1 & 0 & 0\\
1 & 2 & 1 & 1 & 0\\
1 & 3 & 2 & 3 & 1
\end{array}
\right)  \left(
\begin{array}
[c]{c}%
1\\
1\\
1\\
1\\
1
\end{array}
\right)  =\left(
\begin{array}
[c]{c}%
1\\
2\\
3\\
5\\
10
\end{array}
\right)  .
\end{equation}%

$N=5$, $q=1$, for para-Bose statistics, the coefficient in Eq. (\ref{para3}) is%

\[
\left(
\begin{array}
[c]{ccccccc}%
1 & 0 & 0 & 0 & 0 & 0 & 0\\
1 & 1 & 0 & 0 & 0 & 0 & 0\\
1 & 1 & 1 & 0 & 0 & 0 & 0\\
1 & 2 & 1 & 1 & 0 & 0 & 0\\
1 & 2 & 2 & 1 & 1 & 0 & 0\\
1 & 3 & 3 & 3 & 2 & 1 & 0\\
1 & 4 & 5 & 6 & 5 & 4 & 1
\end{array}
\right)  \left(
\begin{array}
[c]{c}%
1\\
0\\
0\\
0\\
0\\
0\\
0
\end{array}
\right)  =\left(
\begin{array}
[c]{c}%
1\\
1\\
1\\
1\\
1\\
1\\
1
\end{array}
\right);
\] for para-Fermi statistics, the coefficient in Eq. (\ref{para4}) is%
\[
\left(
\begin{array}
[c]{ccccccc}%
1 & 0 & 0 & 0 & 0 & 0 & 0\\
1 & 1 & 0 & 0 & 0 & 0 & 0\\
1 & 1 & 1 & 0 & 0 & 0 & 0\\
1 & 2 & 1 & 1 & 0 & 0 & 0\\
1 & 2 & 2 & 1 & 1 & 0 & 0\\
1 & 3 & 3 & 3 & 2 & 1 & 0\\
1 & 4 & 5 & 6 & 5 & 4 & 1
\end{array}
\right)  \left(
\begin{array}
[c]{c}%
0\\
0\\
0\\
0\\
0\\
0\\
1
\end{array}
\right)  =\left(
\begin{array}
[c]{c}%
0\\
0\\
0\\
0\\
0\\
0\\
1
\end{array}
\right).
\] $q=2$, for para-Bose statistics, the coefficient in Eq. (\ref{para3}) is%

\[
\left(
\begin{array}
[c]{ccccccc}%
1 & 0 & 0 & 0 & 0 & 0 & 0\\
1 & 1 & 0 & 0 & 0 & 0 & 0\\
1 & 1 & 1 & 0 & 0 & 0 & 0\\
1 & 2 & 1 & 1 & 0 & 0 & 0\\
1 & 2 & 2 & 1 & 1 & 0 & 0\\
1 & 3 & 3 & 3 & 2 & 1 & 0\\
1 & 4 & 5 & 6 & 5 & 4 & 1
\end{array}
\right)  \left(
\begin{array}
[c]{c}%
1\\
1\\
1\\
0\\
0\\
0\\
0
\end{array}
\right)  =\left(
\begin{array}
[c]{c}%
1\\
2\\
3\\
4\\
5\\
7\\
10
\end{array}
\right);
\] for para-Fermi statistics, the coefficient in Eq. (\ref{para4}) is%
\[
\left(
\begin{array}
[c]{ccccccc}%
1 & 0 & 0 & 0 & 0 & 0 & 0\\
1 & 1 & 0 & 0 & 0 & 0 & 0\\
1 & 1 & 1 & 0 & 0 & 0 & 0\\
1 & 2 & 1 & 1 & 0 & 0 & 0\\
1 & 2 & 2 & 1 & 1 & 0 & 0\\
1 & 3 & 3 & 3 & 2 & 1 & 0\\
1 & 4 & 5 & 6 & 5 & 4 & 1
\end{array}
\right)  \left(
\begin{array}
[c]{c}%
0\\
0\\
0\\
0\\
1\\
1\\
1
\end{array}
\right)  =\left(
\begin{array}
[c]{c}%
0\\
0\\
0\\
0\\
1\\
3\\
10
\end{array}
\right).
\] $q=3$, for para-Bose statistics, the coefficient in Eq. (\ref{para3}) is%

\[
\left(
\begin{array}
[c]{ccccccc}%
1 & 0 & 0 & 0 & 0 & 0 & 0\\
1 & 1 & 0 & 0 & 0 & 0 & 0\\
1 & 1 & 1 & 0 & 0 & 0 & 0\\
1 & 2 & 1 & 1 & 0 & 0 & 0\\
1 & 2 & 2 & 1 & 1 & 0 & 0\\
1 & 3 & 3 & 3 & 2 & 1 & 0\\
1 & 4 & 5 & 6 & 5 & 4 & 1
\end{array}
\right)  \left(
\begin{array}
[c]{c}%
1\\
1\\
1\\
1\\
1\\
0\\
0
\end{array}
\right)  =\left(
\begin{array}
[c]{c}%
1\\
2\\
3\\
5\\
7\\
12\\
21
\end{array}
\right);
\] for para-Fermi statistics, the coefficient in Eq. (\ref{para4}) is%
\[
\left(
\begin{array}
[c]{ccccccc}%
1 & 0 & 0 & 0 & 0 & 0 & 0\\
1 & 1 & 0 & 0 & 0 & 0 & 0\\
1 & 1 & 1 & 0 & 0 & 0 & 0\\
1 & 2 & 1 & 1 & 0 & 0 & 0\\
1 & 2 & 2 & 1 & 1 & 0 & 0\\
1 & 3 & 3 & 3 & 2 & 1 & 0\\
1 & 4 & 5 & 6 & 5 & 4 & 1
\end{array}
\right)  \left(
\begin{array}
[c]{c}%
0\\
0\\
1\\
1\\
1\\
1\\
1
\end{array}
\right)  =\left(
\begin{array}
[c]{c}%
0\\
0\\
1\\
2\\
4\\
9\\
21
\end{array}
\right).
\] $q=4$, for para-Bose statistics, the coefficient in Eq. (\ref{para3}) is%

\[
\left(
\begin{array}
[c]{ccccccc}%
1 & 0 & 0 & 0 & 0 & 0 & 0\\
1 & 1 & 0 & 0 & 0 & 0 & 0\\
1 & 1 & 1 & 0 & 0 & 0 & 0\\
1 & 2 & 1 & 1 & 0 & 0 & 0\\
1 & 2 & 2 & 1 & 1 & 0 & 0\\
1 & 3 & 3 & 3 & 2 & 1 & 0\\
1 & 4 & 5 & 6 & 5 & 4 & 1
\end{array}
\right)  \left(
\begin{array}
[c]{c}%
1\\
1\\
1\\
1\\
1\\
1\\
0
\end{array}
\right)  =\left(
\begin{array}
[c]{c}%
1\\
2\\
3\\
5\\
7\\
13\\
25
\end{array}
\right);
\] for para-Fermi statistics, the coefficient in Eq. (\ref{para4}) is%
\[
\left(
\begin{array}
[c]{ccccccc}%
1 & 0 & 0 & 0 & 0 & 0 & 0\\
1 & 1 & 0 & 0 & 0 & 0 & 0\\
1 & 1 & 1 & 0 & 0 & 0 & 0\\
1 & 2 & 1 & 1 & 0 & 0 & 0\\
1 & 2 & 2 & 1 & 1 & 0 & 0\\
1 & 3 & 3 & 3 & 2 & 1 & 0\\
1 & 4 & 5 & 6 & 5 & 4 & 1
\end{array}
\right)  \left(
\begin{array}
[c]{c}%
0\\
1\\
1\\
1\\
1\\
1\\
1
\end{array}
\right)  =\left(
\begin{array}
[c]{c}%
0\\
1\\
2\\
4\\
6\\
12\\
25
\end{array}
\right).
\] $q=5$, for para-Bose statistics, the coefficient in Eq. (\ref{para3}) is%

\[
\left(
\begin{array}
[c]{ccccccc}%
1 & 0 & 0 & 0 & 0 & 0 & 0\\
1 & 1 & 0 & 0 & 0 & 0 & 0\\
1 & 1 & 1 & 0 & 0 & 0 & 0\\
1 & 2 & 1 & 1 & 0 & 0 & 0\\
1 & 2 & 2 & 1 & 1 & 0 & 0\\
1 & 3 & 3 & 3 & 2 & 1 & 0\\
1 & 4 & 5 & 6 & 5 & 4 & 1
\end{array}
\right)  \left(
\begin{array}
[c]{c}%
1\\
1\\
1\\
1\\
1\\
1\\
1
\end{array}
\right)  =\left(
\begin{array}
[c]{c}%
1\\
2\\
3\\
5\\
7\\
13\\
26
\end{array}
\right);
\] for para-Fermi statistics, the coefficient in Eq. (\ref{para4}) is%
\[
\left(
\begin{array}
[c]{ccccccc}%
1 & 0 & 0 & 0 & 0 & 0 & 0\\
1 & 1 & 0 & 0 & 0 & 0 & 0\\
1 & 1 & 1 & 0 & 0 & 0 & 0\\
1 & 2 & 1 & 1 & 0 & 0 & 0\\
1 & 2 & 2 & 1 & 1 & 0 & 0\\
1 & 3 & 3 & 3 & 2 & 1 & 0\\
1 & 4 & 5 & 6 & 5 & 4 & 1
\end{array}
\right)  \left(
\begin{array}
[c]{c}%
1\\
1\\
1\\
1\\
1\\
1\\
1
\end{array}
\right)  =\left(
\begin{array}
[c]{c}%
1\\
2\\
3\\
5\\
7\\
13\\
26
\end{array}
\right).
\]%

%\appendix
%\section{Some title}
%Please always give a title also for appendices.

\acknowledgments

We are very indebted to Dr G. Zeitrauman for his encouragement. This work is supported in part by NSF of China under Grant No.
11575125 and No.  11675119.

%\acknowledgments%崑仍
%%%%%%%%%%屎猟潤崩

%\begin{thebibliography}{99}

%\end{thebibliography}\endgroup

%\bibitem{a}
%Author, \emph{Title}, \emph{J. Abbrev.} {\bf vol} (year) pg.

%\bibitem{b}
%Author, \emph{Title},
%arxiv:1234.5678.

%\bibitem{c}
%Author, \emph{Title},
%Publisher (year).

% Please avoid comments such as "For a review'', "For some examples",
% "and references therein" or move them in the text. In general,
% please leave only references in the bibliography and move all
% accessory text in footnotes.

% Also, please have only one work for each \bibitem.
\providecommand{\href}[2]{#2}\begingroup\raggedright\endgroup

%\end{thebibliography}

%\bibliographystyle{JHEP}
%\bibliography{refs}% Produces the bibliography via BibTeX.

\end{document}